\newcommand{\E}{\mathrm{E}}
\DeclareMathOperator*{\argmax}{arg\,max}
\DeclareMathOperator{\Tr}{Tr}
\DeclarePairedDelimiter{\ceil}{\lceil}{\rceil}
\theoremstyle{definition}
 \theoremstyle{plain}
 \newtheorem{theorem}{Theorem}[section]
\newtheorem{lemma}[theorem]{Lemma}
\theoremstyle{remark}
\newtheorem{remark}{Remark}[theorem]
\newtheorem*{remark*}{Remark}
\def\BibTeX{{\rm B\kern-.05em{\sc i\kern-.025em b}\kern-.08em
    T\kern-.1667em\lower.7ex\hbox{E}\kern-.125emX}}
\newcommand{\overbar}[1]{\mkern 1.5mu\overline{\mkern-1.5mu#1\mkern-1.5mu}\mkern 1.5mu}
\begin{document}

\title{Variational Tracking and Redetection for Closely-spaced Objects in Heavy Clutter}

\author{Runze~Gan,~\IEEEmembership{Student Member,~IEEE,}
Qing~Li,~\IEEEmembership{Student Member,~IEEE,} 
Simon~J.~Godsill~\IEEEmembership{Fellow,~IEEE}
\thanks{This research is sponsored by the US Army Research Laboratory and the UK MOD University Defence Research Collaboration (UDRC) in Signal Processing under the SIGNeTS project. It is accomplished under Cooperative Agreement Number W911NF-20-2-0225. The views and conclusions contained in this document are of the authors and should not be interpreted as representing the official policies, either expressed or implied, of
the Army Research Laboratory, the MOD, the U.S. Government or the U.K. Government. The U.S. Government and U.K. Government are authorised to reproduce and distribute reprints for Government purposes notwithstanding any copyright notation herein.}
\thanks{ R. Gan, Q. Li, and S. J. Godsill are with the Engineering Department, University of Cambridge, Cambridge CB2 1PZ, U.K. e-mails: \{rg605, ql289, sjg30\}@cam.ac.uk}\vspace{-0.5em}}

\maketitle
\begingroup\renewcommand\thefootnote{\textsection}

\begin{abstract}
The non-homogeneous Poisson process (NHPP) is a widely used measurement model that allows for an object to generate multiple measurements over time. However, it can be difficult to track multiple objects efficiently and reliably under this NHPP model in scenarios with a high density of closely-spaced objects and heavy clutter. Therefore, based on the general coordinate ascent variational filtering framework, this paper presents a variational Bayes association-based NHPP tracker (VB-AbNHPP) that can efficiently track a fixed and known number of objects, perform
data association, and learn object and clutter rates with a parallelisable implementation. %\textcolor{red}{The methods are designed for a fixed and known number of objects.}
Moreover, a variational localisation strategy is proposed, which enables rapid rediscovery of missed objects from a large surveillance area under extremely heavy clutter. This strategy is integrated into the VB-AbNHPP tracker, resulting in a robust methodology that can automatically detect and recover from track loss. 
This tracker demonstrates improved tracking performance compared with existing trackers in challenging scenarios, in terms of both accuracy and efficiency.
\end{abstract}

\begin{IEEEkeywords}
non-homogeneous Poisson process, data association, variational Bayes, coordinate ascent variational inference, multiple object tracking, track loss recovery
\end{IEEEkeywords}
\section{Introduction} \label{sec:intro}
In real-world tracking applications, scenarios are typically complicated and diverse, often presenting challenging situations such as high clutter density, unknown measurement rates, a large number of closely-spaced objects, and occlusion. 
When tracking objects in such adverse conditions, existing trackers may experience difficulties maintaining a high level of accuracy and efficiency simultaneously, ranging from classical methods such as the joint probabilistic data association (JPDA) filter and multiple hypothesis tracker (MHT)\cite{bar1995multitarget}, to the most recent techniques including random finite set (RFS) trackers\cite{granstrom2019poisson,granstrom2017likelihood} and message passing approaches\cite{meyer2018message,meyer2021scalable}.

One major bottleneck is the rapidly growing computational complexity of the data association with the number of objects and measurements. The non-homogeneous Poisson process (NHPP) measurement model  \cite{gilholm2005poisson}, which provides an exact association-free measurement likelihood, has recently drawn much attention. 
The original NHPP tracker \cite{gilholm2005poisson}, however, suffers from the `curse of dimensionality' due to its particle filter implementation. 
%Subsequently, the same NHPP model was employed in a JPDA framework \cite{yang2018linear} to simplify the computation of marginal association probabilities, whereas the rough approximations greatly impair its tracking accuracy is compromised due to rough approximations. 
The same NHPP model was later employed in a JPDA framework \cite{yang2018linear} to simplify marginal association probabilities; however, it used crude approximations that severely impaired the tracking accuracy (see \cite{gan2022variational}).
Subsequently, an association-based NHPP (AbNHPP) measurement model was presented in \cite{li2022scalable,gan2022variational}, which reintroduces associations into the NHPP model to enable an efficient parallel sampling or a tractable structure. A sequential Markov chain Monte Carlo (SMCMC) implementation was designed in \cite{li2022scalable}, and for linear Gaussian models, a fast Rao-Blackwellised online Gibbs scheme (here referred to as the G-AbNHPP tracker) was developed in \cite{li2023adaptive} with an enhanced efficiency compared to \cite{li2022scalable}. Theoretically, these SMCMC methods can converge to optimal Bayesian filters with a large enough sample size, while in practice it can be computationally intensive when object and measurement numbers are large. 
Therefore, we proposed a high-performance AbNHPP tracker based on an efficient variational inference implementation, presented in the previous conference paper\cite{gan2022variational}. The devised VB-AbNHPP tracker is shown to achieve comparable tracking accuracy with G-AbNHPP tracker\cite{li2022scalable} while benefiting from much faster operation. 

However, in hostile environments such as heavy rain or stormy sea conditions, the clutter number in a single time step can be hundreds of times greater than the object's measurements number (see e.g., Fig. \ref{fig:allobs} and Fig. \ref{fig: highly heavy clutter}(c)). In these heavy clutter scenarios, all the above-mentioned NHPP/AbNHPP trackers are either computationally impractical\cite{li2022scalable,li2023adaptive,gilholm2005poisson} or struggle to maintain the tracking accuracy\cite{gilholm2005poisson,yang2018linear,gan2022variational}. 
Therefore, this paper proposes a novel variational localisation strategy that allows a fast redetection of missed objects from large surveillance area. Embedded with this redetection technique, our upgraded VB-AbNHPP tracker can automatically detect and recover from the track loss thus providing a robust tracking performance even in extreme cases of heavy clutter with parallelisable implementation and high efficiency.

\vspace{-1em}
\subsection{Related Work}
The RFS-based trackers, including the Probabilistic Hypothesis Density (PHD)\cite{mahler2003multitarget} and the Poisson Multiple Bernoulli Mixture (PMBM) filter\cite{granstrom2019poisson}, are amongst the most recent trackers, which provide a compact solution for joint detection and tracking.
However, these RFS-based methods often require heuristics and approximations to be feasible in real-world tracking tasks. For instance, the PHD filter can avoid the data association update step, yet it has no closed form solutions and relies on an approximated Gaussian mixture implementation for practical use. Another example is the PMBM filters, which have shown superiority over all other RFS-based trackers\cite{granstrom2019poisson}. 
Notwithstanding, all the existing PMBM filters inherit the heuristic pruning, gating and hypothesis management of the MHT framework to limit the exponential increase in the global hypotheses number\cite{granstrom2019poisson,granstrom2017likelihood}. Additionally, the most popular implementation in\cite{granstrom2019poisson} involves further approximation errors: first, it uses the k-best Murty's algorithm to truncate the number of global hypotheses; a pre-processing measurement clustering step is employed for cases that objects generate multiple measurements, e.g., under the NHPP measurement model. Although a sampling-based PMBM filter\cite{granstrom2017likelihood} was devised to reduce the measurement clustering error, the designed MCMC methods are not rigorous and can be computationally intensive. On top of that, it only keeps a truncated subset of data associations and may experience a sharp decline in tracking accuracy under the high data association uncertainty with a large number of objects and clutter. 

Alternatively, approximate inference methods\cite{bishop:2006:PRML}, such as variational inference\cite{turner2014complete,lau2016structured,gan2022variational} and (loopy) belief propagation\cite{meyer2018message,meyer2021scalable} have been actively investigated due to their promising tracking accuracy and computational efficiency. For example, the belief propagation was used to design a fast and scalable data association framework that can maintain a closed-form belief update under a point object measurement model, with guaranteed convergence despite of cycles in the graph \cite{meyer2018message}.
% Although these belief propagation methods do not guarantee convergence due to the cycles in the factor graph, they often yield accurate tracking results in practical applications \cite{meyer2018message}. 
% This approach often yields accurate tracking results in practical applications \cite{meyer2018message} despite the lack of convergence guarantee due to the cycles in the factor graph. However, under the NHPP measurement model, this belief propagation method requires a particle filter implementation \cite{meyer2021scalable} when dealing with the tracking problem
Although they often yield accurate tracking results in practical applications, under the NHPP measurement model where cycles also exist, the belief propagation method does not guarantee convergence. In addition, it requires a particle filter implementation \cite{meyer2021scalable}; consequently, it loses some computational advantage and proved to be slow in challenging scenes such as in \cite{li2023adaptive}. 
In contrast, coordinate ascent variational inference (CAVI)\cite{blei2017variational,bishop:2006:PRML}, also known as mean field variational inference, is another popular approximate inference method whose convergence can be guaranteed and easily monitored. Compared to MCMC sampling methods, CAVI can typically achieve a comparable performance but with a more efficient implementation, as demonstrated later in this paper. Several trackers have employed CAVI under a point object measurement model, e.g., \cite{turner2014complete,lau2016structured}. These methods apply loopy belief propagation in certain steps, hence introducing an additional layer of approximation compared with the proposed methods in this paper, especially when considered within the context of the NHPP observation model in which cycles are present.  In contrast our methods are purely based on the variational approximation and hence their convergence to a fixed point solution is guaranteed, although we do not provide theoretical results  on how close the vartiational approximation is to the true posterior distribution, as is standard in CAVI methods. Our current methods are for tracking a fixed and known number of objects under the NHPP measurement model, with convergence to a fixed point solution easily monitored.

%whereas the convergence is no longer guaranteed 
% 
%since these trackers have to implement loopy belief propagation to approximate a certain step of variational update. In this paper, we will show that the standard CAVI is sufficient to yield a tractable update \textcolor{red}{with no further approximations beyond mean-field assumption}, 

Although track loss happens frequently in existing multi-object trackers, few papers have investigated a solution for retrieving the lost objects in fixed number object tracking. Several primitive track management strategies, including the M/N logic-based method and the sequential probability ratio test (SPRT) \cite{bar1995multitarget}, have been proposed for track initiation and termination, e.g., in JPDA and MHT. These initiation methods may be used to retrieve lost objects if considering relocating lost objects as detecting a new object birth.
% however, these methods are likely to produce massive spurious tracks when the clutter rate is high, making them unreliable for handling track loss tasks.
More recently, various birth processes have been employed in many trackers \cite{granstrom2019poisson,granstrom2017likelihood,meyer2018message,meyer2021scalable,lau2016structured} to initiate the track and potentially redetect the missed objects.
% Apart from the potential mismatch between the adopted birth model and the modelling assumption in the considered fixed number tracking scenario,
A major issue that limits their ability to retrieve missed objects under heavy clutter is the low efficiency, since a large number of measurements would lead to massive potential objects, which then require considerable computational efforts to evaluate their existence probability. 
% Although the number of potential targets can be reduced by a pre-process of the measurements (e.g. clustering), under heavy clutter scenarios like Fig. \ref{fig:allobs} and Fig.\ref{fig: highly heavy clutter} (c), the measurement pre-process step adopted in \cite{meyer2021scalable,granstrom2019poisson,granstrom2017likelihood} are still slow, and to our knowledge, the output still contains too many potential targets and fail to reflect the true targets' positions.
Although the potential object number can be reduced by pre-processing the measurements (e.g. clustering \cite{meyer2021scalable,granstrom2019poisson}), when handling object birth under heavy clutter such as Fig. \ref{fig:allobs} and Fig. \ref{fig: highly heavy clutter}(c), to our knowledge, the trackers in \cite{meyer2021scalable,granstrom2019poisson,granstrom2017likelihood} are still slow with too many potential objects and fail to reflect the true objects' positions. 

We note that many of the trackers reviewed above\cite{granstrom2019poisson,granstrom2017likelihood,meyer2018message,meyer2021scalable,lau2016structured,turner2014complete} can track a varying number of objects, often by employing birth and death processes for Bayesian object number estimation. This important aspect is a current topic of our research and will be reported in future publications, see also \cite{4526445,li2023scalable} for related versions using Bayesian sampling procedures. For the current context, assuming a known number of objects, we demonstrate previously unexploited potential of CAVI for demanding tracking scenarios. For example, the evidence lower bound from CAVI \cite{blei2017variational} can effectively indicate localisation quality, 
%and with appropriate design, CAVI can perform reasonable and efficient inference under heavy clutter scenari
and the insights gained provide the basis for future work with unknown object numbers.

% \vspace{-0.9em}
\subsection{Contributions}
In the previous conference paper \cite{gan2022variational}, we developed a standard VB-AbNHPP tracker (Algorithm 1 in \cite{gan2022variational}) that can achieve a superior tracking accuracy with a highly efficient implementation. In addition, we provided a succinct probabilistic formulation of the AbNHPP system and constructed a dynamic Bayesian network that reveals a clear dependence structure of the model. Further, we presented a unified framework for coordinate ascent variational filtering with or without static parameter learning. In contrast to other variational filtering strategies \cite{sarkka2013non,huang2017novel, vsmidl2006variational,vermaak2003variational} that are limited to different or specific dynamic systems, our framework is applicable to a general dynamic system.
This paper extends the previous work \cite{gan2022variational}, with the addition of several new aspects and extensive comparisons with existing multi-object trackers.

Our first contribution is an adaptive VB-AbNHPP tracker that can cope with unknown object and clutter rates. It thus demonstrates that our developed variational filtering framework can handle both known parameters (i.e., the standard VB-AbNHPP tracker in Algorithm 1 of \cite{gan2022variational}) and unknown static parameters cases. In particular, independent Gamma initial priors are used to model objects and clutter Poisson rates, under which tractable variational updates are maintained within our framework. Subsequently, this paper develops a tracker that can simultaneously perform tracking and Poisson rate learning tasks.
% We further develop the basic method \ref{Algo:tracker} so that it can perform simultaneously the tracking and Poisson rate learning tasks. 
Additional parameters such as the measurement covariance can be learnt similarly if required. 

Another significant contribution is a variational object localisation strategy that can efficiently find the potential object locations in a large surveillance area under heavy clutter, such as shown in Fig. \ref{fig:allobs} and Fig. \ref{fig: highly heavy clutter}(c). With a known object rate and measurement covariance, the proposed strategy can localise the object without an informative positional prior, by using only measurements from a single time step. In particular, this variational localisation strategy proceeds by independently running multiple CAVIs, each featuring a specifically designed initialiasation that can guide the CAVI to find the most probable object location in a selected small region within the surveillance area. The efficiency of this strategy is then supported by parallelised CAVIs and/or a series of carefully selected local regions. 
Most significantly, this paper utilises the CAVI in a new way for the purposes of localisation. To our knowledge, such a localisation strategy is the first attempt to `control' the CAVI to converge to the desired local optimum, and ultimately to employ this concept to approach the global optimum of the considered problem. In contrast, most existing applications (e.g. \cite{sarkka2013non,vsmidl2006variational,vermaak2003variational,turner2014complete,lau2016structured}) simply perform the CAVI with a fixed number of iterations and directly adopt the obtained local optimum, whereas this regular application of CAVI cannot tackle the challenging localisation problems presented in this paper, since the target's posterior exhibits considerably multi-modal behaviour, owing to the heavy clutter. 

Our final contribution is a VB-AbNHPP tracker that includes a relocation strategy (VB-AbNHPP-RELO) for tracking a known number of objects, under the NHPP measurement model. 
Specifically, we propose a track loss detection procedure; once track loss is detected, the proposed variational localisation strategy can be used to relocate the objects. In this way, our VB-AbNHPP-RELO tracker can robustly track closely-spaced objects under extremely heavy clutter, allowing the missed objects to be detected and relocated in time automatically and efficiently. Moreover, it enjoys superior tracking accuracy owing to our carefully designed approximate inference paradigm. Compared to other trackers that are based on sampling or maintaining multiple hypotheses (e.g. \cite{li2023adaptive,meyer2021scalable,granstrom2019poisson}), the proposed tracker is faster, owing to its single Gaussian vector representation of the obtained object state posterior and efficient CAVI inference implementation; it can be further accelerated due to many parallelisable computational features in the algorithm. 
% In comparison to the ET-JPDA filter in \cite{yang2018linear}, our method systematically approximates the exact posterior with a mathematically defined objective function to minimise a certain Kullback-Leibler (KL) divergence
% More importantly, even in challenging tracking scene, the tracker always delivers a robust tracking performance after a temporary track loss (if it occurs) since the missed targets can be effectively detected and relocated in time. 
Results verify that the proposed VB-AbNHPP-RELO has improved tracking accuracy with faster execution time than other comparable trackers in cases of a large number of closely-spaced objects and heavy clutter.
\vspace{-0.7em}
\subsection{Paper Outline}
The remainder of paper is organised as follows. Section \ref{sec:PoissonModel} defines the association-based NHPP system and formulates the tracking problem. Section \ref{sec: cavf} introduces a general coordinate ascent variational filtering framework that allows approximate filtering with or without static parameter learning. Subsequently, we derive the VB-AbNHPP tracker with objects and clutter Poisson rates learning in Section \ref{sec: vbnhpp}. In Section \ref{sec: basic vatiational localisation}, we propose and demonstrate a novel variational localisation strategy, which can detect a single object in a large surveillance area under heavy clutter. Section \ref{sec: missed objects relocation} extends this technique to localise multiple missed objects, based on which and a proposed track loss detection strategy, we develop the VB-AbNHPP-RELO that can in timely fashion recover from track loss. Section \ref{sec: simulation} verifies the performance of both VB-AbNHPP-RELO (with a known rate) and VB-AbNHPP with rate estimation using simulated data. Finally, Section \ref{sec:conclusion} concludes the paper.

\vspace{-0.7em}
\section{Problem Formulation} \label{sec:PoissonModel}
Assume that there are a known number of $K$ objects. At time step $n$, their joint state is $X_n=[X_{n,1}^\top,X_{n,2}^\top,...,X_{n,K}^\top]^\top$, where each vector $X_{n,k}, k\in \{1,...,K\}$ denotes the kinematic state (e.g., position and velocity) for the $k$-th object.
Let $Y_n=[Y_{n,1},...,Y_{n,M_n}]$ denote measurements received at time step $n$, and $M_n$ is the total number of measurements. 

This paper is based on the NHPP measurement model proposed in \cite{gilholm2005poisson}. Denote the set of Poisson rates by $\Lambda=[\Lambda_0,\Lambda_1,...,\Lambda_K]$, where $\Lambda_0$ is the clutter rate and $\Lambda_k$ is the $k$-th object rate, $k=1,...,K$. Each object $k$ generates measurements by a NHPP with a Poisson rate $\Lambda_k$, and the total measurement process is also a NHPP from the superposition of the conditional independent NHPP measurement processes from $K$ objects and clutter. The total number of measurements follows a Poisson distribution with rate $\Lambda_{\text{sum}}=\sum_{k=0}^K\Lambda_k$.

The likelihood function deduced in \cite{gilholm2005poisson} is 
\vspace{-0.5em}
\begin{align}\label{likelihood poisson}
        h(Y_{n},M_n|{X}_{n},\Lambda) 
        &=\frac{e^{-\Lambda_{\text{sum}}}}{M_n !} \prod_{j=1}^{M_n}(\sum_{k=0}^{K}\Lambda_k  \ell(Y_{n,j}|X_{n,k})),
\end{align}
where each $Y_{n,j}$ may originate from any object or clutter. We assume the object originated measurement follows a linear Gaussian model while clutters are uniformly distributed in the observation area $V$:
\begin{equation} 
   \!\!\ell({Y}_{n,j}|{X}_{n,k})=\begin{cases} 
    \mathcal{N}({Y}_{n,j};H X_{n,k},R_{k}), & \!\!\text{$k\neq 0 $; (object)}\\
     \frac{1}{V}, & \!\!\text{$k= 0 $; (clutter)}
\end{cases}
\label{measurement model}
\end{equation}
where $X_{n,0}$ denotes the parameter/information of the clutter likelihood (e.g. in our case, the region of the uniform distribution), and in this paper, we assume it is always known. Note that $X_{n,0}$ is not included in the joint object state $X_n$.
$H$ is the observation matrix that extracts relevant measurement data (typically positional information) from an object state $X_{n,k}$. For point object $k$, $R_{k}$ is the sensor noise covariance, and for extended object, $R_{k}$ is the extended object's covariance/extent.

\subsection{Association-based NHPP Measurement Model}
Here we reformulate the NHPP measurement model by incorporating the association variables.
First, we define the association variable $\theta_{n}=[\theta_{n,1},...,\theta_{n,M_n}]$, with each component $\theta_{n,j}\in \{0, 1,...,K\}$; $\theta_{n,j}=0$ indicates that $Y_{n,j}$ is generated by clutter, and $\theta_{n,j}=k, k\in\{1,\ldots,K\}$ means that $Y_{n,j}$ is generated from the object $k$. Clearly, $\theta_{n}$ and $Y_{n}$ have the same length $M_n$. 
Subsequently, the joint distribution with the association variables $\theta_{n}$ is
\vspace{-0.1em}
% \begin{align}\label{likelihood joint} \notag 
%         p(Y_{n},M_n,\theta_{n}|{X}_{n},\Lambda) 
%         &=p(Y_{n}|\theta_{n},X_{n})\\ 
%         & ~~~\times p(\theta_n|M_n,\Lambda)p(M_n|\Lambda),
% \end{align}
\begin{equation}\label{likelihood joint} 
        p(Y_{n},M_n,\theta_{n}|{X}_{n},\Lambda)=p(Y_{n}|\theta_{n},X_{n})p(\theta_n|M_n,\Lambda)p(M_n|\Lambda),
        \vspace{-0.2em}
\end{equation}
where $p(M_n|\Lambda)$ is a Poisson distribution
\vspace{-0.3em}
\begin{align}
    p(M_n|\Lambda)
    &=\frac{\text{exp}(-\Lambda_{\text{sum}})(\Lambda_{\text{sum}})^{M_n}}{M_n!},
    \label{eq:measurement number pdf}
\end{align}
\vspace{-0.3em}
and all these $M_n$ measurements are conditionally independent
\begin{align}
    \label{eq:obs prior}
    p(Y_{n}|\theta_{n},X_{n})&=\prod_{j=1}^{M_n}\ell(Y_{n,j}|X_{n,\theta_{n,j}}),
\end{align}
where $M_n$ is known from $\theta_n$, and the function $\ell$ is the same as in \eqref{likelihood poisson} and defined in \eqref{measurement model}. For association $\theta_n$ we define
\vspace{-0.3em}
\begin{align} 
 \label{eq:assoc prior}
    &p(\theta_n|M_n,\Lambda)=\prod_{j=1}^{M_n}p(\theta_{n,j}|\Lambda),
\end{align}
where each association component is categorical distributed
\begin{align}
    \label{eq:single assoc prior}
    &p(\theta_{n,j}|\Lambda)=\frac{\sum_{k=0}^K\Lambda_k\delta[\theta_{n,j}=k]}{\Lambda_{\text{sum}}}.
\end{align}

We can see that this measurement model formulation is theoretically equivalent to the NHPP model in \cite{gilholm2005poisson}. To demonstrate it, we use the definition from \eqref{likelihood joint} to \eqref{eq:single assoc prior}; by marginalising the association $\theta_{n}$ out from $p(Y_{n},M_n,\theta_{n}|{X}_{n},\Lambda) $, we can find out that it is equal to the likelihood function in \eqref{likelihood poisson}:
\vspace{-0.2em}
\begin{equation}
    \sum_{\theta_n} p(Y_{n},M_n,\theta_{n}|{X}_{n},\Lambda)  = h(Y_{n},M_n|{X}_{n},\Lambda).\vspace{-0.2em}
\end{equation}
Therefore, these two formulations are equivalent.
\vspace{-0.3em}
\subsection{Dynamic  Bayesian  Network  Modelling}
\begin{figure}
    \centering
    \includegraphics[width=7cm]{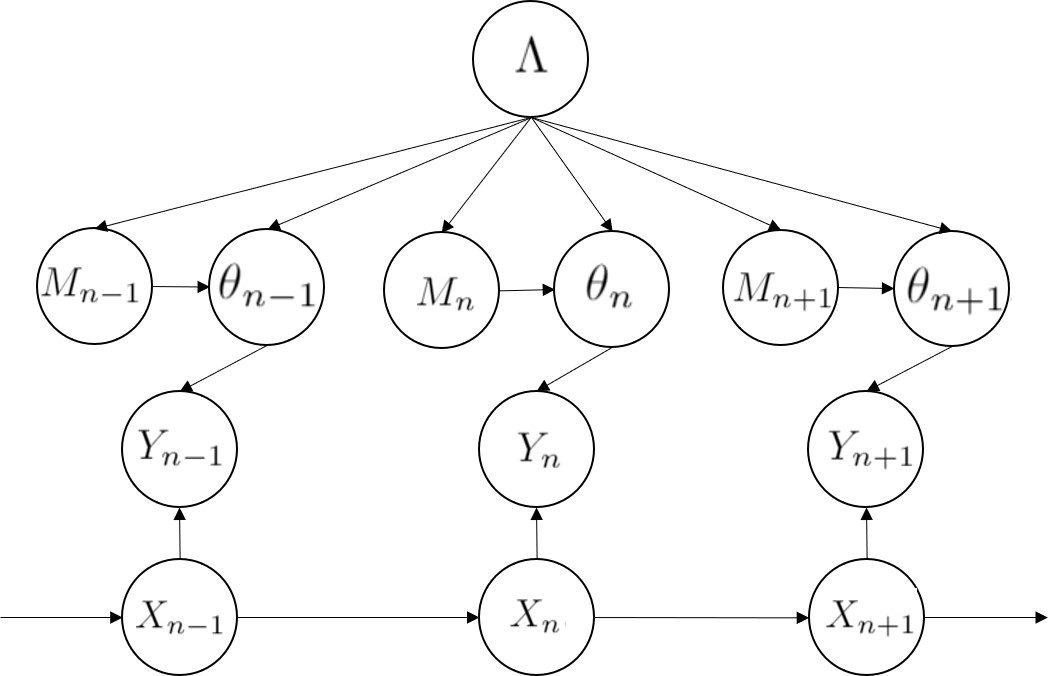}
    \caption{Dynamic Bayesian network representing the joint distribution over associations $\theta_{n}$, Poisson rates $\Lambda$, object states $X_n$, measurement number $M_n$, and measurement set $Y_{n}$ }
    \label{fig:my_label}
      % \vspace{-1.2em}
\end{figure}
The association-based NHPP measurement model can be combined with any dynamic models for multi-object state $X_n$ to formulate a complete dynamic Bayesian network. In this paper, we consider a standard independent linear Gaussian transition for each object state $X_{n,k}$, i.e.,
\vspace{-0.5em}
\begin{align} \notag
    p(X_n|X&_{n-1})=\prod_{k=1}^K p(X_{n,k}|X_{n-1,k})\\[-0.5em] \label{eq: dynamic transition}
    =&\prod_{k=1}^K\mathcal{N}(X_{n,k};F_{n,k}X_{n-1,k}+B_{n,k},Q_{n,k}).
\end{align}
Corresponding to the multi-object tracking problem under the association-based NHPP measurement model, the target distribution from time step $0$ to $N$ can be factorised as follows 
\vspace{-1.5em}
\begin{align}  \notag 
    p(X_{0:N},Y_{1:N}, &\theta_{1:N}, M_{1:N}|\Lambda)=p(X_0)\prod_{n=1}^Np(X_n|X_{n-1})\\ \label{eq:joint facto}
    &\times p(Y_n|\theta_n,X_n)p(\theta_n|M_n,\Lambda)p(M_n|\Lambda) \vspace{-0.3em}
\end{align}
where $p(X_n|X_{n-1})$ is given in \eqref{eq: dynamic transition}, and the other distributions are given in \eqref{eq:measurement number pdf}-\eqref{eq:single assoc prior}.  

This joint distribution can be represented by a dynamic Bayesian network (DBN) shown in Fig. \ref{fig:my_label}, based on the factorisation in \eqref{eq:joint facto}. Each conditional distribution on the right hand side of \eqref{eq:joint facto} can be depicted by the directed arrows, which point from the parent nodes to the child nodes (e.g., the arrow is from $\Lambda$ to $M_n$ for the conditional distribution $p(M_n|\Lambda)$). 

Conditional on known parameters $K,R_{1:K}$ and the transition $p(X_n|X_{n-1})$, the objective of filtering is to sequentially estimate the posterior $p(X_n,\theta_n,\Lambda|Y_{1:n})$, which is equivalent to $p(X_n,\theta_n,\Lambda|Y_{1:n},M_{1:n})$ conditional on $M_{1:n}$ since cardinality $M_n$ is inherently known once the measurements $Y_n$ are received. The exact filtering strategy for the association-based NHPP tracker can then be expressed as follows
\begin{align} \notag
    p(X_n,\theta_n,\Lambda|Y_{1:n})\propto & p(Y_n|\theta_n,X_n) p(\theta_n|M_n,\Lambda)p(M_n|\Lambda) \\\label{eq:exact filter}
    \times \int p(X_{n-1}&,\Lambda|Y_{1:n-1})p(X_n|X_{n-1})dX_{n-1},
\end{align}
where parameters $K,R_{1:K}$ are known and thus are implicitly conditioned and omitted from all relevant densities for convenience. The explicit evaluation of \eqref{eq:exact filter} is intractable since it requires enumerating all possible configurations of $\theta_{1:n}$. %(the configurations of $\theta_{1:n-1}$ is reflected on $p(X_{n-1}|Y_{1:n-1})$).
In this paper, the VB-AbNHPP tracker approximates the online filtering recursion \eqref{eq:exact filter} with the coordinate ascent variational Bayes technique.

\section{Coordinate ascent variational filtering with online parameter learning} \label{sec: cavf}
In this section, we will introduce the coordinate ascent variational filtering framework in a general setting before we apply it to the tracking problem formulated in Section \ref{sec:PoissonModel}. 
We consider a dynamic system with several static parameters, a sequence of latent states, and measurements $\mathcal{Y}_{1:n}$ from time step $1$ to $n$. 
We denote $\mathcal{Z}_n=\mathcal{X}_n\bigcup\Xi$ 
as the set of all latent variables in the system at time step $n$ that we wish to infer, where $\mathcal{X}_n$ is the set of all unknown sequential latent states at time step $n$, and $\Xi$ is the set of all unknown static parameter(s) of the system. We assume that elements in all defined sets  are distinguishable.

% Assume that %$\Xi,\mathcal{X}_j,\mathcal{X}_k$ 
% $\Xi,\mathcal{X}_1,...,\mathcal{X}_n$ are
% disjoint with each other. %This corresponds to the assumption in a common dynamic system that the sequential latent states at different time step are different variables, and the static parameters are other different variables.
Note that $\mathcal{Z}_1,...,\mathcal{Z}_n$ by our definition are not mutually disjoint unless all system static parameters are known; in this special case, $\Xi=\emptyset$ and thus $\mathcal{Z}_n=\mathcal{X}_n$.
% Note that however, in general, $\mathcal{Z}_j,\mathcal{Z}_k$ are not disjoint with each other due to the common unknown parameter. An exception is the special case of our considered problem where all system static parameters are known, i.e. $\Xi=\emptyset$ and we have $\mathcal{Z}_n=\mathcal{X}_n$.
Finally, we assume that the exact optimal filtering with online parameter estimation for this system can be recursively expressed by the prediction step and the update step, which yield the following predictive prior $p_{n|n-1}(\mathcal{Z}_n)$ and posterior $p(\mathcal{Z}_n|\mathcal{Y}_{1:n})$, respectively, i.e.
\vspace{-0.5em}

\begin{align} \label{eq:general exact filter with para learn}
    \begin{aligned}
        p_{n|n-1}(\mathcal{Z}_n)=&\int f(\mathcal{X}_n|\mathcal{Z}_{n-1})p(\mathcal{Z}_{n-1}|\mathcal{Y}_{1:n-1})d\mathcal{X}_{n-1},\\
        p(\mathcal{Z}_n|\mathcal{Y}_{1:n})\propto& g(\mathcal{Y}_n,\mathcal{Z}_n)p_{n|n-1}(\mathcal{Z}_n),
    \end{aligned}
\end{align}
where $f$ is a known conditional probability density of $\mathcal{X}_n$ conditional on $\mathcal{Z}_{n-1}$. $g$ is a known function that depends on $\mathcal{Y}_n$ and $\mathcal{Z}_n$, typically proportional to the observation likelihood for the data.
%, and $g$ is not necessarily a(unnormalised) probability density function. % (e.g. it can be an unnormalised likelihood function or even an arbitrary function of $\mathcal{Y}_n$ and $\mathcal{Z}_n$).
%and $g$ is an arbitrary known function (not necessarily an (unnormalised) probability density function) that depends on $\mathcal{Y}_n,\mathcal{Z}_n$. 
Note that $f$ and $g$ may also depend on other known parameters.
%$p$ is defined as the exact probability law of the considered dynamic system. Hence it is the same $p$ defined in Section \ref{sec:PoissonModel} if the considered dynamic system is the NHPP system formulated in Section \ref{sec:PoissonModel}.
$p$ is defined as the exact probability law of the considered dynamic system; here $p$ is the same probability law as defined in Section \ref{sec:PoissonModel} in the case of a NHPP system described in Section \ref{sec:PoissonModel}.

A typical example of the considered system is a general state space model where the unknown latent state $\mathcal{X}_{1:n}$ follows a first-order Markovian transition and $\Xi$ refers to the unknown parameter(s) of the transition and/or measurement function. Moreover, this system also applies to our AbNHPP framework formulated in Section \ref{sec:PoissonModel}. Specifically, $\mathcal{Y}_n$ refers to the measurements $Y_n$, $\mathcal{X}_n$ refers to the set $\{X_n,\theta_n$\}, and $\Xi$ refers to the parameter set $\{\Lambda,R_{1:K}$\} or simply $\emptyset$ depending on whether those parameters are required to be inferred.

Variational filtering can be employed when the exact filtering recursion in \eqref{eq:general exact filter with para learn} is intractable, which recursively approximates $p(\mathcal{Z}_{n}|\mathcal{Y}_{1:n})$ in \eqref{eq:general exact filter with para learn} with a converged variational distribution $q^*_n(\mathcal{Z}_{n})$ that is chosen by minimising a certain KL divergence. By convention, we divide our variational filtering framework into a prediction step and an update step. In the following subsections, we will first describe the objective and rationale of our approximate filtering strategy in Section \ref{sec:approximate filtring objective}, then clarify the details of the variational update step in Section \ref{sec:ca update}, and finally present the way to perform a reliable prediction step in Section \ref{sec:predic prior approx}. The framework's applicability will be discussed in Section \ref{sec: further disc}.

\vspace{-0.5em}
% \subsection{Recursive variational approximation}
\subsection{Approximate Filtering Objective and Probability Law} \label{sec:approximate filtring objective}

Due to the unavailability of an exact form $p(\mathcal{Z}_{n-1}|\mathcal{Y}_{1:n-1})$, the exact predictive prior $p_{n|n-1}(\mathcal{Z}_n)$ in \eqref{eq:general exact filter with para learn} is intractable. Therefore, the prediction step in our filter aims to approximate the exact predictive prior $p_{n|n-1}(\mathcal{Z}_n)$ with a tractable distribution $\hat{p}_{n|n-1}(\mathcal{Z}_n)$, whose construction will be discussed in Section \ref{sec:predic prior approx}. The approximate joint probability law for the update step at the time step $n$ can be defined as $\hat{p}_{n}$, which satisfies the following factorisation:
\begin{align} \label{eq: law for approximate filtering}
    \hat{p}_{n}(\mathcal{Z}_{n},\mathcal{Y}_n)\propto g(\mathcal{Y}_n,\mathcal{Z}_n)\hat{p}_{n|n-1}(\mathcal{Z}_{n}),
\end{align}
from which the filtering conditional is obtained as
\begin{align}  \label{eq:general phatjoint}
    \hat{p}_n(\mathcal{Z}_{n}|\mathcal{Y}_n)
    &\propto   g(\mathcal{Y}_n,\mathcal{Z}_n)\hat{p}_{n|n-1}(\mathcal{Z}_{n}).
\end{align}

By comparing \eqref{eq:general phatjoint} with \eqref{eq:general exact filter with para learn}, it can be seen that the posterior under the approximate filtering probability law, i.e., $\hat{p}_n(\mathcal{Z}_{n}|\mathcal{Y}_n)$, is identical to the exact posterior $p(\mathcal{Z}_n|\mathcal{Y}_{1:n})$ if $\hat{p}_{n|n-1}(\mathcal{Z}_{n})$ equals $p_{n|n-1}(\mathcal{Z}_n)$. Therefore, $\hat{p}_{n}(\mathcal{Z}_{n}|\mathcal{Y}_n)$ is regarded as the target distribution of the approximate filtering, since it is expected to provide a close approximation to the exact posterior $p(\mathcal{Z}_{n}|\mathcal{Y}_{1:n})$ in \eqref{eq:general exact filter with para learn} if the prediction step produces an accurate $\hat{p}_{n|n-1}(\mathcal{Z}_{n})$. Subsequently, the objective of the update step in our approximate filtering is to infer the target distribution $\hat{p}_n(\mathcal{Z}_{n}|\mathcal{Y}_n)$ via \eqref{eq:general phatjoint}.  

\vspace{-0.7em}
\subsection{Update Step with Coordinate Ascent Variational Inference} \label{sec:ca update}
% In the update step, a converged variational distribution $q^*_n(\mathcal{Z}_{n})$ is evaluated to approximate the target distribution $\hat{p}_n(\mathcal{Z}_{n}|\mathcal{Y}_n)$ in \eqref{eq:general phatjoint} within the coordinate ascent variational inference framework.
In the update step, we approximate the target distribution $\hat{p}_n(\mathcal{Z}_{n}|\mathcal{Y}_n)$ in \eqref{eq:general phatjoint} by a converged variational distribution $q^*_n(\mathcal{Z}_{n})$ from the standard coordinate ascent variational inference framework \cite{bishop:2006:PRML,blei2017variational}.
% Within the CAVI framework, the update step involves evaluating the converged variational distribution $q^*_n(\mathcal{Z}_{n})$, and using it to approximate the target distribution $\hat{p}_n(\mathcal{Z}_{n}|\mathcal{Y}_n)$ in \eqref{eq:general phatjoint}.
To this end, we first posit a (mean-field) family of variational distributions $q_n(\mathcal{Z}_{n})$. Each member of this family must satisfy the following factorisation $q_n(\mathcal{Z}_{n})=\prod_{i=1}^\nu q_n^i(z_n^i)$, where each $z_n^i$ ($i=1,2,...,\nu$) is a predefined factorised variable and is disjointly partitioned from $\mathcal{Z}_{n}$, i.e. $\{z_{n}^{1},z_{n}^{2},...,z_{n}^{\nu}\}=\mathcal{Z}_{n}$. Then our variational distribution $q^*_n(\mathcal{Z}_{n})$ is chosen from the posited family that maximises the following evidence lower bound (ELBO)\footnote{
% The ELBO in this article refers to an unnormalised ELBO, as our $g(\mathcal{Y}_n,\mathcal{Z}_n)$ is not limited to a normalised likelihood. Therefore, the $\mathcal{F}_n(q_n)$ in \eqref{eq:general ELBO} may not represent a lower bound of the log-evidence $\log\hat{p}_n(\mathcal{Y}_n)$ as traditionally described in literature like \cite{blei2017variational}, but rather a lower bound of $\log\hat{p}_n(\mathcal{Y}_n)$ plus a logarithm of the normalisation constant from \eqref{eq: law for approximate filtering}.
The ELBO $\mathcal{F}_n(q_n)$ in this article may not be a lower bound of the log-evidence $\log\hat{p}_n(\mathcal{Y}_n)$ traditionally described in literature, e.g. \cite{blei2017variational}. Since our $g(\mathcal{Y}_n,\mathcal{Z}_n)$ is not limited to a normalised likelihood, the ELBO $\mathcal{F}_n(q_n)$ in \eqref{eq:general ELBO} is a lower bound of $\log\hat{p}_n(\mathcal{Y}_n)$ plus a logarithm of the normalisation constant from \eqref{eq: law for approximate filtering}.

} $\mathcal{F}_n(q_n)$:
% \begin{equation}
%     q^*_n(\mathcal{Z}_{n})=\argmax_{q_n} \mathcal{F}_n(q_n)
% \end{equation}
\begin{align}  
% \notag
% q^*_n(\mathcal{Z}_{n})&=\argmax_{q_n} \mathcal{F}_n(q_n)\\
     \label{eq:general ELBO}
    \mathcal{F}_n(q_n)&=\E_{q_n(\mathcal{Z}_n)}\log \frac{g(\mathcal{Y}_n,\mathcal{Z}_n)\hat{p}_{n|n-1}(\mathcal{Z}_{n})}{q_n(\mathcal{Z}_n)}.
\end{align}
The rationale behind this optimisation is to minimise the KL divergence KL$(q_n(\mathcal{Z}_n)||\hat{p}_n(\mathcal{Z}_{n}|\mathcal{Y}_n))$, since according to \eqref{eq:general phatjoint}, the negative of this KL divergence can be expressed as the ELBO $\mathcal{F}_n(q_n)$ in \eqref{eq:general ELBO} plus a constant independent of $q_n(\mathcal{Z}_n)$.

The optimisation of $\mathcal{F}_n(q_n)$ in \eqref{eq:general ELBO} with respect to $q_n$ can be done by the following coordinate ascent algorithm. We start by setting $q_n^{i}(z_n^i)$ to an initialised distribution $q_{n}^{(0)}(z_{n}^{i})$ for all $i=1,2,...,\nu$; then we iteratively update $q_n^{i}$ for each $i=1,2,...,\nu$ according to \eqref{eq:general ca update} while keeping $q_n^{i-}(z_n^{i-})$ fixed, where $q_n^{i-}(z_n^{i-})$ is defined as the joint variational distribution of all variables in $\mathcal{Z}_n$ except $z_n^i$, i.e. $q_n^{i-}(z_n^{i-})=\prod_{j\neq i} q_n^{j}(z_n^{j})$.
\begin{align} \label{eq:general ca update}
    q_n^{i}(z_{n}^{i})\propto\text{exp}\left(\E_{q_n^{i-}(z_n^{i-})}\log g(\mathcal{Y}_n,\mathcal{Z}_n)\hat{p}_{n|n-1}(\mathcal{Z}_{n})\right).
\end{align}
The $q_n^{i}$ in \eqref{eq:general ca update} is the optimal distribution that achieves the highest ELBO $\mathcal{F}_n(q_n)$ when $q_n^{i-}$ is fixed. Each update via \eqref{eq:general ca update} guarantees an increment of $\mathcal{F}_n(q_n)$ so that the algorithm eventually finds a local optimum. Such an optimisation procedure is known as CAVI.
More details about CAVI, including the derivation of \eqref{eq:general ca update}, can be found in \cite{bishop:2006:PRML,blei2017variational}.

The convergence of the CAVI can be assessed by monitoring the ELBO $\mathcal{F}_n(q_n)$ in \eqref{eq:general ELBO} for each iteration of updates. Since each update via \eqref{eq:general ca update} guarantees an increment of $\mathcal{F}_n(q_n)$, when the increment of $\mathcal{F}_n(q_n)$ is smaller than a certain threshold, we assume CAVI has converged, and the latest updated $q_n(\mathcal{Z}_n)$, that is, the converged variational distribution, is chosen as our approximate filtering result $q^*_n(\mathcal{Z}_{n})$.
\vspace{-0.7em}
\subsection{Prediction Step with Approximate Filtering Prior} \label{sec:predic prior approx}
Recall that the objective of our prediction step is to find a $\hat{p}_{n|n-1}(\mathcal{Z}_n)$ that approximates the exact predictive prior $p_{n|n-1}(\mathcal{Z}_n)$ in \eqref{eq:general exact filter with para learn}, whose intractability arises from the lack of an exact filtering prior $p(\mathcal{Z}_{n-1}|\mathcal{Y}_{1:n-1})$. 
% A natural way to obtain a tractable predictive prior is to replace this intractable filtering prior $p(\mathcal{Z}_{n-1}|\mathcal{Y}_{1:n-1})$ with $q^*_{n-1}(\mathcal{Z}_{n-1})$, 
To obtain a tractable predictive prior, we replace this intractable $p(\mathcal{Z}_{n-1}|\mathcal{Y}_{1:n-1})$ in the integrand of \eqref{eq:general exact filter with para learn} with some tractable approximate filtering prior $r(\mathcal{Z}_{n-1})$. A natural choice for such an approximate filtering prior is $q^*_{n-1}(\mathcal{Z}_{n-1})$,
which is the converged variational distribution obtained from the approximate filtering at the time step $t_{n-1}$, and subsequently our approximate predictive prior can be written as \eqref{eq:predictive prior approx pure filter}. However, in some cases, a more sophisticated construction of the approximate filtering prior is required to ensure a reliable static parameter estimator. Below, we discuss this issue by considering two different scenarios depending on whether the online estimation of the static parameters is required.

\subsubsection{$\Xi=\emptyset$}
In this case, we assume that  all static parameters in the system are known. Thus, we have $\mathcal{Z}_{n-1}=\mathcal{X}_{n-1}$, and our algorithm focuses solely on performing the approximate filtering task.  To accomplish this, we suggest using the standard approximate filtering prior $r(\mathcal{Z}_{n-1})=q^*_{n-1}(\mathcal{Z}_{n-1})$ to evaluating the $\hat{p}_{n|n-1}(\mathcal{Z}_n)$ for the prediction step:
\begin{align} \label{eq:predictive prior approx pure filter}
    \hat{p}_{n|n-1}(\mathcal{Z}_n)=&\int f(\mathcal{X}_n|\mathcal{Z}_{n-1})q^*_{n-1}(\mathcal{Z}_{n-1})d\mathcal{X}_{n-1}.\\ \notag
    =&\int f(\mathcal{X}_n|\mathcal{X}_{n-1})q^*_{n-1}(\mathcal{X}_{n-1})d\mathcal{X}_{n-1}.
\end{align}
In other words, the converged variational distribution for the last approximate filtering step is employed as the prior for our current prediction step. Such an empirical approximate filtering prior is commonly used in many approximate filtering methods such as the extended Kalman filter, and other Gaussian approximate filters \cite{sarkka2013bayesian}.
\subsubsection{$\Xi\neq\emptyset$} \label{sec:prior choice for para learn}
Our algorithm in this setting performs the approximate filtering along with online Bayesian parameter learning for the static parameters $\Xi$. Specifically, the parameter posterior $p(\Xi|\mathcal{Y}_{1:n})$ is approximated by $q_n^*(\Xi)=\int q_n^*(\Xi,\mathcal{X}_{n}) d\mathcal{X}_n$.
% \begin{align} \label{eq:general para posterior}
%     q_n^*(\Xi)=\int q_n^*(\mathcal{Z}_n) d\mathcal{X}_n.
% \end{align}
It has been observed that applying the standard approximate filtering prior in \eqref{eq:predictive prior approx pure filter} to this setting (i.e. set $\hat{p}_{n|n-1}(\mathcal{Z}_{n})=\int f(\mathcal{X}_n|\Xi,\mathcal{X}_{n-1})q_{n-1}^*(\Xi,\mathcal{X}_{n-1})d\mathcal{X}_{n-1}$ for all time steps) may cause the variance of $q_n^*(\Xi)$ to be very small before its mean converges to the ground truth of $\Xi$; such an overconfident $q_n^*(\Xi)$ renders it very difficult to correct our estimation of $\Xi$ with future data. To mitigate this issue, it is empirically helpful to slightly `flat' $q_n^*(\Xi)$ so that it can be less confident. Therefore, we construct our approximate filtering prior $r(\mathcal{Z}_{n-1})$ as follows,
\begin{align} \label{eq:filter prior approx with para}
    r(\mathcal{Z}_{n-1})&\propto q_{n-1}^*(\Xi)^{\gamma_{n-1}} q_{n-1}^*(\mathcal{X}_{n-1}|\Xi),
    % \\
    % q_n^*(\mathcal{X}_n|\Xi)&=q_n^*(\mathcal{Z}_n)/q_n^*(\Xi),
\end{align}
where $q_{n-1}^*(\mathcal{X}_{n-1}|\Xi)=q_{n-1}^*(\mathcal{X}_{n-1},\Xi)/q_{n-1}^*(\Xi)$, and henceforth the approximate predictive prior $\hat{p}_{n|n-1}(\mathcal{Z}_n)$ is 
\begin{align} \label{eq:pred prior approx with para}
        \hat{p}_{n|n-1}(\mathcal{Z}_n)&=\int f(\mathcal{X}_n|\mathcal{Z}_{n-1})r(\mathcal{Z}_{n-1})d\mathcal{X}_{n-1}\\ \notag 
        \propto q_{n-1}^*&(\Xi)^{\gamma_{n-1}} \int f(\mathcal{X}_n|\Xi,\mathcal{X}_{n-1})q_{n-1}^*(\mathcal{X}_{n-1}|\Xi) d\mathcal{X}_{n-1}.
\end{align}
$\gamma_{n-1}\in(0,1]$ is a forgetting factor that can `smooth' the density $q_{n-1}^*(\Xi)$ when used in the approximate filtering prior $r(\mathcal{Z}_{n-1})$. A higher value of $\gamma_n$ implies that the new parameter prior $r(\Xi)$ closely resembles the previously obtained posterior $q_{n-1}^*(\Xi)$. Specifically, $\gamma_{n-1}=1$ recovers the standard approximate filtering prior $q^*_{n-1}(\mathcal{Z}_{n-1})$ as used in \eqref{eq:predictive prior approx pure filter}, and in the limiting case $\gamma_{n-1}=0$, the approximate filtering prior $r(\mathcal{Z}_{n-1})$ completely ignores the $q_{n-1}^*(\Xi)$ and assumes a uniformly distributed $\Xi$.

Subsequently, setting $\{\gamma\}_n$ to a  value slightly less than 1 introduces a degree of uncertainty into the prior, making it suitable for estimating slowly varying or static parameters, though this estimation would maintain a non-zero level of uncertainty over time. For static parameter estimation, we recommend an increasing sequence $\{\gamma\}_n\in (0,1]$ that approaches $1$ as $n$ increases. This approach will intuitively lead to a variational approximation with smaller and smaller uncertainty as $n$ increases, while allowing effective convergence to an appropriate parameter region for smaller $n$.

\subsection{Further Discussions on the Applicability} \label{sec: further disc}

The introduced coordinate ascent variational filtering with parameter learning framework is devised for a general dynamic system that satisfies the optimal filtering recursion in \eqref{eq:general exact filter with para learn}. However, not all such systems can lead to closed-form approximated distributions through the introduced framework. The key to yielding a tractable approximate distribution is to make sure that the choice of the factorised variable $z_n^i (i=1,2,...,\nu)$ and the $\hat{p}_{n|n-1}(\mathcal{Z}_n)$ produced by our prediction step should lead to an analytically tractable updated variational distribution $q_n^i$ in \eqref{eq:general ca update}. This may require the predictive prior $\hat{p}_{n|n-1}(\mathcal{Z}_n)$ to take some particular parametric form that does not agree with the two constructions of $\hat{p}_n(\mathcal{Z}_n)$ we suggested in Section \ref{sec:predic prior approx} (i.e. \eqref{eq:predictive prior approx pure filter} and \eqref{eq:pred prior approx with para}). In this case, we could adopt a $\hat{p}_{n|n-1}(\mathcal{Z}_n)$ in the desired parametric form, but ensure that it matches the moments of the predictive distribution suggested in Section \ref{sec:predic prior approx} to preserve some accuracy. Alternatively, any intractable coordinate ascent update in \eqref{eq:general ca update} may be approximated by the Monte Carlo methods. In particular, we can sample from the intractable variational distributions to approximate other updated variational distributions in a closed-form expression. Such a strategy is commonly seen in the Monte Carlo based CAVI, e.g. \cite{ye2020monte,Gan2022}.

\section{Variational Bayes AbNHPP Tracker}\label{sec: vbnhpp}
The approximate inference for the association based NHPP system in Section \ref{sec:PoissonModel} can now be carried out within the coordinate ascent variational filtering framework in Section \ref{sec: cavf}. In particular, our VB-AbNHPP tracker with known static system parameters (i.e. $\Xi=\emptyset$) was presented in \cite{gan2022variational} for tracking tasks with known $\Lambda$ and $R_{1:K}$. In this section, we demonstrate that our VB-AbNHPP tracker can also learn online the unknown parameters within the variational filtering framework presented in Section \ref{sec: cavf}. We will consider the tracking tasks with the unknown Poisson rate $\Lambda$. The measurement covariance/extent $R_{1:K}$ can be learnt in a similar fashion if unknown, but for now we assume it is known for simplicity.

From now on, we assume $K,R_{1:K}$ in the association based NHPP system in Section \ref{sec:PoissonModel} are all known, and unless otherwise stated, these parameters are always implicitly conditioned. Subsequently, the variables of the general dynamic system in Section \ref{sec: cavf} are now $\Xi=\{\Lambda\}$, $\mathcal{Y}_n=Y_n$, $\mathcal{X}_{n}=\{X_n,\theta_n\}$ and $\mathcal{Z}_{n}=\{X_n,\theta_n,\Lambda\}$. Comparing the optimal filter recursion \eqref{eq:general exact filter with para learn} to the exact optimal filter recursion for the association based NHPP system in \eqref{eq:exact filter}, the densities for $f,g$ become
\begin{align} \notag
    f(\mathcal{X}_n|\mathcal{Z}_{n-1})=&p(\theta_n|M_n,\Lambda)p(X_n|X_{n-1}),\\
    g(\mathcal{Y}_n,\mathcal{Z}_n)=&p(Y_n|\theta_n,X_n) p(M_n|\Lambda).
\end{align}
Note that $f$ and $g$ are known when the approximate filtering is carried out at time step $n$, where $M_n$ is inherently obtained from the latest received observations $Y_n$. Since $\Xi\neq\emptyset$, we construct our approximate predictive prior $\hat{p}_{n|n-1}(\mathcal{Z}_{n})$ according to \eqref{eq:pred prior approx with para} in Section \ref{sec:prior choice for para learn}, i.e.
\begin{align} \notag
    \hat{p}_{n|n-1}(X_n,\theta_n,&\Lambda)\propto q_{n-1}^*(\Lambda)^{\gamma_{n-1}} p(\theta_n|M_n,\Lambda) \\ \label{eq: full form approx prior}
    \times&\int p(X_n|X_{n-1})q_{n-1}^*(X_{n-1}|\Lambda) dX_{n-1},
\end{align}
where the forgetting factor $\gamma_{n-1}\in(0,1]$ was introduced in Section \ref{sec:prior choice for para learn}. Now we assume the factorisation for our mean-field family is
$q(X_n,\theta_n,\Lambda)=q(X_n)q(\theta_n)q(\Lambda)$. This factorisation will later result in tractable coordinate ascent updates, as we will see later. It also suggests that $q_{n-1}^*(X_{n-1}|\Lambda)=q_{n-1}^*(X_{n-1})$. Subsequently, the approximate predictive prior in \eqref{eq: full form approx prior} can be rewritten in a factorised form as follows for the convenience of later derivation, 
% Specifically, as a direct result of \eqref{eq: full form approx prior}, we have
\begin{align} 
% \begin{aligned}
\notag
    \hat{p}_{n|n-1} (X_n,\theta_n,\Lambda)\!=&\hat{p}_{n|n-1}(X_n)\hat{p}_{n|n-1}(\Lambda)\hat{p}_{n|n-1}(\theta_n|\Lambda),\\  \notag
    \hat{p}_{n|n-1}(X_n)=\int &p(X_n|X_{n-1})q_{n-1}^*(X_{n-1}) dX_{n-1},\\ \label{eq: approx prior factorize and predictive prior}
    \hat{p}_{n|n-1}(\theta_n|\Lambda)=&p(\theta_n|M_n,\Lambda),\\ \notag
    \hat{p}_{n|n-1}(\Lambda)\propto&q_{n-1}^*(\Lambda)^{\gamma_{n-1}}.
% \end{aligned}
\end{align}
% we have $\hat{p}_n(X_n,\theta_n,\Lambda)=\hat{p}_n(X_n)\hat{p}_n(\Lambda)\hat{p}_n(\theta_n|\Lambda)$ with
% \begin{align} \label{eq: predictive prior state}
%     \hat{p}_n(X_n)=&\int p(X_n|X_{n-1})q_{n-1}^*(X_{n-1}) dX_{n-1},\\ \label{eq: conditional predictive prior theta}
%     \hat{p}_n(\theta_n|\Lambda)=&p(\theta_n|M_n,\Lambda),\\ \label{eq: predictive rate}
%     \hat{p}_n(\Lambda)\propto&q_{n-1}^*(\Lambda)^{\gamma_{n-1}}.
% \end{align}
According to Section \ref{sec:approximate filtring objective} and \eqref{eq: law for approximate filtering}, the approximate filtering law $\hat{p}_n$ of the update step for $Y_n,X_n,\theta_n,\Lambda$ is defined by,
\begin{align} \notag
    \hat{p}_n(X_n,\theta_n,\Lambda,Y_n)&\propto  p(Y_n|\theta_n,X_n) p(M_n|\Lambda)\\ \label{eq: prob law}
    \times&p(\theta_n|M_n,\Lambda)\hat{p}_{n|n-1}(X_n)\hat{p}_{n|n-1}(\Lambda).
\end{align}
% and hence the target distribution $\hat{p}_n(\mathcal{Z}_{n}|\mathcal{Y}_n)$ in \eqref{eq:general phatjoint} is
% \begin{align}  \label{eq:phatjoint}
%     \hat{p}_n(X_n,\theta_n|Y_n)
%     &\propto  \hat{p}_n(X_n,\theta_n,Y_n)\\  \notag
%     &  = p(Y_n|\theta_n,X_n)p(\theta_n|M_n)\hat{p}_n(X_n).
% \end{align}
The prediction step described in Section \ref{sec:predic prior approx} is now used to evaluate the approximate filtering priors in \eqref{eq: approx prior factorize and predictive prior}. Since $\hat{p}_{n|n-1}(\theta_n|\Lambda)$ in \eqref{eq: approx prior factorize and predictive prior} can be directly obtained from our model in \eqref{eq:assoc prior}, the prediction step only requires evaluating $\hat{p}_{n|n-1}(X_n)$ and $\hat{p}_{n|n-1}(\Lambda)$ in \eqref{eq: approx prior factorize and predictive prior}. Their explicit forms will be given in \eqref{eq:state predictive prior computation} and \eqref{eq:rate predictive prior computation} after a discussion of the conjugate prior. 
\vspace{-0.7em}
\subsection{Coordinate Ascent Update}
% In the update step, we aim to approximate the target distribution $\hat{p}_n(X_n,\theta_n,\Lambda|Y_n)$ with $q(X_n,\theta_n,\Lambda)$. Based on Section \ref{sec:ca update}, here we assume the following factorisation for the mean-field family
Recall that our mean-field family satisfies $q(X_n,\theta_n,\Lambda)=q(X_n)q(\theta_n)q(\Lambda)$. Based on Section \ref{sec:ca update}, the update step of our tracker aims to minimise the KL divergence $\text{KL}(q_n(X_n)q_n(\theta_n)q(\Lambda)||\hat{p}_n(X_n,\theta_n,\Lambda|Y_n))$, or equivalently, to maximise the ELBO in \eqref{eq:general ELBO} as follows
\begin{align} \notag
    \mathcal{F}(q_n&)=\E_{q_n(X_n)q_n(\theta_n)q_n(\Lambda)}\log\frac{\hat{p}_{n|n-1}(X_n,\theta_n,\Lambda)}{q_n(X_n)q_n(\theta_n)q_n(\Lambda)}\\ \label{eq: with para ELBO}
    &+\E_{q_n(X_n)q_n(\theta_n)q_n(\Lambda)}\log p(Y_n|\theta_n,X_n) p(M_n|\Lambda),
\end{align}
where $\hat{p}_{n|n-1}(X_n,\theta_n,\Lambda)$ is specified in \eqref{eq: approx prior factorize and predictive prior}. To this end, it iteratively updates $q_n(X_n)$, $q_n(\Lambda)$ and $q_n(\theta_n)$ according to \eqref{eq:general ca update} until convergence. We now derive these updates.

\subsubsection{update for $q_n(X_n)$} \label{sec:ca update for Xn}
According to \eqref{eq:general ca update}, \eqref{eq:obs prior} and \eqref{measurement model}, and the fact that $q_n(\theta_n)=\prod_{j=1}^{M_n}q_n(\theta_{n,j})$ (this will later be shown in \eqref{eq:vari theta independent}), it yields the following update for $X_n$
\begin{align} \notag
    q_n(X_n)\propto& \hat{p}_{n|n-1}(X_n)\text{exp}\left(\E_{q_n(\theta_n)}\log p(Y_n|\theta_n,X_n) \right)\\ \label{eq: X update}
    \propto& \hat{p}_{n|n-1}(X_n)\prod_{k=1}^{K}\mathcal{N}\left(\overbar{Y}_n^k;HX_{n,k},\overbar{R}^k_n\right),
\end{align}
% \vspace{-1em}
where
 \vspace{-0.5em}
\begin{align}
\label{eq:pseudomeas R}
    \overbar{R}_n^k=&\frac{R_k}{\sum_{j=1}^{M_n}q_n(\theta_{n,j}=k)},\\ \label{eq:pseudomeas Y}
    \overbar{Y}_n^k=&\frac{\sum_{j=1}^{M_n}Y_{n,j}q_n(\theta_{n,j}=k)}{\sum_{j=1}^{M_n}q_n(\theta_{n,j}=k)}.
\end{align}
The derivation of the required expectation in \eqref{eq: X update}-\eqref{eq:pseudomeas Y},  i.e. $\E_{q_n(\theta_n)}\log p(Y_n|\theta_n,X_n)$, is presented as \eqref{eq: elbo first line third part inner}-\eqref{eq: elbo first line third part inner final} in Appendix \ref{apx: elbo derivation}. Such an update can be considered as updating the (predictive) prior $\hat{p}_{n|n-1}(X_n)$ in \eqref{eq: approx prior factorize and predictive prior} with $K$ pseudo-measurements $\overbar{Y}_{n}^k,k=1,2,...,K$  (defined in \eqref{eq:pseudomeas Y}), each generated independently from each object with a measurement covariance of \eqref{eq:pseudomeas R}. The conjugate prior for such an update is Gaussian. In fact, with the independent linear Gaussian transition $p(X_n|X_{n-1})$ in \eqref{eq: dynamic transition} and an independent initial Gaussian prior $p(X_0)=\prod_{k=1}^K p(X_{0,k})$, the updated variational distribution can always maintain an independent Gaussian form for each object (i.e. $q_n(X_n)=\prod_{k=1}^K q_n(X_{n,k})$).

Specifically, if we denote the converged variational distribution for the $k$-th object at time step $n-1$ as $q^*_{n-1}(X_{n-1,k})= \mathcal{N}(X_{n-1,k};\mu^{k*}_{n-1|n-1},\Sigma^{k*}_{n-1|n-1})$, then its predictive prior is
% $\hat{p}_n(X_n)=\mathcal{N}(X_{n,k};\mu^{k}_{n|n-1},\Sigma^{k}_{n|n-1})$, with
\begin{align}
    \begin{aligned} \label{eq:state predictive prior computation}
    \hat{p}_{n|n-1}(X_{n,k})=&\mathcal{N}(X_{n,k};\mu^{k*}_{n|n-1},\Sigma^{k*}_{n|n-1}),\\
    \mu^{k*}_{n|n-1}=&F_{n,k}\mu^{k*}_{n-1|n-1}+B_{n,k},\\
    \Sigma^{k*}_{n|n-1}=&F_{n,k}\Sigma^{k*}_{n-1|n-1}F_{n,k}^\top+Q_{n,k},
    \end{aligned}
\end{align}
where $F_{n,k},B_{n,k},Q_{n,k}$ are given in \eqref{eq: dynamic transition}. The variational distribution $q_n(X_{n,k})$ can now be updated by the Kalman filter:
\begin{align}
\begin{aligned} \label{eq:KF update}
    q_n(X_{n,k})&=\mathcal{N}(X_{n,k};\mu^{k}_{n|n},\Sigma^{k}_{n|n}),\\
    T^k_{n}&=\overbar{Y}_{n}^k-H\mu_{n|n-1}^{k*},\\
    S^k_{n}&=H\Sigma^{k*}_{n|n-1}H^\top+\overline{R}_n^k,\\
    G&=\Sigma^{k*}_{n|n-1}H^\top S^{k^{-1}}_{n},\\
    \mu^{k}_{n|n}&=\mu^{k*}_{n|n-1}+GT^k_{n},\\
    \Sigma^{k}_{n|n}& =(I-GH)\Sigma^{k*}_{n|n-1}.
\end{aligned}
\end{align}
Such an update can be independently carried out for all objects.
\subsubsection{update for $q_n(\Lambda)$}
According to \eqref{eq:general ca update}, \eqref{eq:measurement number pdf} and \eqref{eq:assoc prior}, the variational distribution $q_n(\Lambda)$ is updated as follows,
\begin{align} \notag
    q_n(\Lambda)\propto &\hat{p}_{n|n-1}(\Lambda)p(M_n|\Lambda)\text{exp}\left(\E_{q_n(\theta_n)}\log p(\theta_n|M_n,\Lambda)\right)\\ \notag
    =&\hat{p}_{n|n-1}(\Lambda)\frac{\text{exp}(-\Lambda_{\text{sum}})(\Lambda_{\text{sum}})^{M_n}}{M_n!}\times\frac{1}{(\Lambda_{\text{sum}})^{M_n}}\\ \notag
    & \ \times \text{exp}\biggl(\E_{q_n(\theta_n)}\sum_{j=1}^{M_n}\sum_{k=0}^K \delta[\theta_{n,j}=k] \log\Lambda_k\biggr)\\ \label{eq: rate update}
    \propto&\hat{p}_{n|n-1}(\Lambda)\prod_{k=0}^K \text{exp}(-\Lambda_{k}) \Lambda_k^{\sum_{j=1}^{M_n} q_n(\theta_{n,j}=k)}.
\end{align}
Independent gamma turns out to be the conjugate prior for the likelihood function in \eqref{eq: rate update}. Specifically, if we assume an independent initial prior $p(\Lambda)=\prod_{k=0}^K p(\Lambda_k)$ where each $p(\Lambda_k)$ is a gamma distribution, then with prediction in \eqref{eq: approx prior factorize and predictive prior} and update in \eqref{eq: rate update}, $q_n(\Lambda)$ is always in the independent Gamma form, i.e. $q_n(\Lambda)=\prod_{k=0}^Kq_n(\Lambda_k)$ where each $q_n(\Lambda_k)$ is a gamma distribution. Denote $q_{n-1}^*(\Lambda_k)=\mathcal{G}(\Lambda_k;\eta^{k*}_{n-1|n-1},\rho^{k*}_{n-1|n-1})$, where $\mathcal{G}(\eta,\rho)$ is the Gamma distribution with shape parameter $\eta$ and scale parameter $\rho$. According to $\hat{p}_{n|n-1}(\Lambda)$ in \eqref{eq: approx prior factorize and predictive prior}, the predictive prior for each $\Lambda_k$ is
\begin{align} \notag
    \hat{p}_{n|n-1}(\Lambda_k)=&\mathcal{G}(\Lambda_k;\eta^{k*}_{n|n-1},\rho^{k*}_{n|n-1}),\\ \label{eq:rate predictive prior computation}
    \eta^{k*}_{n|n-1}=&\eta^{k*}_{n-1|n-1}\gamma_{n-1}-\gamma_{n-1}+1,\\ \notag
    \rho^{k*}_{n|n-1}=&\rho^{k*}_{n-1|n-1}/\gamma_{n-1},
\end{align}
then the update for $q_n(\Lambda_k)$ is 
\begin{align} \notag
    q_n(\Lambda_k)=&\mathcal{G}(\Lambda_k;\eta^{k}_{n|n},\rho^{k}_{n|n}),\\ \label{eq:independent rate update}
    \eta^{k}_{n|n}=&\eta^{k*}_{n|n-1}+\sum_{j=1}^{M_n} q_n(\theta_{n,j}=k),&  \\ \notag
    \rho^{k}_{n|n}=&\rho^{k*}_{n|n-1}/(\rho^{k*}_{n|n-1}+1). %\frac{\rho^{k}_{n|n-1}}{\rho^{k}_{n|n-1}+1}\\
\end{align}
Consequently, each $q_n(\Lambda_k)$ can be updated independently.
\subsubsection{update for $q_n(\theta_n)$}
The variational distribution $q_n(\theta_n)$ can be updated according to \eqref{eq:general ca update}, \eqref{eq:obs prior} and \eqref{eq:assoc prior}, i.e.
\begin{align} \notag
    q_n(\theta_n)\! &\propto \text{exp}\left(\E_{q_n(\Lambda)q_n(X_n\!)}\!\log p(\theta_n|M_n,\Lambda)\log p(Y_n|\theta_n,X_n) \right)\\ \notag
    &=\prod_{j=1}^{M_n}\text{exp}\left(\E_{q_n(\Lambda)q_n(X_n)}\log p(\theta_{n,j}|\Lambda) \ell(Y_{n,j}|X_{n,\theta_{n,j}})\right)\\ \label{eq:vari theta independent}
    &\propto\prod_{j=1}^{M_n}q_n(\theta_{n,j}),
\end{align}
with each $q_n(\theta_{n,j})$ being 
\begin{align} \notag
    q_n(\theta_{n,j})\propto & \text{exp}\left(\E_{q_n(\Lambda)q_n(X_n)}\log p(\theta_{n,j}|\Lambda) \ell(Y_{n,j}|X_{n,\theta_{n,j}})\right)\\ \label{eq:update theta}
    \propto&\frac{\overbar{\Lambda}_0}{V}\delta[\theta_{n,j}=0]+\sum_{k=1}^K\overbar{\Lambda}_k l_k\delta[\theta_{n,j}=k],\\ \notag
    \overbar{\Lambda}_k=&\exp(\E_{q_n(\Lambda)}\log\Lambda_k)=\exp(\psi(\eta^{k}_{n|n}))\rho^{k}_{n|n}, \\  \notag
    l_k=\mathcal{N}(&Y_{n,j};H\mu_{n|n}^k,R_k)\text{exp}(-0.5\text{Tr}(R_k^{-1}H\Sigma_{n|n}^k H^\top)),
\end{align}
where $\overbar{\Lambda}_k$ is for $k=0,1,...,K$, $l_k$ is for $k=1,...,K$, and $\psi(\cdot)$ is the digamma function. The $\mu_{n|n}^k,\Sigma_{n|n}^k$ are given in \eqref{eq:KF update}, and \eqref{eq:update theta} results from substituting \eqref{eq:single assoc prior}, (\ref{measurement model}), \eqref{eq:KF update} and \eqref{eq:independent rate update}. Detailed derivations of \eqref{eq:update theta} are provided in Appendix \ref{apx:association update deriv}.
It can now be seen that the variational distributions for each association variable are independent, and each of them is a categorical distribution specified (with a proportional constant) in \eqref{eq:update theta}. Thus, the updates for the association $\theta_n$ can be independently carried out for each $\theta_{n,j}$, Similar those for $X_n,\Lambda$.
% Similar to the updates for $X_n,\Lambda$, the updates for the association $\theta_n$ can also be independently carried out for each $\theta_{n,j}$.
\vspace{-0.8em}
\subsection{Initialisation}
As a deterministic algorithm, the local optimum (i.e. $q_n^*(X_n,\theta_n,\Lambda)$) found by the CAVI is sensitive to the initial variational distribution. A good initialisation can prevent the algorithm from being trapped in a bad local minimum and may also lead to faster convergence. Therefore, a good initial variational distribution is important for our tracker to perform an accurate and efficient multi-object tracking task. 

A simple choice of initialisation is to employ the predictive distribution of target state $\hat{p}_{n|n-1}(X_n)$ to initialise $q_n(X_n)$. An alternative option utilises the prior distribution $p(\theta_n|M_n,\Lambda)$ in \eqref{eq:assoc prior} to initialise $q_n(\theta_n)$. However, in our experiments, these straightforward initialisations often caused the CAVI to converge to undesirable local minima, resulting in track loss. 

Therefore, this section presents an enhanced initialisation strategy for $q_n(\theta_n)$. Specifically, the setup for the initial association distribution $q_n^{(0)}(\theta_n)$ is detailed. After this initialisation, CAVI is performed by iteratively updating $q(\Lambda)$ and $q_n(X_n)$, which require the initial association distribution $q_n^{(0)}(\theta_n)$, introduced below, for the calculation of the update form.
% Despite being straightforward, in our experiments, this initialisation often caused the CAVI to converge to undesirable local minima, resulting in track loss. 

The initial association distribution $q_n^{(0)}(\theta_n)$ is defined as follows. First, independent initial variational distributions are assumed for each $\theta_{n,j}$ so as to be consistent with the updated form in \eqref{eq:vari theta independent}, that is,
\begin{align}  \label{eq:init independent theta}
q_n^{(0)}(\theta_n)=&\prod_{j=1}^{M_n}q_n^{(0)}(\theta_{n,j})
\end{align}
Recall that the objective of the CAVI in our setup is to minimise the KL divergence between $q_n(X_n)q_n(\theta_n)q(\Lambda)$ and the target distribution $\hat{p}_n(X_n,\theta_n,\Lambda|Y_n)$. Therefore, the best (but intractable) initial distribution for $q_n(\theta_{n,j})$ may be the marginal target distribution $\hat{p}_n(\theta_{n,j}|Y_n)$.
A suboptimal choice would be $\hat{p}_n(\theta_{n,j}|Y_{n,j})$, which, under the same approximate filtering law $\hat{p}_n$, only incorporates the information of the corresponding measurement $Y_{n,j}$ rather than all measurements $Y_n$. However, this suboptimal distribution is still intractable except for the special case that the Gamma prior $\hat{p}_n(\Lambda_k)$ shares the same scale parameter $\rho_{n|n-1}^{k*}$ in \eqref{eq:rate predictive prior computation} for all $k=1,2,...,K$. As a result, we further approximate the intractable $\hat{p}_n(\theta_{n,j}|Y_{n,j})$ with $\hat{p}_n(\theta_{n,j}|Y_{n,j},\Lambda=\hat{\Lambda})$, which can be regarded as an empirical Bayes approximation. 
We expect this intialisation to become quite accurate as $n$ increases and our uncertainty about $\Lambda$ decreases, but in any case the variational method will still be valid as the initialisation is essentially arbitrary.  

Therefore, we suggest using $\hat{p}_n(\theta_{n,j}|Y_{n,j},\Lambda=\hat{\Lambda})$ --- an empirical Bayes approximation to the suboptimal initial distribution $\hat{p}_n(\theta_{n,j}|Y_{n,j})$ --- as the individual initial distribution $q_n^{(0)}(\theta_{n,j})$ in \eqref{eq:init independent theta}. This can be expressed as follows: 
\begin{align} \notag
    q_n^{(0)}(\theta_{n,j})=&\hat{p}_n(\theta_{n,j}|Y_{n,j},\Lambda=\hat{\Lambda})\propto \hat{p}_n(\theta_{n,j},Y_{n,j}|\Lambda=\hat{\Lambda})
    % \\\propto& \int\hat{p}_n(X_n,\theta_n,Y_n)dY_{n,j-}d\theta_{n,j-}dX_n
    \\ \notag
    =&p(\theta_{n,j}|\Lambda=\hat{\Lambda})\int \ell({Y}_{n,j}|X_{n,\theta_{n,j}})\hat{p}_{n|n-1}(X_n)dX_n\\ \label{eq:init individul theta}
    =&\frac{\hat{\Lambda}_0}{V}\delta[\theta_{n,j}=0]+\sum_{k=1}^K\hat{\Lambda}_k l_k^{0}\delta[\theta_{n,j}=k],\\   \notag
    l_k^{0}=&\mathcal{N}(Y_{n,j};H\mu^{k*}_{n|n-1},H\Sigma^{k*}_{n|n-1}H^\top+R_k),\\   \notag
    \hat{\Lambda}=&\E_{q_{n-1}^*(\Lambda)}\Lambda=[\hat{\Lambda}_0,\hat{\Lambda}_1,...,\hat{\Lambda}_K],\\   \notag
    \hat{\Lambda}_k=&\E_{q_{n-1}^*(\Lambda)}\Lambda_k=\eta^{k*}_{n-1|n-1}\rho^{k*}_{n-1|n-1},
\end{align}
% \begin{align} \notag
%     q_n^{(0)}(\theta_{n,j})&=\hat{p}_n(\theta_{n,j}|Y_{n,j},\Lambda=\E_{q_{n-1}^*(\Lambda)}\Lambda)\\ \label{eq:init individul theta}
%     =&\frac{\hat{\Lambda}_0}{V}\delta[\theta_{n,j}=0]+\sum_{k=1}^K\hat{\Lambda}_k l_k^{0}\delta[\theta_{n,j}=k],\\   \notag
%     l_k^{0}=&\mathcal{N}(Y_{n,j};H\mu^{k*}_{n|n-1},H\Sigma^{k*}_{n|n-1}H^\top+R_k),\\   \notag
%     \hat{\Lambda}_k=&\E_{q_{n-1}^*(\Lambda)}\Lambda_k=\eta^{k*}_{n-1|n-1}\rho^{k*}_{n-1|n-1},
% \end{align}
where $l_k^{0}$ is for $k=1,2,...,K$, and $\hat{\Lambda}_k$ ($k=0,1,...,K$) is the estimate of $\Lambda_k$ at the time step $n-1$. $\hat{p}_{n|n-1}(X_n)$ is the predictive prior and $\mu^{k*}_{n|n-1},\Sigma^{k*}_{n|n-1}$ are given in \eqref{eq:state predictive prior computation}. Each $q_n^{(0)}(\theta_{n,j})$ is set to be $\hat{p}_n(\theta_{n,j}|Y_{n,j},\Lambda=\hat{\Lambda})$, where $\hat{p}_n$ is the approximate filtering probability law defined in Section \ref{sec:approximate filtring objective} and \eqref{eq: prob law}. The evaluation of $\hat{p}_n(\theta_{n,j}|Y_{n,j},\Lambda=\hat{\Lambda})$ in \eqref{eq:init individul theta} is a direct result of \eqref{eq: prob law}, and its detailed derivation is shown in Appendix \ref{apx:initialisation law}. This initial variational distribution $q_n^{(0)}(\theta_{n,j})$ has a similar form as the updated $q_n(\theta_n)$ in \eqref{eq:vari theta independent} and \eqref{eq:update theta}, and can also be independently evaluated for each $\theta_{n,j}$.

\subsection{ELBO Computation}
It is useful to compute the ELBO in \eqref{eq: with para ELBO} for each recursion of updates. The reasons are: 1) It provides a convenient way to monitor the convergence. 2) It is practically useful to check the implementation of the algorithm. 3) It reflects the quality of the found variational distribution (this property will be further illustrated and applied in Section \ref{sec: basic vatiational localisation} and \ref{sec: missed objects relocation}). Recall that the ELBO is guaranteed to increase at each recursion of updates, and the objective of our tracker is to maximise the ELBO. By monitoring the increment of ELBO, we can assume that the algorithm converges once the increment of ELBO at the latest recursion falls below a certain threshold. Moreover, if the increment of ELBO resulting from any update is negative, the implementation of the algorithm must be incorrect.

The straightforward computation of the ELBO in \eqref{eq: with para ELBO} involves inversions of non-diagonal matrices $\Sigma^{k}_{n|n}$ and $\Sigma^{k*}_{n|n-1}$, which can be computationally prohibitive for a large number of objects. Here we adopt a relatively efficient way to evaluate the ELBO based on a more easy-to-compute marginal likelihood. Such a technique was first used to compute the ELBO in Section 5.3.7 in \cite{beal2003}. Specifically, if the $q(X_n)$ is the latest updated variational distribution (i.e., $q(\theta_n)$ has not been updated with the current $q(X_n)$), the ELBO in \eqref{eq: with para ELBO} can be exactly expressed as follows,
\begin{align} \notag
    \mathcal{F}(q_n&)=\sum_{j=1}^{M_n}\sum_{k=0}^K q(\theta_{n,j}=k)\left(\psi(\eta_{n|n}^k)+\log\frac{\rho_{n|n}^k}{q(\theta_{n,j}=k)} \right)\\ \notag
    &-\frac{1}{2}\sum_{j=1}^{M_n}\sum_{k=1}^Kq_n(\theta_{n,j}=k)\left(Y_{n,j}^\top R_{k}^{-1}Y_{n,j}+\log\det R_k\right)\\ \notag
    &+\frac{1}{2}\sum_{k=1}^K\left({\overbar{Y}_n^k}^\top{\overbar{R}_n^k}^{-1}\overbar{Y}_n^k-{T_n^k}^\top{S_n^k}^{-1}T_n^k+\log \frac{\det \overbar{R}_n^k}{\det S_n^k} \right)\\ \notag
    &+\left[\frac{D}{2}\log 2\pi+\log\frac{1}{V}\right]\sum_{j=1}^{M_n}q_n(\theta_{n,j}=0)\\ \notag
    &-\sum_{k=0}^K\eta_{n|n}^k\rho_{n|n}^k- \text{KL}(q_n(\Lambda)||\hat{p}_{n|n-1}(\Lambda))\\
    \label{eq: efficient ELBO}
    &-\frac{1}{2}DM_n\log2\pi-\log (M_n!),
\end{align} 
% \begin{align}
%     &+\sum_{k=0}^K \Bigl[(\eta_{n|n-1}^{k*}-\eta_{n|n}^{k}) \psi(\eta_{n|n}^{k})+\log \Gamma(\eta_{n|n}^{k})\\
%     &\ \ \ \ \ \ \ +\eta_{n|n-1}^{k*}\log \rho_{n|n}^{k}+\eta_{n|n}^{k}(1-\rho_{n|n}^{k}/\rho_{n|n-1}^{k*}) \Bigr]\\
%     &-\sum_{k=0}^K\left(\log\Gamma(\eta_{n|n-1}^{k*})+\eta_{n|n-1}^{k*}\log\rho_{n|n-1}^{k*}\right)
% \end{align}
% \begin{align} \notag
%     &\text{KL}(q_n(\Lambda)||\hat{p}_n(\Lambda))=\sum_{k=0}^K \Biggl[\eta_{n|n-1}^{k*}\log \frac{\rho_{n|n-1}^{k*}}{\rho_{n|n}^{k}}-\log \frac{\Gamma(\eta_{n|n}^{k})}{\Gamma(\eta_{n|n-1}^{k*})}\\
%     &\ \ \ \ \ +(\eta_{n|n}^{k}-\eta_{n|n-1}^{k*}) \psi(\eta_{n|n}^{k})+\eta_{n|n}^{k}\frac{\rho_{n|n}^{k}-\rho_{n|n-1}^{k*}}{\rho_{n|n-1}^{k*}} \Biggr]
% \end{align}
\begin{align} \notag
    \text{KL}(q_n&(\Lambda)||\hat{p}_{n|n-1}(\Lambda))=\sum_{k=0}^K \Bigl[-\eta_{n|n-1}^{k*}\log \rho_{n|n}^{k}-\log \Gamma(\eta_{n|n}^{k})\\ \notag
    & +(\eta_{n|n}^{k}-\eta_{n|n-1}^{k*}) \psi(\eta_{n|n}^{k})+\eta_{n|n}^{k}(\rho_{n|n}^{k}/\rho_{n|n-1}^{k*}-1) \Bigr]\\  \label{eq: kl for ELBO}
    &+\sum_{k=0}^K\left(\log\Gamma(\eta_{n|n-1}^{k*})+\eta_{n|n-1}^{k*}\log\rho_{n|n-1}^{k*}\right),
\end{align}
where $\overbar{Y}_n^k,\overbar{R}_n^k$ are given in \eqref{eq:pseudomeas R}; \eqref{eq:pseudomeas Y}, and
$T_n^k,S_n^k$ are given in the Kalman filter update \eqref{eq:KF update}, and $\eta_{n|n}^{k},\rho_{n|n}^{k},\eta_{n|n-1}^{k*},\rho_{n|n-1}^{k*}$ are given in \eqref{eq: rate update}\eqref{eq:rate predictive prior computation}. Note that all these terms have already been computed for updating the latest $q_n(X_n)$ and $q_n(\lambda)$.
% $\psi(\cdot)$ is the digamma function and
$\Gamma(\cdot)$ is the gamma function. $D$ is the dimension of the vector of a single measurement $Y_{n,j}$, and $V,R_k$ are defined in (\ref{measurement model}). The laborious derivation for \eqref{eq: efficient ELBO} is presented in Appendix \ref{apx: elbo derivation}. We emphasise again that the evaluation in \eqref{eq: efficient ELBO} only equals the exact ELBO in \eqref{eq: with para ELBO} when $q_n(X_n)$ is the latest updated variational distribution.

The last lines in \eqref{eq: efficient ELBO} and \eqref{eq: kl for ELBO} are constants through each iteration of CAVI, and hence can be omitted when monitoring the increments of $\mathcal{F}(q_n)$. In a common tracking scenario where the dynamic transition in \eqref{eq: dynamic transition} and the positional measurement function are both independent for different coordinates,  $R_k,\overbar{R}_n^k,S_n^k$ are all diagonal matrices.  Subsequently, all matrix inversions in \eqref{eq: efficient ELBO} can be computed easily.

\subsection{Algorithm}
Finally, the approximate filtering algorithm at time step $n$ of our VB-AbNHPP tracker is summarised in Algorithm \ref{Algo:tracker}. In brief, this algorithm first 1) evaluates the predictive distribution for $X_n,\Lambda$; then 2) sets the initialisation variational distribution for $\theta_n$; and 3) carries out the CAVI until convergence. 
% There are many steps in the algorithm that can be parallelised. Specifically, the evaluation of $q_n^{(0)}(\theta_n)$ and the update of $q_n(\theta_n)$ can be paralleled for each association, and the update for $q_n(X_n)$ can be paralleled for each object. 
For each CAVI iteration in Algorithm \ref{Algo:tracker}, the primary computational demand arises from updating $q_n(X_n)$ and $q_n(\theta_n)$. The cost of updating $q_n(X_n)$ is essentially computing pseudo measurements \eqref{eq:pseudomeas R}-\eqref{eq:pseudomeas Y} and applying the Kalman filter \eqref{eq:KF update} for each of the $K$ objects, all of which can be parallelised across $K$. The cost for updating $q_n(\theta_n)$ is primarily calculating $(K+1)M$ likelihoods, specifically the $\overbar{\Lambda}_kl_k$ in \eqref{eq:update theta}, and these $(K+1)M$ calculations can also be parallelised.
Moreover, the ELBO computation can also be further accelerated by exploring the parallel computing of the summands in \eqref{eq: efficient ELBO}.

The ELBO evaluation step \eqref{eq: efficient ELBO} in Algorithm \ref{Algo:tracker} can be neglected if the computational power is limited. In this case, we can either monitor the change in statistics of the variational distribution (e.g. the mean of $q(X_n)$) to check for convergence, or we can directly set the algorithm to perform a predefined number of iterations.

\begin{algorithm} 
\SetAlgoLined
\algsetup{linenosize=\tiny}
  \scriptsize
\textbf{Require}: $q^*_{n-1}(X_{n-1}), q^*_{n-1}(\Lambda), Y_n,M_n$, maximum iteration limit $I$, tolerance threshold $\epsilon>0$.\\
\textbf{Output}: $q^*_{n}(X_{n}),q^*_{n}(\theta_{n}),q^*_{n}(\Lambda)$ .\\
% \textbf{Output}: $\omega^{*(p,d)}_{n}$, $p(\bm{y}_{1:n}|\mathcal{D}_d)$, and $\{\bm{\mu},\bm{P}\}_{n|n}^{(p,d)}$.  \\
% {the termination criterion is not met}
\textit{Prediction for $X_n$}: Evaluate $\mu^{k*}_{n|n-1},\Sigma^{k*}_{n|n-1}$ for $\hat{p}_{n|n-1}(X_n)$ via \eqref{eq:state predictive prior computation}.\\
\textit{Prediction for $\Lambda$}: Evaluate $\eta^{k*}_{n|n-1},\rho^{k*}_{n|n-1}$ for $\hat{p}_{n|n-1}(\Lambda)$ via \eqref{eq:rate predictive prior computation}.\\
% \textit{Initialization for variational distribution:}\\
\For {$j=1,2,...,M_n$}
{ Evaluate $q_n^{(0)}(\theta_{n,j})$ via \eqref{eq:init individul theta}, and initialise $q_n(\theta_{n,j})\gets q_n^{(0)}(\theta_{n,j})$.
}
% \textit{Coordinate ascent update:}

   \For {$i=1,2,...,I$}  
   {
   \For {$k=1,2,...,K$}
   { 
   Evaluate $\eta^{k}_{n|n},\rho^{k}_{n|n}$ according to \eqref{eq:independent rate update} for updating $q_n(\Lambda)$.\\
   }
   \For {$k=1,2,...,K$}
   { 
   Evaluate $\overbar{R}_n^k,\overbar{Y}_n^k$ according to \eqref{eq:pseudomeas R}\eqref{eq:pseudomeas Y}.\\
   Compute $\mu_{n|n}^k,\Sigma_{n|n}^k,T_n^k,S_n^k$ via \eqref{eq:KF update} for updating $q_n(X_{n,k})$.
   }
   Evaluate the ELBO $\mathcal{F}_{n}^{(i)}$ according to \eqref{eq: efficient ELBO}.\\
    \If{$\mathcal{F}_{n}^{(i)}-\mathcal{F}_{n}^{(i-1)}<\epsilon \land i\geq 2$}
      {
      \textbf{break} 
      }
    \For {$j=1,2,...,M_n$}
    { Update $q_n(\theta_{n,j})$ according to \eqref{eq:update theta}.
    }
   }
   Set $q^*_{n}(X_{n})=\prod_{k=1}^Kq_n(X_{n,k})$, $q^*_{n}(\Lambda)=\prod_{k=1}^K q_n(\Lambda_{k})$ and $q^*_{n}(\theta_{n})=\prod_{j=1}^{M_n}q_n(\theta_{n,j}).$
 \caption{VB-AbNHPP tracker with rate estimation at time step $n$}
 \label{Algo:tracker}
\end{algorithm}

\vspace{-1.2em}
\section{Variational object localisation under heavy clutter with non-informative prior} \label{sec: basic vatiational localisation}
This section presents a novel technique for efficiently locating a object amidst a high level of clutter using the variational Bayes framework. Different from tracking scenarios where the previous time step's tracking result can provide an informative prior of the object's position, the localisation strategy discussed here does not require such a strong informative prior. As a result, this technique can be useful for relocating objects once they lose track, or for initialising a tracking algorithm where the object positions are hardly known. Moreover, it has the potential to be developed into a strategy for estimating the number of objects. This section places an emphasis on clarifying the technique's rationale and thus only considers the localisation of a single object. We will extend the relocation technique to handle multiple missed objects in Section \ref{sec: missed objects relocation}, and integrate it into the complete VB-AbNHPP tracking algorithm.

We assume that the object to be localised follows the NHPP model, and that both the Poisson rates $\Lambda$ and the measurement covariance are known to us (e.g., have been estimated in advance using the proposed method). 
% Our strategy aims for a challenging scenario where the target Poisson rate can be as low as 2, and under dense clutter which are uniformly distributed in the survey area.
Our strategy is designed for challenging scenarios where the clutter number in the survey area can be hundreds of times greater than the object's measurement number. We aim to efficiently locate the object only using measurements received at a single time step, and the object can be anywhere in the survey area. Currently, our strategy can handle object Poisson rates as low as 3. In more challenging scenarios where the object Poisson rate is lower, the localisation may be achieved by using measurements from multiple time steps, and this case will be discussed in future.

\begin{figure}
    \centering
    \includegraphics[width=7cm]{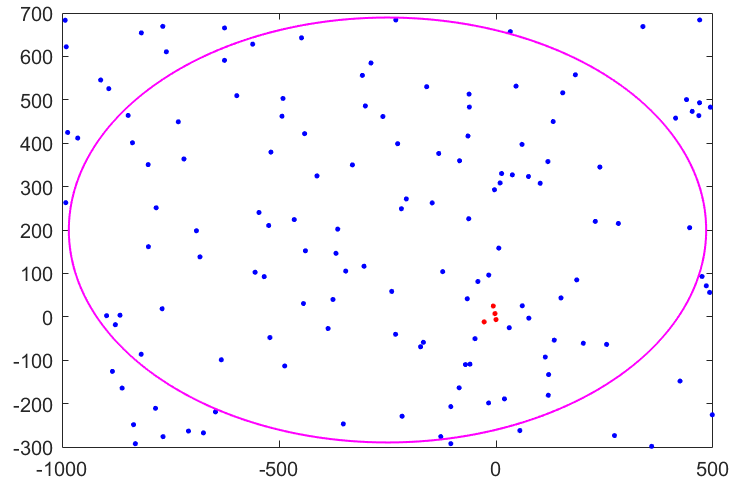}
    \caption{Measurements received at time step $n$; ground-truth object position is $(0,0)$; purple ellipse is the $95\%$ uncertainty ellipse of the object position Gaussian prior; four red dots are object measurements, and the blue dots are clutters.}
    \label{fig:allobs}
      \vspace{-1em}
\end{figure}
% The difficulty of locating the 
% Given a non-informative position prior of the target, When a few measurements generated by the target are buried in heavy clutter,  
% The main concept in our procedure is to use the ELBO to find
% \subsection{Canonical case: single object localization } \label{sec: single object loc}
\vspace{-0.8em}
\subsection{Problem Setting}
To clarify the main concept of our procedure, let us consider a typical task of locating a single object under heavy clutter at time step $n$. This procedure can be performed at the initialisation ($n=0$) or at any time step ($n\neq 0$) when the track is lost. In this example, the object number $K=1$, the object state $X_n=X_{n,1}$, and the received measurements $Y_n$ are shown in Fig. \ref{fig:allobs}.

We assume that only position measurements of the object state $X_{n,1}$ can be directly observed, with the measurement matrix $H$ and measurement noise covariance matrix $R_{n,1}=100\text{I}_2$, where $\text{I}_2$ is a $2D$ identity matrix. The object measurement number is assumed to be Poisson distributed with a Poisson rate of $\Lambda_1=4$, and these object-oriented measurements are buried in the uniformly distributed clutter with a heavy clutter density of $\Lambda_0/V=10^{-4}$. The exact object position, i.e. $HX_{n,1}$, is $(0,0)$. However, we have no other information regarding this position except for a rather flat Gaussian prior $p(X_{n,1})$ and the $95\%$ uncertainty ellipse of position is shown in Fig. \ref{fig:allobs}. We denote this ellipse as the survey area, and we will only search for the object within this area.
% Note that only in this section $p(X_{n,1})$ is used for notational convenience to represent the prior that is specifically designed for the missed target. It should not be confused with the model prior that can be computed with the given $p(X_0)$ and target state transition $p(X_n|X_{n-1})$.
Note that, strictly speaking, $p(X_{n,1})$ should be computed by using the given initial prior $p(X_0)$ and transition $p(X_n|X_{n-1})$. In this section, we directly assign a highly uncertain prior to $p(X_{n,1})$, which is specifically designed for missed objects.

In such scenarios, locating a object can be be difficult due to the dense clutter,
which may lead to a multimodal posterior and further confuse the algorithm from finding the true object location. Specifically, when clutter measurements are occasionally densely displayed in some small regions, these regions become deceptive candidates for the object’s true location, leading to several competitive modes in the posterior of the object's position. Taking Fig. \ref{fig:allobs} as an example, it may be difficult to 
determine whether the object is located at $(0,0)$ or $(10,300)$, as both of these two locations have many measurements around them.

\vspace{-0.7em}
\subsection{Variational Localisation Strategy} \label{sec: single object vari loca strate}
Now we formulate the object localisation task within the variational inference framework, where we aim to approximate the exact posterior $p(X_{n,1},\theta_n|Y_n,\Lambda)$ with a variational distribution $q(X_{n,1})q(\theta_n)$. Here, the CAVI features similar updates as in \eqref{eq:KF update} and \eqref{eq:update theta}: the update for $q(X_{n,1})$ is the same as \eqref{eq:KF update} except that $\mu^{k*}_{n|n-1}$ and $\Sigma^{k*}_{n|n-1}$ are replaced by the mean and covariance of the prior $p(X_{n,1})$, and the update for $q(\theta_n)$ is the same as \eqref{eq:update theta} except for replacing $\overbar{\Lambda}$ with the exact known $\Lambda$. The standard CAVI with a single initialisation, when applied to the considered task, is prone to getting trapped in local optima and the converged variational distribution only accommodates a single mode of the posterior distribution. To overcome it, the main concept of our localisation technique is to identify multiple competitive modes in the exact posterior $p(X_{n,1},\theta_n|Y_n,\Lambda)$ by implementing multiple runs of the CAVI in parallel with different initialisations. We then select the most likely mode by evaluating the ELBO calculated for each mode.

Specifically, each run of CAVI starts from initialising the association distribution $q(\theta_n)$ with the $q^{(0)}(\theta_n)$,
\vspace{-0.5em}
\begin{align}   \label{eq: Loca demo initial}
\begin{aligned}
q^{(0)}(\theta_n)=&\prod_{j=1}^{M_n}q^{(0)}(\theta_{n,j})\\
    q^{(0)}(\theta_{n,j})
    \propto&\frac{\Lambda_0}{V}\delta[\theta_{n,j}=0]+\Lambda_1 l_1^{0}\delta[\theta_{n,j}=1],\\  
    l_1^{0}=&\mathcal{N}(Y_{n,j};m_s,C+R_{n,1}), 
\end{aligned}
\end{align}
where $N$ is the total number of initialisations, and $m_s$ ($s=1,2,....,N$) and $C$ are manually selected for each initialisation and have the same dimensions as $Y_{n,j}$ and $R_{n,1}$, respectively. 

Such an initial distribution mimics the form of the initialisation in \eqref{eq:init individul theta}, except that $H\mu^{k*}_{n|n-1}$ and $H\Sigma^{k*}_{n|n-1}H^\top$ (i.e., the mean and the covariance of the predictive prior $\hat{p}_{n|n-1}(X_{n,k})$) are replaced by $m_s$ and $C$, respectively.
In a nutshell, the localisation procedure for the considered task is three-step:
\begin{enumerate}[Step 1:]
\item Choose a $C$, and assign a series of values for $m_s$ ($s=1,2,...,N$) such that the union of all the $95\%$ error ellipses of $\mathcal{N}(m_s,C)$ can cover the $95\%$ error ellipse of the prior $p(X_{n,1})$. See Fig. \ref{fig:all init} for example.
% \item Assign a series of values for $m$ with the same $C$ such that the union of $95\%$ error ellipses of all these $\mathcal{N}(m,C)$ cover the $95\%$ error ellipse of prior $p(X_{n,1})$. See Fig. \ref{fig:all init} for example.
% \item Assign a series of $95\%$ error ellipses of $\mathcal{N}(m_s,C)$ to cover the $95\%$ error ellipse of the prior $p(X_{n,1})$. See Fig. \ref{fig:all init} for example.
\item Run CAVI with the initialisation in \eqref{eq: Loca demo initial} for each pair of $m_s$ and $C$, then record each $q^*(X_{n,1})$ and its final ELBO. The CAVIs can be run in parallel.
\item Find $q^*(X_{n,1})$ with the highest ELBO, which is the most likely one to capture the true object's position. 
\end{enumerate}
In the following, we present two rationales that this localisation method is inherently based upon.
\begin{remark} \label{remark: initial}
For each $s$, the CAVI is expected to find the most probable object location (the location with the greatest density of measurements) within the $95\%$ confidence ellipse of $\mathcal{N}(m_s,C)$, with the initialisation in \eqref{eq: Loca demo initial} and a properly chosen $C$. See discussions below and in Section \ref{sec: object loca demo}. 
  \end{remark}
 \begin{remark} \label{remark: ELBO}
 The ELBO, which equals the negative KL divergence up to an additive
constant, reflects the quality of the found variational distribution $q^*(X_{n,1})q^*(\theta_n)$. The higher the ELBO, the better the variational distribution we have found. 
\end{remark}

Under these two remarks of the proposed localisation strategy, each CAVI for each initialisation will explore a specific local area (e.g., the green circle in Fig. \ref{fig:all init}) to find the `best' object position in the local area. As the union of these local areas encompasses the survey area, it is anticipated that the object can be successfully localised by a converged variational distribution $q^*(X_{n,1})q^*(\theta_n)$ with the highest ELBO.

Remark \ref{remark: ELBO} is well studied (e.g. \cite{blei2017variational}) while Remark \ref{remark: initial} is only verified empirically for the considered localisation case in heavy clutter. Particularly, we observe that, typically, Remark \ref{remark: initial} holds true only for a small enough $C$. If $C$ is too large, the CAVI either converges to $q^*(X_{n,1})$ that covers the dense measurements nearest to $m_s$ (i.e., the centre of the local area), or traps in the local optimum with $q^*(X_{n,1})=p(X_{n,1})$. Note that however, we still would like $C$ to be as large as possible in practical implementation, since this leads to fewer initialisations and hence less computational power required to explore the entire survey area. More details about the empirical properties we observed about the Remark \ref{remark: initial}, and an informal justification of their rationales is provided in Appendix \ref{apx:remark for initialisation}.

\begin{figure}
    \centering
    \includegraphics[width=7cm]{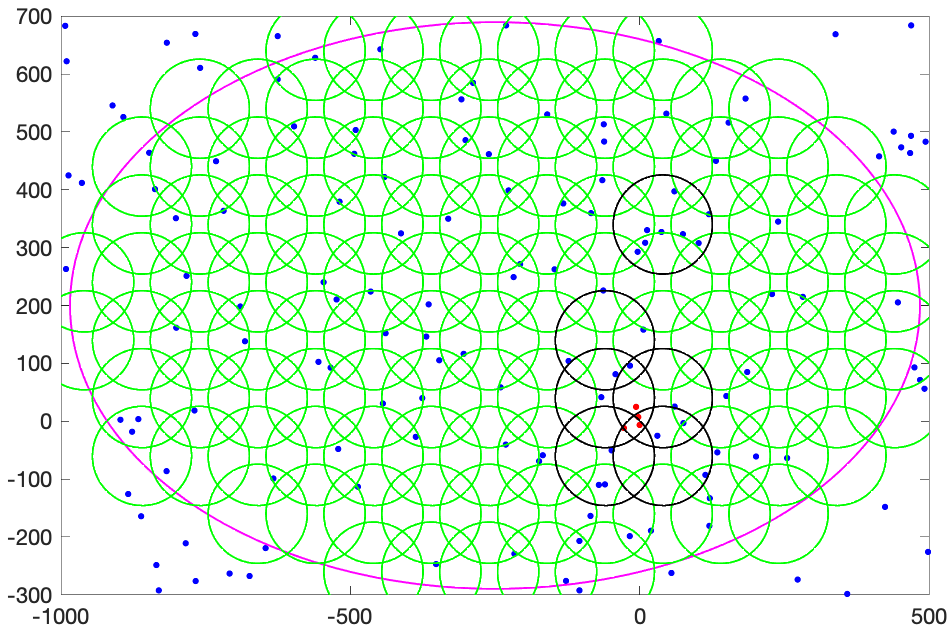}
    \caption{
    An example of all $95\%$ error ellipses (green/black circles) of $\mathcal{N}(m_s,C)$ used for the initialisation in \eqref{eq: Loca demo initial} (total number $N=117$). The effects of six of these initialisations (black circles) are further demonstrated in Fig. \ref{fig:loca demo}.}
    \label{fig:all init}
      \vspace{-1em}
\end{figure}

\vspace{-1em}
\subsection{Demonstration} \label{sec: object loca demo}
We now demonstrate the effectiveness of our localisation strategy with an example task. The initialisation setting is shown in Fig. \ref{fig:all init} where each green circle denotes one initialisation, and all initialisations have the same constant $C=35^2\text{I}_2$. To show how the algorithm works, we further analyse six initialisations among them, highlighted in black circles in Fig. \ref{fig:all init}, and present the results of the CAVI with these initialisations in Fig. \ref{fig:loca demo}. In each subfigure in Fig. \ref{fig:loca demo}, the green circle denotes the initialisation $\mathcal{N}(m_s,C)$; black/grey circles are the object positional variational distribution $q(HX_{n,1})$ evaluated at all iterations, and the color of the circles gradually darkens from grey to black along with the sequence of iterations. Specifically, the lightest grey circle denotes the first iteration's variational distribution and the black circle denotes the converged variational distribution. Note that in Fig. \ref{fig:loca demo}, we only present the position information $q(HX_{n,1})$ by extracting the mean and covariance of the position from $q(X_{n,1})$. All circles/ellipses in Fig. \ref{fig:loca demo} are plotted with a 95\% confidence interval. For each subfigure, we compute the ELBO (with an additive constant) by using its converged variational distribution (shown as the black circle). Likewise, we independently carry out the CAVI for all initialisations in Fig. \ref{fig:all init} until convergence, and compute an ELBO for each initialisation. It turns out that the highest ELBO is $5.671$, and the corresponding iterative update results and the initialisation are shown in Fig. \ref{fig:loca demo}(e). We can see from Fig. \ref{fig:loca demo}(e) that the converged variational distribution successfully captures the exact object position, which demonstrates that our localisation strategy is effective for this task.

This demonstration example can also facilitate verifying two remarks and the efficacy of the initialisation in \eqref{eq: Loca demo initial}. Let us take a closer look at Fig. \ref{fig:loca demo}. %It can be noticed that for all six cases in the Fig. \ref{fig:loca demo}, the location with the greatest density of measurements in the area covered by the green circle is successfully captured by the converged variational distribution $q^*(HX_{n,1})$, which verifies the Remark 1. 
It can be noticed that, whether or not it converges to the true position, the converged variational distribution $q^*(HX_{n,1})$ is able to capture the most probable location with the greatest density of measurements in the area covered by the green circle, which verifies Remark \ref{remark: initial}.
In particular, subfigures (a-b) and (e) all locate the object around its exact position $(0,0)$, and correspondingly achieve the highest three ELBO values, in which case it provides a triple guarantee for the success of our localisation strategy. A noteworthy fact is that the aforementioned deceptive potential object location $(10,300)$ is found by the initialisation in the subfigure (c) with a relatively high ELBO of $4.31$, verifying that the found variational distribution is a competitive mode in the exact posterior. Still, this ELBO is not as high as the ELBO achieved by those variational distributions that capture the exact object location with an ELBO of $5.67$. This demonstrates that the ELBO is a reliable metric that can tell the slight difference between multiple competitive modes in the exact posterior, which verifies Remark \ref{remark: ELBO}. 

\begin{figure}
    \centering
    \includegraphics[width=8.9cm]{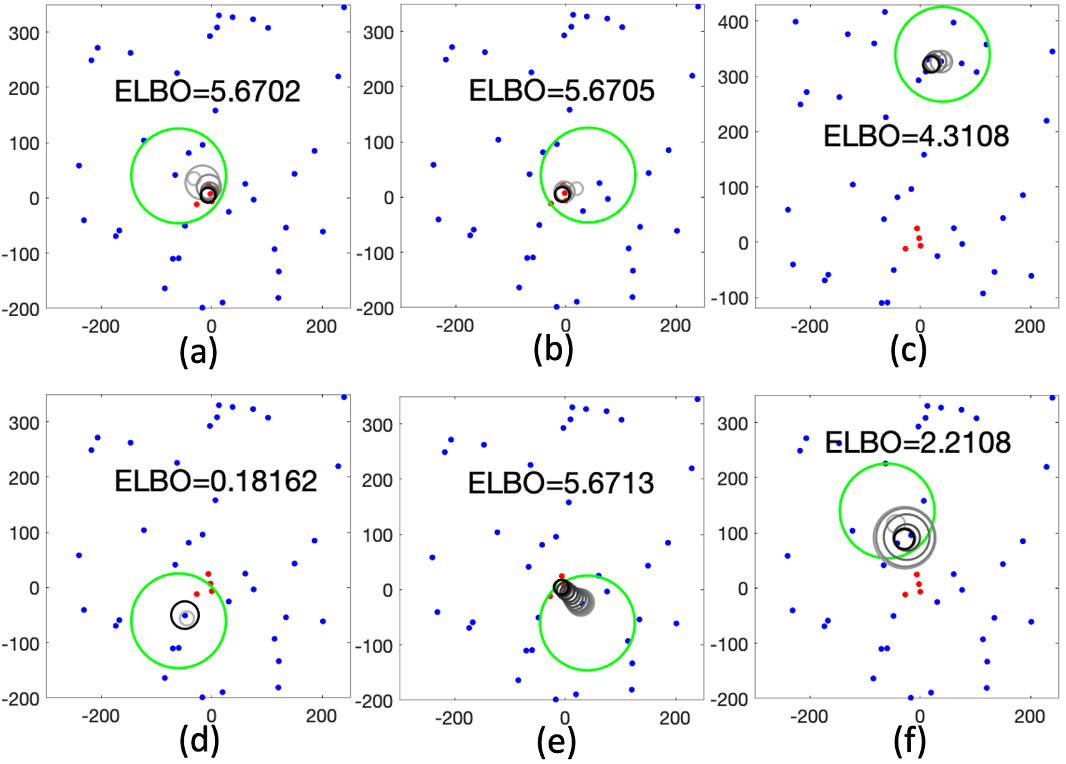}
    \caption{The results of CAVI with six initialisations in Fig. \ref{fig:all init}. The green circle depicts the $\mathcal{N}(m_s,C)$ in \eqref{eq: Loca demo initial}. The grey/black circles represent the iterative updated $q(HX_1)$, whose color gradually darkens with the sequence of iterative updates in CAVI. All circles/ellipses are with 95\% confidence interval.}
    \label{fig:loca demo}
      \vspace{-1em}
\end{figure}

\begin{figure}
    \centering
    \includegraphics[width=8.9cm]{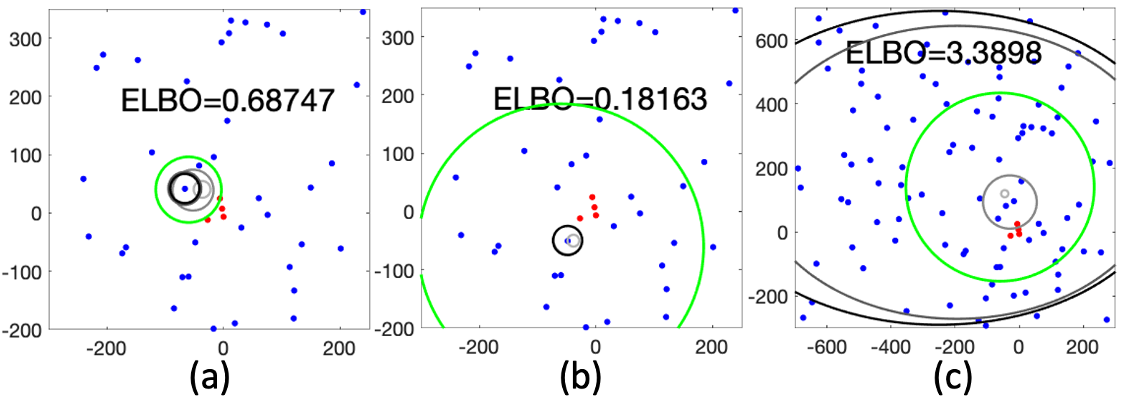}
    \caption{The results of CAVI with three initialisations that employ different $m_s$ and different $C$.}
    \label{fig:locvar}
    \vspace{-1.2em}
\end{figure}

Finally, it should be noted that the selection of $C$ in \eqref{eq: Loca demo initial} is important to our localisation strategy and may need to be tuned for each realisation of the parameter set $\Lambda$.
% If $C$ is too large, the $95\%$ confidence ellipse of $\mathcal{N}(m,C)$ will include too many uniformly distributed clutter, whose geometric mean by law of large number, is very likely to be around $m$ (the centre of the ellipse $\mathcal{N}(m,C)$). This leads the first updated $q(X_1)$ concentrate in the location $m$ with a extreme small covariance, then the subsequent coordinate ascent update either refine $q(X_1)$ to capture the measurement that is nearest to $m$, or lead the $q(X_1)$ to be trapped into the local optimum with $q^*(X_1)=p(X_1)$ if there are several disperse measurements with similar distance to $m$.
Recall that for each $s=1,2,...,N$, if $C$ is too large, the final $q^*(X_{n,1})$ would either concentrate around the measurement nearest to $m_s$, or be trapped in the local optimum with $q^*(X_{n,1})=p(X_{n,1})$.
For example, the subfigures (b) and (c) in Fig. \ref{fig:locvar} employ the same $m_s$ as in Fig. \ref{fig:loca demo}(d) and Fig. \ref{fig:loca demo}(f), respectively, while both having a larger $C$. We can see that both cases in Fig. \ref{fig:locvar} fail to find the most probable object location in the green ellipse. Specifically, even though the green ellipse in Fig. \ref{fig:locvar}(b) covers the exact object location, the final $q^*(X_{n,1})$ still finds the local optimum that is closer to $m_s$. As to the $q^*(X_{n,1})$ in Fig. \ref{fig:locvar}(c), it fails to capture the local optimum in Fig. \ref{fig:loca demo}(f), and instead converges to the prior $p(X_{n,1})$.
In contrast, choosing a smaller $C$ leads to a higher computational burden as it requires more initialisations to cover the high confidence region of the prior, and thus more runs of CAVI to explore the entire survey area. Meanwhile, it may be inefficient as each CAVI will only be able to explore a smaller local region. For example, the initialisation in Fig. \ref{fig:locvar}(a) has the same $m_s$ as in Fig. \ref{fig:loca demo}(a) but with a lower $C$. The final $q^*(X_{n,1})$ captures the most probable object location in the small green ellipse. However, it misses the exact location which would have been covered by a larger $C$ and then captured by the CAVI as in Fig. \ref{fig:loca demo}(a).

\vspace{-1em}
\subsection{Acceleration of the localisation strategy} \label{sec: acce algo}

The computational complexity of the presented localisation strategy is primarily determined by the number of runs of CAVI, which is equivalent to the number of initialisations $N$. In addition to parallelising CAVI runs, enhancing the efficiency of the localisation procedure can be achieved by judiciously reducing $N$. This reduction can be significant by leveraging our knowledge of the object rate $\Lambda_1$ ($4$ in the current problem setting). Recall that the concept of localisation strategy may be understood by first dividing the entire survey area into $N$ local regions, then employing CAVI to explore each local region. However, some of these local regions are not worth exploring if they include only a few measurements, which result in a total count too low to adhere to our Poisson distribution assumption with rate $\Lambda_1$. Hence we can ignore such initialisations to accelerate the algorithm. 

For example, the regions that include $0$ measurement are obviously not worth exploring; and even if we did explore them, without any measurement, there wouldn't be any reasonable localisation result. In particular, we can perform the CAVI only for the local regions that include at least $M^{init}$ measurements. The choice of $M^{init}$ will be discussed in Appendix \ref{apx:choice of Mreloc and Minit} for our full tracking algorithm. Subsequently, the number of eligible initialisations (or green circles) in Fig. \ref{fig:all init} is $N=117$ if $M^{init}=0$. This number reduces to $104$,$70$ and $36$ if $M^{init}$ is set to $1$,$2$, and $3$ respectively. For this specific task, we can still successfully localise the object even if we set $M^{init}=6$, which only requires $6$ initialisations in total.

\section{VB-AbNHPP Tracker with Missed Objects Relocation} \label{sec: missed objects relocation}

We now extend the localisation strategy for a single object in Section \ref{sec: single object vari loca strate} to enable the relocation of multiple lost track objects. This strategy will be employed in the developed VB-AbNHPP tracker, and the procedure of the full VB-AbNHPP tracker with relocation will be given in subsection \ref{sec:full tracking with relocation}. Recall that the standard VB-AbNHPP tracker aims to track $K$ objects with labels $1,2,...,K$. Now define $\mathcal{L}_n$ as the set of labels for objects whose tracks have been lost after a standard tracking procedure at time step $n$. The objective of our relocation strategy at time step $n$ is to relocate these objects before we implement the next tracking procedure at $n+1$. 

Specifically, to relocate multiple objects in $\mathcal{L}_n$, we first assign a non-informative Gaussian prior $\tilde{p}_n(X_{n,h})$, $h\in\mathcal{L}_n$ to each of them. As in Section \ref{sec: basic vatiational localisation}, we assume that the Poisson rate $\Lambda$ is known or has been estimated. Then the target distribution for the relocation task at time step $n$ is 
\vspace{-0.3em}
\begin{align} \notag
    \tilde{p}_n(X_n,\theta_n|Y_n)\propto  \prod_{h\in\mathcal{L}_n}\tilde{p}_n(&X_{n,h})\hat{p}_{n|n-1}(X_{n,k\not\in\mathcal{L}_n})\\[-1em] \label{eq: multi relocate target distribution}
    \times& p(Y_n|\theta_n,X_n)p(\theta_n|M_n,\Lambda), \vspace{-0.3em}
\end{align}
where $X_{n,k\not\in\mathcal{L}_n}$ denotes the state of all objects that are not in $\mathcal{L}_n$, and their predictive prior $\hat{p}_{n|n-1}(X_{n,k\not\in\mathcal{L}_n})$ has been defined in \eqref{eq: approx prior factorize and predictive prior} and computed in the standard tracker via \eqref{eq:state predictive prior computation}. This target distribution $\tilde{p}_n(X_n,\theta_n|Y_n)$ is defined similarly to $\hat{p}_n(X_n,\theta_n|Y_n,\Lambda)$, where $\hat{p}_n$ is defined in \eqref{eq: prob law}, except that the predictive priors for missed objects' states are replaced by a non-informative prior $\tilde{p}_n(X_{n,h})$, $h\in\mathcal{L}_n$.

Subsequently, the relocation task can be formulated within the variational Bayes framework as follows: We aim to minimise the KL divergence between the variational distribution $q_n(\theta_n)q_n(X_n)$ and the target distribution $\tilde{p}_n(X_n,\theta_n|Y_n)$ in \eqref{eq: multi relocate target distribution}. Different from the tracking tasks in Section \ref{sec: vbnhpp} where $q_n(X_n)$ are updated independently for each object, here for the efficiency of the relocation task, we are only interested in updating the $q_n(X_{n,h})$ for missed objects $h\in\mathcal{L}_n$. Moreover, we will localise each object in $\mathcal{L}_n$ one at a time. This is to avoid the exponentially increasing number of (multi-dimensional) local regions that are required to cover the entire multi-dimensional survey area if multiple objects are localised at the same time. For example, if for each single object, $100$ local regions are required to overlap the single-dimensional survey area, then it only requires $100\times3$ initialisations of CAVI in total to locate three objects one at a time; however, it needs $100^3$ initialisations (each in a three-dimensional space) to cover the whole survey area in three-dimensional space if three objects are localised together. 

It should be noted that when we localise the object $h$ ($h\in\mathcal{L}_n$), other objects (including other objects in $\mathcal{L}_n$ and all objects that are tracked properly) have fixed variational distributions $q^*_n(X_{n,h-})$, where $X_{n,h-}$ denotes all the object states in $X_n$ except $X_{n,h}$. This converged variational distribution $q^*_n(X_{n})$ is first obtained from the standard tracking procedure at time step $n$, and should be updated timely as $q^{*new}_n(X_{n,h})q^*_n(X_{n,h-})$ once a lost track object $h$ in $\mathcal{L}_n$ has been successfully relocated by a converged variational distribution $q^{*new}_n(X_{n,h})$. Subsequently, for each $h\in\mathcal{L}_n$, the CAVI is implemented to maximise the following ELBO by iteratively updating $q_n(\theta_n)$ and $q_n(X_{n,h})$,
\vspace{-0.1em}
\begin{align}  \notag
    &\mathcal{F}_{n,h}(q_n(\theta_n),q_n(X_{n,h}))=\E_{q_n(\theta_n)q_n(X_{n,h})q^*_n(X_{n,h-})}\\ \notag
    &\!\!\log\!\frac{p(Y_n|\theta_n,\!X_n)p(\theta_n|M_n,\!\Lambda)\hat{p}_{n|n-1}(\!X_{n,k\not\in\mathcal{L}_n}\!)\!\prod_{i\in\mathcal{L}_n}\!\tilde{p}_n(X_{n,i})}{q_n(\theta_n)q_n(X_{n,h})q^*_n(X_{n,h-})}\\ \notag
    &\!\!= -\text{KL}(q_n(\theta_n)||p(\theta_n|M_n,\Lambda))-\text{KL}(q_n(X_{n,h})||\tilde{p}_n(X_{n,h}))\\ \label{eq: reloc ELBO}
    &\ \ +\E_{q_n(\theta_n)q_n(X_{n,h})q^*_n(X_{n,h-})}\log p(Y_n|\theta_n,X_n)+c, \vspace{-0.2em}
\end{align}
% \begin{align}  \notag
%     &\mathcal{F}_{n,h}(q_n(\theta_n),q_n(X_{n,h}))\\ \notag
%     =&E\log\frac{p(Y_n|\theta_n,X_n,M_n)p(\theta_n|M_n,\Lambda)\tilde{p}_n(X_{n,h})\hat{p}_n(X_{n,h-})}{q_n(\theta_n)q_n(X_{n,h})q^*_n(X_{n,h-})}\\ \notag
%     =& -\text{KL}(q_n(\theta_n)||p(\theta_n|M_n,\Lambda))-\text{KL}(q^*_n(X_{n,h})||\hat{p}_n(X_{n,h}))\\ \label{eq: reloc ELBO}
%     &\!\!\!\!+\E_{q_n(\theta_n)q_n(X_{n,h})q^*_n(X_{n,h-})}\log p(Y_n|\theta_n,X_n,M_n)+C,
% \end{align}
where $c$ is a constant that does not depend on $q_n(\theta_n)$ or $q_n(X_{n,h})$. The specific form of \eqref{eq: reloc ELBO} is derived in Appendix \ref{apx: reloc elbo derivation}. This optimisation is still equivalent to minimising the KL$(q_n(\theta_n)q_n(X_n)||\tilde{p}_n(X_n,\theta_n|Y_n))$, only now $q_n(X_n)$ is set as $q_n(X_{n,h})q^*_n(X_{n,h-})$ and $q^*_n(X_{n,h-})$ is fixed when localising the object $h$. 
% In fact, this KL divergence is guaranteed non-increasing not only when CAVI is implemented for a particular target, but also whenever a new target in $\mathcal{L}_n$ is localised. This is because when we start the localisation strategy. 

By fixing other objects' variational distributions, the variational localisation of the object $h$ naturally takes into account the impact of other successfully localised objects. Specifically, the measurements that are covered by the fixed variational distributions $q^*_n(X_{n,h-})$ naturally have a considerable probability to associate with the successfully tracked objects in $X_{n,h-}$. In order to achieve a high ELBO, typically $q_n(X_{h})$ tends to encompass some dense measurements that are away from other objects, instead of relocating the object $h$ around other properly tracked objects to compete for the association probability of the measurements nearby. %This is why our strategy can relocate multiple objects to different potential locations other than the same place with limited measurements around.
Therefore, multiple objects can be relocated to different potential locations, rather than occupying the same place with limited measurements around them.

The optimisation of the ELBO $\mathcal{F}_{n,h}$ in \eqref{eq: reloc ELBO} requires similar variational updates as in Section \ref{sec: single object vari loca strate}. The update for $q_n(X_{n,h})$ is the same as in \eqref{eq:KF update} except replacing the $\mu^{h*}_{n|n-1}$ and $\Sigma^{h*}_{n|n-1}$ with the mean and covariance in $\tilde{p}_n(X_{n,h})$. Recall that $q^*_n(X_{n,k})=\mathcal{N}(\mu_{n|n}^{k*},\Sigma_{n|n}^{k*})$ and $q_n(X_{n,h})=\mathcal{N}(\mu_{n|n}^{h},\Sigma_{n|n}^{h})$, the update of $q_n(\theta_n)$ is also similar to \eqref{eq:update theta}:
\vspace{-0.4em}
\begin{align} 
\begin{aligned} \label{eq: reloc theta update}
    q_n(\theta_n)&=\prod_{j=1}^{M_n}q_n(\theta_{n,j}),\\[-0.5em]
    q_n(\theta_{n,j})&\propto  \frac{\Lambda_0}{V}\delta[\theta_{n,j}=0]+\sum_{k=1}^K\Lambda_k l_k\delta[\theta_{n,j}=k],\\[-0.3em] 
    % &l_k=\begin{cases}  \mathcal{N}(Y_{n,j};H\mu_{n|n}^{k*},R_k)\text{exp}(-0.5\text{Tr}(R_k^{-1}H\Sigma_{n|n}^{k*} H^\top)), &k\neq l\\
    % \mathcal{N}(Y_{n,j};H\mu_{n|n}^{l},R_k)\text{exp}(-0.5\text{Tr}(R_k^{-1}H\Sigma_{n|n}^{l} H^\top)), &k= l
    % \end{cases}
    l_k&=\begin{cases}  \frac{\mathcal{N}(Y_{n,j};H\mu_{n|n}^{k*},R_k)}{\text{exp}(0.5\text{Tr}(R_k^{-1}H\Sigma_{n|n}^{k*} H^\top))}, &k\neq h\\
    \frac{\mathcal{N}(Y_{n,j};H\mu_{n|n}^{h},R_h)}{\text{exp}(0.5\text{Tr}(R_h^{-1}H\Sigma_{n|n}^{h} H^\top))}, &k= h
    \end{cases}
\end{aligned}
\end{align}

Moreover, to relocate the lost track object $h$ under heavy clutter, it is essential to implement the same variational localisation strategy with multiple initialisations as presented in Section \ref{sec: single object vari loca strate}. That is, we need to first determine an initialisation covariance $C$, then assign a series values of $m_s$ ($s=1,2,...,N$) such that the union of high (e.g. $95\%$) confidence ellipses of $\mathcal{N}(m_s,C)$ can cover the high confidence region of our prior $\tilde{p}_n(X_{n,h})$. Then, multiple CAVIs can be carried out in parallel to search for the most likely object position in a local region $\mathcal{N}(m_s,C)$ for each initialisation:
\vspace{-0.5em}
\begin{align}  \notag 
q_n^{(0)}(\theta_n)&=\prod_{j=1}^{M}q_n^{(0)}(\theta_{n,j})\\[-0.5em] \label{eq: reLoca initial}
    q_n^{(0)}(\theta_{n,j})
    &\propto\frac{\Lambda_0}{V}\delta[\theta_{n,j}=0]+\sum_{k=1}^K\Lambda_k l_k^0\delta[\theta_{n,j}=k],\\[-0.2em]  \notag
    l_k^0=&\begin{cases}  \mathcal{N}(Y_{n,j};H\mu^{k*}_{n|n-1},H\Sigma^{k*}_{n|n-1}H^\top+R_k), &k\neq h\\
    \mathcal{N}(Y_{n,j};m_s,C+R_h), &k= h
    \end{cases} \vspace{-1.5em}
\end{align}
where $s=1,2,....,N$; $\mu^{k*}_{n|n-1}$ and $\Sigma^{k*}_{n|n-1}$ are the mean and covariance of $\hat{p}_n(X_{n,k})$, which have been computed by \eqref{eq:state predictive prior computation} in the standard tracking procedure. Similar to the initialisation \eqref{eq: Loca demo initial} in Section \ref{sec: single object vari loca strate}, \eqref{eq: reLoca initial} mimics the form of \eqref{eq:init individul theta} with a modification to the term $l_h^0$. As discussed in Section \ref{sec: acce algo}, not all of these initialisations are necessary to implement. We will only consider the initiasations in which the $95\%$ confidence ellipses of $\mathcal{N}(m_s,C)$ include at least $M_k^{init}$ measurements, where $M_k^{init}$ is an eligible initialisation threshold. The choice of $M_k^{init}$ will be discussed in Appendix \ref{apx:choice of Mreloc and Minit}.

\newlength{\textfloatsepsave} 
\setlength{\textfloatsepsave}{\textfloatsep} 
\setlength{\textfloatsep}{0.5pt}
\begin{algorithm}  
\SetAlgoLined
\algsetup{linenosize=\tiny}
  \scriptsize
\textbf{Require}: $Y_n,M_n,I,\epsilon,\hat{p}_{n|n-1}(X_n),q^*_{n}(X_{n})$ from the standard tracker, the exact or estimated $\Lambda$, missed target set $\mathcal{L}_n$, relocation initialisation thresholds $M_{h\in\mathcal{L}_n}^{reloc}$, and eligible initialisation thresholds $M_{h\in\mathcal{L}_n}^{init}$.  \\
\textbf{Output}: Successfully tracked/relocated set $\mathcal{K}_n$, refined $q^*_{n}(X_{n})$ and $q^*_{n}(\theta_{n})$.\\
% Set $\mathcal{K}_n=\{1,2,...,K\}\setminus\mathcal{L}_n$.\;
Initialise $\mathcal{K}_n=\{k\in\{1,2,...,K\}: k\notin\mathcal{L}_n\}$.\;
\ForEach { $h\in\mathcal{L}_n$} 
{
Assign an uninformative Gaussian prior $\tilde{p}_n(X_{n,h})$.\;
% Determine $m_1,m_2,...,m_N$ for initialisations \eqref{eq: reLoca initial} to cover a high confidence region of $\tilde{p}_n(X_{n,h})$.\;
% Determine $m_1,m_2,...,m_N$ and $C$ for initialisations \eqref{eq: reLoca initial} such that the union of local regions covers a high confidence region of $\tilde{p}_n(X_{n,h})$.\;
Determine $m_s(s=1,2,...,N)$ and $C$ for initialisations \eqref{eq: reLoca initial} such that the union of $95\%$ \textup{confidence ellipse of} $\mathcal{N}(m_s,C)$ covers $\tilde{p}_n(X_{n,h})$.\;
\For {$s=1,2,...,N$}
{
\If{ 
% \textup{the} $95\%$ \textup{confidence ellipse of} $\mathcal{N}(m,C)$ \textup{includes at least} $M_h^{init}$ \textup{measurements} 
 \textup{there are less than} $M_h^
{init}$ \textup{measurements in the} $95\%$ \textup{confidence ellipse of} $\mathcal{N}(m_s,C)$
}
% Count the number of measurements in the $95\%$ confidence ellipse of $\mathcal{N}(m_s,C)$ as $W$.\;
% \If{ $W<M_h^
% {init}$ }
{
Set $\mathcal{F}_{n,h}^{s}=-\infty$.\;
\textbf{continue} 
}
% \For {$j=1,2,...,M_n$}
% { Evaluate $q_n^{(0)}(\theta_{n,j})$ according to \eqref{eq: reLoca initial}, and set $q_n(\theta_{n,j})\gets q_n^{(0)}(\theta_{n,j})$.
% }
Evaluated $q_n^{(0)}(\theta_{n})$ via \eqref{eq: reLoca initial}, and initialise $q_n(\theta_{n})= q_n^{(0)}(\theta_{n})$.\;
% \textit{Coordinate ascent update:}

\For {$i=1,2,...,I$}  
   {
   Evaluate $\overbar{R}_n^h,\overbar{Y}_n^h$ according to \eqref{eq:pseudomeas R}\eqref{eq:pseudomeas Y}.\;
%   Compute $\mu_{n|n}^h,\Sigma_{n|n}^h$ by replacing $\hat{p}_n(X_{n,k})$ with $\tilde{p}_n(X_{n,h})$ in \eqref{eq:KF update} for updating $q_n(X_{n,k})$.\;
   Update $q_n(X_{n,h})$ via \eqref{eq:KF update} where  $\mu^{k*}_{n|n-1},\Sigma^{k*}_{n|n-1}$ are replaced by the mean and covariance of $\tilde{p}_n(X_{n,h})$.\;
   
   Evaluate the ELBO $\mathcal{F}_{n,h}^{(i)}$ according to \eqref{eq: reloc ELBO}.\;
    \If{$\mathcal{F}_{n,h}^{(i)}-\mathcal{F}_{n,h}^{(i-1)}<\epsilon \land i\geq 2$}
      {
      \textbf{break} 
      }
    % \For {$j=1,2,...,M_n$}
    % { Update $q_n(\theta_{n,j})$ according to \eqref{eq: reloc theta update}.\;
    % }
    Update $q_n(\theta_{n})$ according to \eqref{eq: reloc theta update}.\;
   }
  Set $\mathcal{F}_{n,h}^{s}=\mathcal{F}_{n,h}^{(i)}$, and $q^{s}_{n}(X_{n,h})=q_n(X_{n,h})$.\;
  }
  Find the best index $w=\argmax_s \mathcal{F}_{n,h}^{s}$.\;
  \eIf{$\sum_{j=0}^{M_n} q^w_{n}(\theta_{n,j}=h)\geq M_h^{reloc}$}
  {
  Update $q_n^*(X_n)\gets  q_n^w(X_{n,h})q_n^*(X_{n,h-})$, i.e. update $\mu_{n|n}^{h*}$, $\Sigma_{n|n}^{h*}$ as the mean and covariance of $q_n^w(X_{n,h})$.\; 
  Update $\mathcal{K}_n\gets\mathcal{K}_n\bigcup\{h\}$.\;
  }
  {
  Update $q_n^*(X_n)\gets  \tilde{p}_n(X_{n,h})q_n^*(X_{n,h-})$, i.e. update $\mu_{n|n}^{h*}$,  $\Sigma_{n|n}^{h*}$ as the mean and covariance of $\tilde{p}_n(X_{n,h})$\;
  }
 }  
 Update $q_n^*(\theta_n)$ with the refined $q_n^*(X_n)$, i.e. using \eqref{eq: reloc theta update} where $\mu_{n|n}^{h},\Sigma_{n|n}^{h}$ are replaced by $\mu_{n|n}^{h*},\Sigma_{n|n}^{h*}$.
 \caption{Relocation strategy at time step $n$}
 \label{Algo:relocation}
 % \vspace{-0.3em}
\end{algorithm}

Finally, the relocation strategy of lost track objects for VB-AbNHPP tracker can be summarised in Algorithm \ref{Algo:relocation}. In brief, for each missed object, it first 1) determines the uninformative Gaussian prior and settings for $N$ initialisations; then 2) carries out multiple CAVI with all eligible initialisations as discussed above, and finds the converged variational distribution with the highest ELBO; 3) if the found variational distribution is convincing enough, use it as the obtained posterior of this missed object; otherwise give up the relocation for this object. For each missed object, the primary cost of the relocation strategy in Algorithm \ref{Algo:relocation} stems from performing $N_{\text{elig}}$ full iterations of CAVI, where $N_{\text{elig}}$ is the number of eligible initialisations containing at least $M_h^{\text{init}}$ measurements within the 95\% confidence ellipse. These $N_{\text{elig}}$ CAVIs can proceed in parallel. The major expense for each CAVI is the iterative update of $q_n(\theta_n)$, while the cost of updating $q_n(X_{n,h})$ for an individual object is negligible. The iterative update of $q_n(\theta_n)$ primarily requires computing $(K+1)M_n$ likelihoods (i.e. the $\Lambda_kl_k$ in \eqref{eq: reloc theta update}), all of which are parallelisable.

A final remark on our relocation strategy is that Algorithm \ref{Algo:relocation} does not always relocate all objects in $\mathcal{L}_n$ at every time step $n$. Rather, it only relocates the lost track object when it is confident about its position. This is because even though the obtained variational distribution that achieves the highest ELBO is expected to find the most likely object's location, it may still not capture the true position. If we relocate a object to the wrong position, the algorithm may take many time steps to realise this relocation is wrong and the object may be lost for too long, which may further deteriorate the tracking performance. Therefore, Algorithm \ref{Algo:relocation} will first determine whether the obtained variational distribution is convincing enough by using an effective relocation criterion. If this criterion is satisfied, the algorithm refines $q_n^*(X_n)$ with the obtained variational distribution; if not, the prior $\hat{p}(X_{n,h})$ will be employed to update the $q_n^*(X_n)$. Such an effective relocation criterion, which has already been specified in Algorithm \ref{Algo:relocation}, will be explained in Appendix \ref{apx:choice of Mreloc and Minit}.

In the following subsections, we will first describe the track loss detection strategy, and then explain the rationale of the effective relocation criterion in Algorithm \ref{Algo:relocation}. Finally, we will apply all these techniques to develop the full VB-AbNHPP-RELO tracker in Section \ref{sec:full tracking with relocation}.
% \vspace{-0.5em}
\subsection{Track Loss Detection} \label{sec: lost track detect}
% \vspace{-0.2em} 
Our relocation strategy is only triggered when a track loss is detected. To this end, the (estimated) number of measurements produced by a object is used to monitor whether the object is tracked properly. The exact number of measurements $M_{n,k}$ generated at time step $n$ by the object $k$ can be defined with the association $\theta_n$, i.e. $M_{n,k}=\sum_{j=0}^{M_n} I(\theta_{n,j}=k)$. A convenient point estimate of this object measurement number can be computed as a byproduct of the VB-AbNHPP using the converged variational distribution $q^*_{n}(\theta_n)$, i.e.
\vspace{-0.5em}
\begin{align}  \label{eq: estimaed measurement number}
    \hat{M}_{n,k}=\E_{q^*_{n}(\theta_n)}M_{n,k}=\sum_{j=0}^{M_n} q^*_{n}(\theta_{n,j}=k).\vspace{-0.5em}
\end{align}

If the algorithm fails to track the object $k$, the estimated object position may have fallen in an area with no real measurements but only sporadic clutter; in this case,  $\hat{M}_{n,k}$ is expected to be very small as there may be few measurements associated with object $k$. Therefore, a criterion for detecting a track loss event is that $\hat{M}_{n,k}$ is too low. 

Specifically, we will record $\hat{M}_{t,k}$ for each time step $t$ and each object $k$, and if the total (estimated) number of measurements generated during $\tau_k$ ($\tau_k\geq1$) time steps before the current time step $n$ falls into a certain positive threshold $M^{los}_k$, in other words, if the following event $E$ happens, 
% (i.e. if $\sum_{n=t-\tau_k+1}^t \hat{M}_{n,k}<M_k^{los}$),
\vspace{-0.5em}
\begin{align} \label{eq:lost track event}
    \text{event} \ E:\sum_{t=n-\tau_k+1}^n \hat{M}_{t,k}\leq M_k^{los},\vspace{-0.5em}
\end{align}
we conclude that the track of the object $k$ has been lost. The details about the choice of the parameters $\tau_k$ and $\hat{M}_{n,k}$, including an automatic parameter selection strategy, are presented in Appendix \ref{apx:choice of tau and M}.

\setlength{\textfloatsep}{\textfloatsepsave}
\vspace{-1em}
\subsection{Effective Relocation Criterion}
Similar to the track lost detection in Section \ref{sec: lost track detect}, the (estimated) number of measurements produced by a relocated object is used to determine whether the obtained variational distribution is convincing enough. Specifically, the estimated number of measurements generated from the object $h\in\mathcal{L}_n$ based on the obtained variational distribution $q_n^w(X_{n,h},\theta_n)$ is $\sum_{j=0}^{M_n} q^w_{n}(\theta_{n,j}=h)$. The effective relocation criterion in Algorithm \ref{Algo:relocation} is satisfied when this estimated measurement number is greater than the relocation threshold $M_{h}^{reloc}$, i.e.
\vspace{-0.8em}
\begin{equation} \label{eq: effective relocation criterion}
    \sum_{j=0}^{M_n} q^w_{n}(\theta_{n,j}=h)\geq M_h^{reloc} \vspace{-0.5em}
\end{equation}
% This is because such a high estimated measurement number is resulted from a variational distribution that is more likely to successfully relocates a target that generated a high measurement number at the time step $n$, other than to captures an occasionally densely displced 
The reason is that a high estimated measurement number is more likely to have been generated by the object itself, than by uniformly distributed clutter that happened to cluster in a small area by chance. As a result, a high estimated measurement number makes it more likely that the obtained variational distribution $q_n^w(X_{n,h})$ has successfully relocated the object $h$.
%This is because such a high estimated measurement number is more likely resulted from a target that generated such a high measurement number, other than from a large number of uniformly distributed clutter that is by chance clusters in a small area. Therefore, it is very likely that the obtained variational distribution $q_n^w(X_{n,h})$ successfully relocates the target $h$. 
The selection of the relocation threshold $M_{h}^{reloc}$ will be discussed in Appendix \ref{apx:choice of Mreloc and Minit}.
\vspace{-1em}
\subsection{Full VB-AbNHPP Tracker with Relocation Strategy} \label{sec:full tracking with relocation}
Finally, we are in a position to present the complete VB-AbNHPP tracker with relocation (VB-AbNHPP-RELO) algorithm.
% For simplicity, we assume the target rate and clutter rate $\Lambda$ and measurement covariance $R$ is known to us from the first time step.
For simplicity, we only present the VB-AbNHPP-RELO tracker for tasks where the Poisson rate $\Lambda$ and measurement covariance $R_k$ $(k=1,2,...K)$ are known or have been accurately estimated. This allows our algorithm to robustly track closely-spaced objects under extremely heavy clutter. In this setup, the track loss detection parameters $\tau_k,M_k^{los}$, relocation threshold $M_{k}^{reloc}$, and eligible initialisation threshold $M_{k}^{init}$ ($k=1,2,...,K$) can be easily and automatically selected according to Appendix \ref{apx:parameter selection}. These parameters, along with initialisation covariance $C$ and convergence monitoring parameters $I,\epsilon$, can all be determined before performing tracking tasks and are not required to be updated in subsequent time steps. For convenience, all these parameters will not be listed as inputs in our single time step VB-AbNHPP-RELO procedure summarised in Algorithm \ref{Algo:full tracking with relocation}.
Note that in Algorithm \ref{Algo:full tracking with relocation}, the standard VB-AbNHPP tracker with known $\Lambda$ and $R_k$ is incorporated, which has been outlined in Algorithm 1 in \cite{gan2022variational}. In a nutshell, Algorithm \ref{Algo:full tracking with relocation} first 1) performs the standard VB-AbNHPP tracker for the current time step, then 2) updates the missed object set $\mathcal{L}_n$ by track loss detection, and finally 3) implements the relocation only for missed objects in $\mathcal{L}_n$.

For objects relocated successfully at the current time step, their estimated measurement numbers at previous time steps are in principle unavailable to us. However, the track loss detection in \eqref{eq:lost track event} requires the estimated measurement number from previous $\tau_k$ time steps. Thus, in order to ensure the track loss detection can be directly carried out in the next time step, Algorithm \ref{Algo:full tracking with relocation} automatically sets the estimated measurement number at previous time steps as object's Poisson rates. This empirical setup allows for the timely detection of track loss for any properly tracked object at any time step.

% \vspace{-1em}
\begin{algorithm}  
\SetAlgoLined
\algsetup{linenosize=\tiny}
  \scriptsize
\textbf{Require}: $q^*_{n-1}(X_{n-1}),Y_n,M_n$, successfully tracked/relocated set $\mathcal{K}_{n-1}$.
If $\tau_k\geq2$, also require $\hat{M}_{t,k}$ ($k\in\mathcal{K}_{n-1}$ and $t$ from $n\!-\!\tau_k\!+\!1$ to $n\!-\!1$).\\

\textbf{Output}: $q^*_{n}(X_{n}),\mathcal{K}_{n}$. If $\tau_k\geq2$, also output $\hat{M}_{t,k}$ for all $k\in\mathcal{K}_{n}$ and $t$ from $n\!-\!\tau_k\!+\!2$ to $n$.\\

Run the standard tracker with known $\Lambda$, i.e. Algorithm 1 in \cite{gan2022variational}, to obtain $\hat{p}_{n|n-1}(X_n),q^*_{n}(X_{n})$, and $q^*_{n}(\theta_{n})$.\;
Initialise $\mathcal{L}_n=\{k\in\{1,2,...,K\}:k\notin\mathcal{K}_{n-1}\}$.\;
\ForEach {$k\in \mathcal{K}_{n-1}$}
{ Evaluate $\hat{M}_{n,k}$ according to \eqref{eq: estimaed measurement number}.\;
\If{$\sum_{t=n-\tau_k+1}^n \hat{M}_{t,k}\leq M_k^{los}$}
{
Update $\mathcal{L}_n\gets\mathcal{L}_n\bigcup\{k\}$.
}
}
\If{$\mathcal{L}_n\neq\emptyset$}
{
Run the relocation procedure (i.e. Algorithm \ref{Algo:relocation}) to obtain $\mathcal{K}_n$ and the refined $q^*_{n}(X_{n})$ and $q^*_{n}(\theta_{n})$.\;
\ForEach{$k\in \mathcal{K}_{n}$}
{
Update $\hat{M}_{n,k}$ via \eqref{eq: estimaed measurement number} with the refined $q^*_{n}(\theta_{n})$.\;
\If{$k\in \mathcal{L}_n\land\tau_k\geq3$}
{ Set $\hat{M}_{t,k}=\Lambda_k$ for $t$ from $n\!-\!\tau_k\!+\!2$ to $ n\!-\!1$. }

}
}
 \caption{Tracking with relocation at time step $n$}
 \label{Algo:full tracking with relocation}
 % \vspace{-0.3em}
\end{algorithm}
% \vspace{-1em}

It is yet  be discussed the choice of the uncertain Gaussian prior $\tilde{p}_n(X_{n,h})$ for the relocation of the missed object $h\in\mathcal{L}_n$ in Algorithm \ref{Algo:full tracking with relocation} and Algorithm \ref{Algo:relocation}. Typically the mean of $\tilde{p}_n(X_{n,h})$ should be set to the object $h$'s latest state estimate when it is properly tracked. Moreover, it is practically efficient to set a relatively small covariance of $\tilde{p}_n(X_{n,h})$ for the object $h$ whose track has just been lost. For the object $h$ that has been missed for a longer time and failed to be captured within a small survey area, a large covariance of $\tilde{p}_n(X_{n,h})$ should be set so that more time will be spent searching for it in a large survey area. Finally, it is useful to incorporate the prior information of the sensor surveillance region in $\tilde{p}_n(X_{n,h})$. For example, $\tilde{p}_n(X_{n,h})$ should not have a high probability density over areas not covered by the sensor surveillance region or where the object is definitely not present.
\vspace{-0.5em}
\section{Simulation} \label{sec: simulation}
This section focuses on the evaluation of the proposed full VB-AbNHPP-RELO tracker in Algorithm \ref{Algo:full tracking with relocation} on multi-object tracking tasks in heavy clutter.
% In this section, we first evaluate the full VB-AbNHPP tracker with relocation strategy (VB-AbNHPP-RELO) proposed in Algorithm \ref{Algo:full tracking with relocation} on the multi-target tracking tasks under high clutter.
Particularly, subsection \ref{sec: result moderately clutter} considers a tracking scenario under moderately heavy clutter where objects exhibit relatively more random movements, and subsection \ref{sec: result highly heavy clutter} presents a more challenging scenario with highly heavy clutter where objects first intersect in a small region and then disperse. Extensive experimental results are reported in both scenarios. Finally, in subsection \ref{sec: result rate estimation}, we briefly demonstrate that the object and clutter's Poisson rates required in the VB-AbNHPP-RELO algorithm can be accurately estimated using the proposed VB-AbNHPP tracker with unknown rates in Algorithm \ref{Algo:tracker}. Additional results for the standard VB-AbNHPP tracker without the relocation strategy in a relatively low clutter tracking scene can be found in \cite{gan2022variational}.

In subsections \ref{sec: result moderately clutter} and \ref{sec: result highly heavy clutter}, we compare the proposed VB-AbNHPP-RELO with our original VB-AbNHPP tracker \cite{gan2022variational} and other competing methods (specified later) to demonstrate its efficacy on the fixed number multi-object tracking tasks. Datasets are generated with various trajectories, object number, and measurement rates to simulate diverse tracking scenes with heavy clutter. Examples of single-time-step measurements in a challenging dense clutter scene are given in Fig. \ref{fig:moderatelyclutter}(c) and Fig. \ref{fig: highly heavy clutter}(c), where objects' measurements (red dots) are difficult to distinguish from the clutter (blue dots). Such an ambiguity may lead to multiple modes in the exact posterior of object states, whereas only one of them  may capture the true objects' positions. An algorithm that fails to keep necessary modes can easily lose track of the objects, which explains why tracking in heavy clutter situations is challenging.%. This illustrates the challenges of tracking under heavy clutter scenario. 

The methods evaluated in subsections \ref{sec: result moderately clutter} and \ref{sec: result highly heavy clutter} can be divided into two classes: The first class is methods based on sampling or maintaining multiple hypotheses, where several modes may be kept in the estimate per time step at the cost of computational time and memory. This class of methods includes Rao-Blackwellised Gibbs-AbNHPP (G-AbNHPP) (Algorithm 2, \cite{li2023adaptive}) and the popular PMBM filter \cite{granstrom2019poisson}. In addition, besides the standard PMBM filter with a birth model (termed PMBM-B), we also implement an altered version of the PMBM filter with no birth and death models (PMBM-NB) as it matches our modeling assumptions and we found it can lead to better performance and faster implementation in the considered tasks. However, note that the original recycling step is kept in PMBM-NB, such that a birth process will still be carried out in a small region.
Another tracking algorithm that belongs to this class is the SPA-based tracker \cite{meyer2021scalable,meyer2020scalable}. Since the target number is known, here we compare with the SPA-NB tracker with no target birth that corresponds to the fixed target version in \cite{meyer2020scalable}. %However, due to its computational time being much beyond the real-time processing requirement (e.g., for the 20-target case in Section \ref{sec: result moderately clutter}, it took over 160s per time step with only 500 samples and 3 iterations), we only include comparison results in simpler cases with 5 and 10 targets in Section \ref{sec: result moderately clutter}. 
The other class of evaluated algorithms in our experiments only keep one mode in the estimate per time step so that the tracking can be performed in the most efficient manner.
% However, a lack of necessary modes in these methods may result in more frequent track loss, so a relocation method for missed targets is needed.
These algorithms include the ET-JPDA filter in \cite{yang2018linear}, our original VB-AbNHPP tracker and the proposed VB-AbNHPP-RELO, of which only the VB-AbNHPP-RELO has the capability to relocate the missed objects.

Some common settings and parameters for the simulations and tested algorithms are listed below:

\paragraph{Modelling assumptions in synthetic dataset} \label{sec:model para}
We assume that objects move in a 2D surveillance area with each $X_{n,k}=[x^1_{n,k}, \Dot{x}^1_{n,k},x^2_{n,k}, \Dot{x}^2_{n,k}]^T$, where $x^d_{n,k}$ and $\Dot{x}^d_{n,k}$ ($d=1,2$) indicate the $k$-th object's position and velocity in the $d$-th dimension, respectively. The following models and parameters are employed in all synthetic tracks generation and inference algorithms presented below. We utilise the constant velocity model with a transition density in \eqref{eq: dynamic transition}, where we have $F_{n,k}=diag(F^1_{n,k},F^2_{n,k})$, $Q_{n,k}=diag(Q^1_{n,k},Q^2_{n,k})$, $B_{n,k}=0$, with $F_{n,k}^d,Q_{n,k}^d$ ($d=1,2$) being specified in \eqref{eq:model para}.
\vspace{-0.4em}
\begin{align} \label{eq:model para}
    F_{n,k}^d=\begin{bmatrix} 1 & \tau \\ 0& 1
    \end{bmatrix}, 
    Q_{n,k}^d=25\begin{bmatrix} \tau^3/3 & \tau^2/2 \\ \tau^2/2& \tau
    \end{bmatrix},
    H^d=\begin{bmatrix} 1 & 0 
    \end{bmatrix}. \vspace{-2em}
\end{align}
% We assume that targets move in a 2D surveillance area with each  $X^d_{n,k}=[x^d_{n,k}, \Dot{x}^d_{n,k}]^T$, which indicates target's position and velocity in the $d$-th dimension, respectively. The following models and parameters are employed in all synthetic tracks generation and inference algorithms presented below. We utilise the constant velocity model with a transition density in \eqref{eq: dynamic transition}, where the transitions in two dimensions are independent and for each $d$-th dimension, we have
% \begin{align}
%     F_{n,k}^d=\begin{bmatrix} 1 & \tau \\ 0& 1
%     \end{bmatrix}, 
%     Q_{n,k}^d=25\begin{bmatrix} \tau^3/3 & \tau^2/2 \\ \tau^2/2& \tau
%     \end{bmatrix},
%     B_{n,k}^d=0;
% \end{align}
The total time steps are 50, and the time interval between measurements is $\tau=1$s. The object is simulated as an extended object with an elliptical extent, and its covariance in \eqref{measurement model} is set to $R_k=100\text{I}_2$, $k=1,...,K$. The measurement matrix in \eqref{measurement model} is designed to generate the 2D positional measurements, i.e. $H=diag(H^1,H^2)$, where $H^d$ ($d=1,2$) is specified in \eqref{eq:model para}.

\paragraph{Metric}
The metric utilised for measuring the tracking performance is the optimal sub pattern assignment (OSPA) \cite{schuhmacher2008consistent}, where the order is set to $p=1$ and the cut-off distance is $\text{c}=50$. Meanwhile, to evaluate the computational complexity, we monitor the CPU time (System: Intel(R) Core(TM) i9-9980 CPU@2.40GHz, 32 GB RAM) required at a single time step and averaged over all time steps. The overall CPU time presented in the table is the averaged value across all runs. 

\paragraph{General parameter settings}
% parameters used for all demonstrated experiments are listed as follows:
We set $I=100,\epsilon=0.01$ for both the standard VB-AbNHPP tracker and the VB-AbNHPP-RELO tracker. In our experiments, our methods usually converge in less than 10 iterations under this setup. %such a setup usually leads to convergence in less than 10 iterations. 

% For the VB-AbNHPP-RELO tracker, we search for the missed target within a $1713$ radius centred on the last tracked position. Such a circular area is modelled by a $95\%$ confidence region of a $2D$ Gaussian with covariance $700\text{I}_2$. 
For the VB-AbNHPP-RELO tracker, we search for the recently missed objects (for $h\in\mathcal{L}_n$ but not in $\mathcal{L}_{n-1}$) within a $490$ radius, and all other missed objects (for $h \in \mathcal{L}_n\bigcap\mathcal{L}_{n-1}$) within a $1713$ radius, both centred on the last tracked position. Such a circular area is modelled by a $95\%$ confidence region of a $2D$ Gaussian with covariance $200\text{I}_2$ and $700\text{I}_2$ respectively. The Gaussian prior $\tilde{p}_n(X_{n,h})$ assigned for the missed object $h$ is designed to cover the intersection of this circular region and the sensor surveillance area, that is, if this circular region is completely within the surveillance area, the $\tilde{p}_n(X_{n,h})$ has the corresponding mean and covariance $200\text{I}_2$ or $700\text{I}_2$; otherwise it is adjusted to cover the intersection of these two regions.
%The Gaussian prior $\tilde{p}_n(X_{n,h})$ assigned for the missed target $h$ is designed to cover this region within the sensor surveillance area, that is, if this circular region is completely within the sensor surveillance area, the position in $\tilde{p}_n(X_{n,h})$ has the corresponding mean and covariance $700\text{I}_2$; otherwise it is adjusted to cover the intersection of these two regions.
The velocities in $\tilde{p}_n(X_{n,h})$ are set with mean $0$ and covariance $1600$ for all dimensions. Eligible initialisation threshold $M_{k}^{init}$ is set to be equal to the relocation threshold $M_{k}^{reloc}$, and $M_{k}^{reloc}$ is set according to Appendix \ref{apx:choice of Mreloc and Minit} with $P_{thres}^{reloc}=0.5$. For the moderately heavy clutter scenes in subsection \ref{sec: result moderately clutter}, the track loss detection parameters $\tau_k,M_k^{los}$ are set according to Appendix \ref{apx:choice of tau and M} with $P_{thres}^{los}=$7e-4, and the initialisation covariance $C$ is $35^2\text{I}_2$; for the highly heavy clutter scenarios in subsection \ref{sec: result highly heavy clutter}, we set $\tau_k,M_k^{los}$ according to Appendix \ref{apx:choice of tau and M} with $P_{thres}^{los}=$5e-4 to achieve a slightly less sensitive track loss detection, and $C=20^2\text{I}_2$ to maintain a reasonable relocation effectiveness with larger computational costs. See sections \ref{sec: single object vari loca strate} and \ref{sec: object loca demo} for the rationale for selecting $C$.

For the G-AbNHNPP algorithm, the number of particles is set to 500 and the burn-in time is 100 iterations. We choose to compare with the fast Rao-Blackwellisation scheme of the G-AbNHPP since it outperforms other schemes in \cite{li2023adaptive}. A higher tracking accuracy may be achieved with a larger particle size, but this comes with a longer computational time.

Since PMBM filters were originally devised for tracking a varying number of objects, we have heavily modified them for tracking a fixed number of $K$ objects for a fair comparison in our experiment. In particular, for both PMBM-B and PMBM-NB, if their multi-Bernoulli global hypothesis with the highest weight includes more than $K$ Bernoulli components, then the OSPA is evaluated with only the $K$ components with the highest existence probability. Under this setup, we find that their cardinality estimates in our experiment are rarely wrong, and hence their OSPA merely reflects the localisation error in most cases. Moreover, both PMBM filters are set with the same accurate initialisation of object states, Poisson rate $\Lambda$, and object extent $R$ as in other competing methods. Specifically, at time step $1$, both PMBM filters start with a single multi-Bernoulli hypothesis with $K$ components, each associated with an existence probability $1$, and a ground truth object state, object extent, and Poisson rate (i.e., their priors all have an extremely small variance). %The transitions for rate and shape are set as not varying with an extremely small covariance, which allows the PMBM filters to use the same accurate information of $\Lambda$ and $R$ at every time step as other competing methods. 
The transitions of rates and shapes are removed such that the rates and shapes are nearly constant over time, allowing the PMBM filters to use ground truth information of $\Lambda$ and $R$ at every time step, like the other compared methods.
% Despite the above modification, we note that the performance of both PMBM filters implemented in our experiment may be further improved by designing a filter using the target number information more wisely. However, such a detailed development is out of the scope of our paper.

Similarly, the SPA-NB tracker in \cite{meyer2020scalable} is initialised with the ground truth target number $K$ and target state prior. The survival probability is set to $1$. The number of particles is set to 500 with 3 iterations. However, due to its computational time being much beyond the real-time processing requirement (e.g., for the 10-target case in Section \ref{sec: result moderately clutter}, it already took over 69s per time step with only 500 samples and 3 iterations), we only include comparison results for cases up to 10 targets with moderate clutter in Section \ref{sec: result moderately clutter}. 

Other parameters of the PMBM-B and the PMBM-NB filters are listed as follows: For all scenarios, both the object’s survival probability and the detection probability are set to 1. The ellipsoidal gate size in probability is 0.999. The number of global hypothesis is capped at 100. The DBSCAN algorithm is run with distance thresholds between 5 and 50, with a maximum number of 20 assignments. For PMBM-NB, the Poisson birth model is neglected by setting the intensity as a single component with weight 0.01 and Gaussian density with mean $[0, 0, 0, 0]^T$ and covariance $ diag(0.01, 0.25, 0.01, 0.25)$, covering a negligible surveillance area. Note that there will still be Poisson intensity for undetected objects because of the recycling step in the PMBM algorithm. For PMBM-B, its Poisson intensity is set as a single component with weight 0.1 and Gaussian density with mean $[0, 0, 0, 0]^T$ and covariance $ diag(40000, 16, 40000, 16)$, covering the circular region where the objects most likely appear/lose track. These parameters are manually tuned for good tracking performance and efficient implementation. For example, the grid for DBSCAN clustering is carefully selected such that a finer grid no longer improves the tracking accuracy but only introduces extra computational time in our tracking examples.

\subsection{Tracking Multiple Objects under Moderately Heavy Clutter} \label{sec: result moderately clutter}

\begin{figure*}
\centerline{\includegraphics[width=17cm]{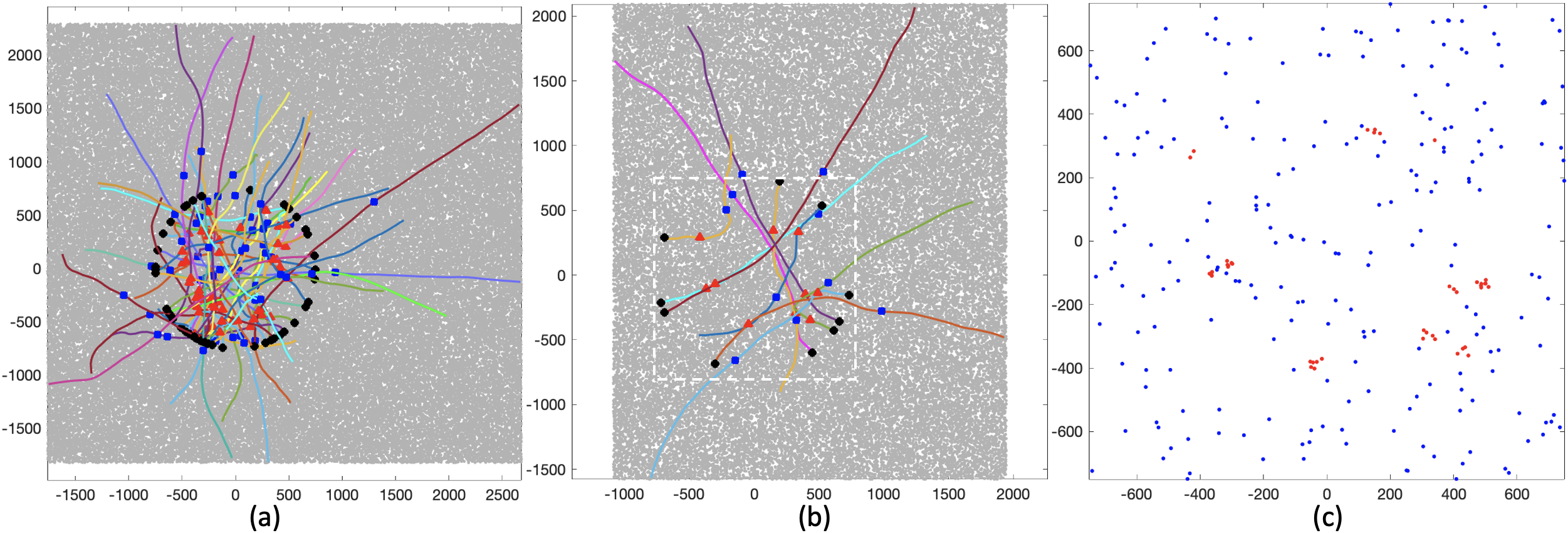}}%18.6
\vspace{-0.5em}
\caption{Example synthetic trajectories and measurements in the tested dataset in Section \ref{sec: result moderately clutter}. Object number $K$ in figure (a) and (b) are $50$ and $10$ respectively. The black circles, red triangles and blue squares mark the objects' positions at the time step $1$, $11$ and $31$. The dense grey dots in (a) and (b) are all measurements received from all time steps. Figure (c) shows the measurements received at the time step $11$ from the same dataset as figure (b). Red dots are the measurements generated by the $10$ objects (whose position is in the red triangles in figure (b)), and blue dots are the clutter received at the time step $11$. Figure (c)'s range $[-750,750]\times[-750,750]$ corresponds to the white dashed square depicted in figure (b).}
\label{fig:moderatelyclutter}
\end{figure*}

\begin{table*}[ht]
\centering
\caption{Tracking performance comparisons for closely spaced randomly moving objects under moderately heavy clutter.}
\begin{tabular}{*8c}
\toprule
 %&  RMSE(mean $\pm 1\sigma$) & / & track loss percentage &  [CPU time(s)]\\
 &  & \multicolumn{5}{c}{OSPA (mean $\pm 1$ standard deviation) $|$ CPU time (s)}  \\
%\midrule
\cmidrule(lr){2-8}
$K$ &  VB-AbNHPP-RELO  & VB-AbNHPP & G-AbNHPP   & ET-JPDA & PMBM-NB & PMBM-B & SPA-NB  \\
\midrule
5 & 5.70$\pm$0.41 $|$ 3e-3  & 6.12$\pm$1.83 $|$ 5e-5 &\textbf{5.65}$\pm$0.43 $|$ 0.40  & 7.24$\pm$2.31 $|$ 0.02 & 6.19$\pm$1.32 $|$ 0.44  & 7.29$\pm$2.26 $|$ 4.36 & 5.96$\pm$1.03 $|$ 15.87 %6.44 $\pm$ 0.58 $|$ 2.2e-04  
  \\
 10& \textbf{5.72}$\pm$0.35 $|$ 0.01& 6.24$\pm$1.34 $|$ 1e-4  & 5.79$\pm$0.84 $|$ 0.88 & 21.42$\pm$4.13 $|$ 0.08 & 6.65$\pm$1.23 $|$ 1.86   & 8.09$\pm$2.11 $|$ 5.63 & 6.19$\pm$1.13 $|$ 69.40   \\
20& \textbf{5.95}$\pm$0.41 $|$ 0.05& 7.05$\pm$1.34 $|$ 3e-4 & 6.16$\pm$0.95 $|$ 1.98  & 22.47$\pm$2.73 $|$ 0.22 & 8.67$\pm$2.29 $|$ 2.43 & 12.17$\pm$2.94 $|$ 11.57 & \textemdash     \\
 30 & \textbf{6.06}$\pm$0.33 $|$ 0.14  & 7.35$\pm$1.29 $|$ 5e-4 & 6.78$\pm$1.24 $|$ 2.76  & 24.09$\pm$2.26 $|$ 0.39 & 10.20$\pm$1.80 $|$ 3.24 & 15.63$\pm$2.74 $|$ 16.63 & \textemdash    \\
 50  & \textbf{6.45}$\pm$0.34 $|$ 0.87& 8.64$\pm$1.13 $|$ 1e-3 & 7.95$\pm$1.46 $|$ 4.12 & 25.06$\pm$1.82 $|$ 0.86 & 14.09$\pm$1.71 $|$ 6.14& 19.97$\pm$2.32 $|$ 40.98 & \textemdash  \\
\bottomrule
\end{tabular}
\label{rmse}
\vspace{-1.5em}
\end{table*}
This section evaluates the proposed VB-AbNHPP trackers for tracking randomly moving objects that are closely spaced in moderately heavy clutter. Specifically, five scenarios are considered with the object number increasing from $5, 10, 20, 30$ to $50$. For each scenario, 100 datasets are generated, each with a completely different set of synthetic objects' trajectories and measurements. Each synthetic trajectory in each dataset was randomly initiated on a circle with a radius of $750$ from the origin $(0,0)$, and it has an initial velocity of $30$ pointing toward the origin. This simulates a scenario where objects move closer together before gradually spreading out. Two example tracks from $50$ objects and $10$ objects are shown in Fig. \ref{fig:moderatelyclutter}(a) and (b), where black circles, red triangles and blue squares mark the objects' positions at time step $1$, $11$ and $31$ respectively. It can be seen that during the first 30 time steps, objects are closely spaced, and track coalescence occurs frequently with its frequency and severity increasing with object number. After the $30$-th time step, the objects gradually move away from each other and tracking becomes less difficult.

% Such a relatively low (compared to the setup in section \ref{sec: result highly heavy clutter}) initial velocity only sets potential coalescence for targets, but do not guarantee their intersection due to the large motion noise. 

For all datasets of this moderately heavy clutter scene, we set the Poisson rate $\Lambda_k=5$ for all objects ($k=1,2,...,K$). The clutter density (i.e. $\Lambda_0/V$) is $10^{-4}$ per unit area, and correspondingly, the average $\Lambda_0$ for each scenario with object number $K$ being $5,10,20,30$ and $50$ are $775, 1175, 1761, 1967$ and $2521$, respectively. With such a high clutter, the measurements received from all time steps (showed as grey dots in Fig. \ref{fig:moderatelyclutter}(a) and (b)) are visually overlapped. The measurements received from a single time step are also plotted in Fig. \ref{fig:moderatelyclutter}(c) where the red dots are measurements generated by 10 objects that are located at the red triangles in Fig. \ref{fig:moderatelyclutter}(b), and the blue dots are the clutter received in this single time step. Such dense clutter causes ambiguity in objects' true positions, highlighting another challenge of our experiment, in addition to the frequent coalescences that occur when object numbers are high.

The tracking results for all methods are presented in Table \ref{rmse}. For each $K=5,10,...,50$, the mean OSPA and CPU time are first averaged over all 50 time steps, and then averaged over all 100 datasets. The standard deviation of these 100 average OSPA is also shown in Table \ref{rmse}. We can observe that our method has a promising performance in both time efficiency and tracking accuracy. In terms of efficiency, our standard VB-AbNHPP tracker is the fastest algorithm, and its efficiency advantage becomes more evident as the object number increases. The proposed VB-AbNHPP-RELO tracker naturally requires extra computational time than the standard VB-AbNHPP tracker, but it is still more efficient than all other compared methods, and the advantages over the SPA-NB tracker, PMBM filter and G-AbNHPP are particularly significant. In terms of tracking accuracy, the proposed VB-AbNHPP-RELO outperforms all other compared methods in all considered scenarios, except for being comparable to G-AbNHPP with similar average OSPA values.
%with the exception of matching the mean OSPA of the G-AbNHPP in the $K=5$ case. 
%The following are the remaining tested algorithms, ranked from lowest to highest in terms of their average OSPA in the general context of our experiments: G-AbNHPP, our standard VB-AbNHPP tracker, PMBM-NB, PMBM-B and ET-JPDA. 
We can also see that the SPA-NB method took a much longer time compared with all other trackers in both cases of $K=5,10$. Since its computational time grows very fast and beyond real-time processing ability, we only present results up to 10 targets.
Based on the experimental results, the remaining tested algorithms, except SPA-NB, can be ranked in terms of their average OSPA from lowest to highest as follows: G-AbNHPP, our standard VB-AbNHPP tracker, PMBM-NB, PMBM-B, and ET-JPDA.
%Following are the remaining tested algorithms, ranked by their mean OSPA from low to high in a general case of our experiments: G-AbNHPP, our standard VB-AbNHPP tracker, PMBM-NB, PMBM-B, and ET-JPDA. 

\setlength{\textfloatsep}{0.2pt}
\begin{figure}[htp!]
\centerline{\includegraphics[width=8cm]{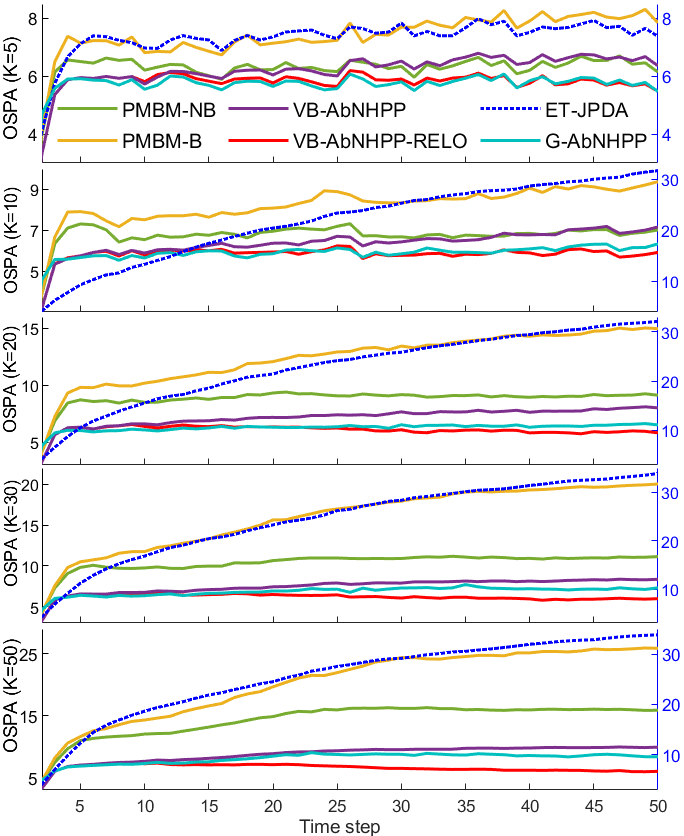}}
\caption{Mean OSPA metric over 50 time steps for tracking scene in Section \ref{sec: result moderately clutter}. Blue dashed line is associated with the right y-axis, and all other lines are with the left y-axis. }
\label{fig:ospa}
% \vspace{-1.2em}
\end{figure}

\begin{figure}[htp!]
\centerline{\includegraphics[width=8cm]{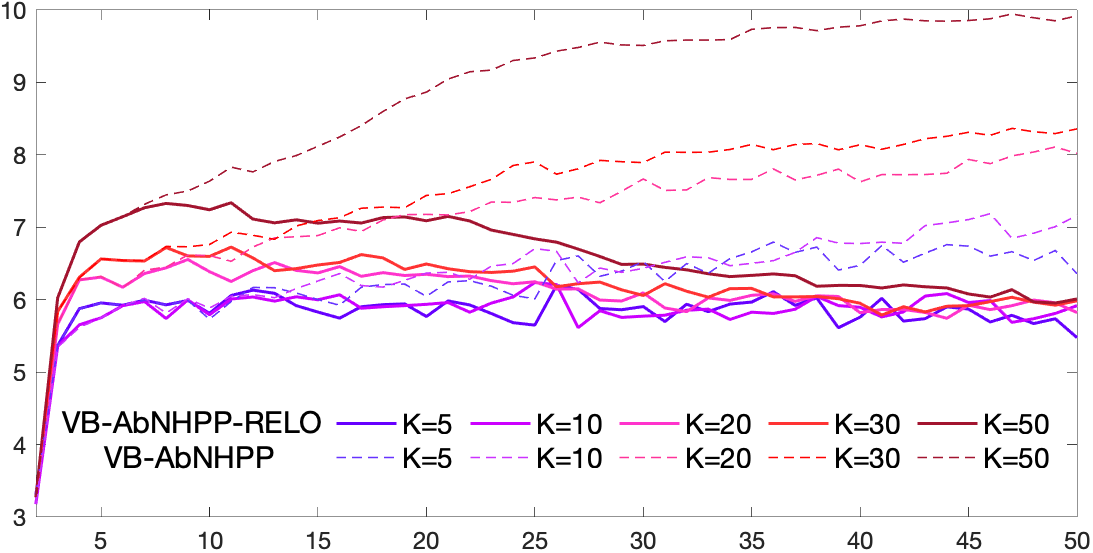}}
\caption{Mean OSPA of VB-AbNHPP-RELO and standard VB-AbNHPP over 50 time steps for cases in Section \ref{sec: result moderately clutter}. }
\label{fig:ospa moderately clutter}
\end{figure}

To illustrate detailed tracking performance over time, the mean OSPA over 50 time steps for each scenario is plotted in Fig. \ref{fig:ospa}. In the beginning stage of each tracking scene, the G-AbNHPP has the lowest mean OSPA, followed by our variational Bayes trackers. This advantage of G-AbNHPP lasts shorter as tracking scenarios become more challenging, i.e., when $K$ grows. As time goes on, the proposed VB-AbNHPP-RELO outperforms G-AbNHPP and always achieves the lowest OSPA in the last few time steps, owing to the proposed effective relocation strategy. Such a phenomenon about G-AbNHPP and VB-AbNHPP-RELO is reasonable, and we analyse it below.

% The marginal loss in the mean OSPA of the proposed VB-AbNHPP-RELO tracker in comparison to the G-AbNHPP in the $K=5$ case as shown in Table. \ref{rmse} is also reasonable. 
% Recall that the G-AbNHPP can retain different modes in the posterior with different samples at the cost of computational time and memory. For relatively simple tracking tasks (e.g. $K=5$), a limited number of samples can retain all necessary modes such that all targets can be successfully tracked all the time. In contrast, the efficient VB-AbNHPP-RELO only keeps a single mode for each time step, making it more likely to lose track. Although the proposed relocation strategy can retrieve the missed targets, the track loss prior to the successful relocation inevitably leads to a deterioration of the overall OSPA. For more challenging tracking tasks (e.g. $K\geq10$) where there are more modes in the posterior, the same number of samples in G-AbNHPP is not enough to retain all these modes. This results in more frequent track losses and an irreparably high OSPA due to its inability to relocate missed targets. Accordingly, the OSPA difference between G-AbNHPP and VB-AbNHPP-RELO is more significant as $K$ increases, since the latter maintains a stable tracking performance by constantly relocating the missed targets across all tracking scenes. 

Recall that all tested methods are given an identical and accurate initialisation of object states, and the G-AbNHPP can retain different modes in the posterior with different samples at the cost of computational time and memory. In the beginning of each tracking scene, a limited number of samples in G-AbNHPP are likely to retain all necessary modes and thus all objects can be tracked successfully. In contrast, our efficient VB-AbNHPP-RELO always keeps a single mode at each time step, making it more likely to lose track. Although the proposed relocation strategy can retrieve the missed objects, the track loss prior to the successful relocation inevitably leads to a deterioration of the mean OSPA. Therefore, the G-AbNHPP typically outperforms the VB-AbNHPP-RELO in the beginning of a tracking task. 
However, in each new time step, there is always a chance that the limited number of samples in G-AbNHPP may miss some important mode that can potentially lead to track loss. Subsequently, as time goes on, G-AbNHPP is more likely to lose track and has an irreparably high OSPA due to its inability to relocate missed objects. For more challenging tracking tasks where there are more modes in the posterior, G-AbNHPP are less likely to retain all the necessary modes using the same sample size, resulting in more frequent track losses. Accordingly, the OSPA difference (both in Table \ref{rmse} and in the last few time steps in Fig. \ref{fig:ospa}) between G-AbNHPP and VB-AbNHPP-RELO becomes more significant as $K$ increases, since the latter maintains a stable tracking performance by constantly detecting and relocating the missed objects across all tracking scenes. 
%\vspace{-0.15em}
% Before we look into the performance of PMBM filters in our experiment, we first emphasise that they are originally devised for tracking a varying number of objects, and we have heavily tailored them for a fair comparison in our experiment. In particular, if their multi-Bernoulli global hypothesis with the highest weight includes more than $K$ Bernoulli components, then the OSPA is evaluated with only the $K$ components with the highest existence probability. Under this setup, we find that their cardinality estimates in our experiment are rarely wrong, and hence their OSPA merely reflects the localisation error in most cases. Moreover, both PMBM filters are set with the same accurate initialisation of target states, Poisson rate $\Lambda$, and target extent $R$ as in other competing methods. Specifically, at time step $1$, both PMBM filters start with a single multi-Bernouli hypothesis with $K$ components, each associated with an existence probability $1$, and a target state prior with mean equal to the ground truth and a small covariance. The transition for rate and shape are also set as nearly not varying with a extremely small covariance, which allows them to use the same accurate information of $\Lambda$ and $R$ at every time step as other competing methods. Despite the above modification, we note that the performance of both PMBM filters implemented in our experiment may be further improved by designing a filter using the target number information more wisely. However, such a detailed development is out of the scope of our paper.

Although outperformed by our more efficient VB-AbNHPP-RELO in challenging tracking tasks (e.g. $K\geq10$), the G-AbNHPP is still advantageous over all other tested methods. %This is consistent with the result reported in [], where the superiority of the G-AbNHPP over other methods is owing to the asymptotic properties of MCMC and the efficacy of Rao-Blackwellisation.
Other tested methods that retain multiple modes are PMBM-NB and PMBM-B, where different hypotheses of measurement associations are recorded at each time step. 

In particular, PMBM-NB does better than PMBM-B in all considered scenarios. However, they do not perform very well: apart from outperforming the ET-JPDA,  both PMBM filters have a higher overall OSPA in Table \ref{rmse} than all other methods, including our standard VB-AbNHPP that is also over thousands of times faster.

This trend is also reflected in Fig. \ref{fig:ospa}, where both PMBM filters always have a higher OSPA than the G-AbNHPP and our variational Bayes trackers, except that the PMBM-NB slightly outperforms our standard VB-AbNHPP tracker in the last few time steps in the cases $K=5$ and $K=10$. One factor that could lead to the high OSPA of PMBM filters is the poor performance of the clustering algorithm in this heavy clutter, which further results in inaccurate association hypotheses.

To present a clearer picture of the efficacy of the proposed missed object detection and relocation strategy, the mean OSPAs of the standard VB-AbNHPP tracker and the VB-AbNHPP-RELO are shown in Fig. \ref{fig:ospa moderately clutter}. In the beginning of each tracking scene, the mean OSPAs of two trackers are overlapped, meaning that no track loss is detected in this period and both trackers perform exactly the same. As time goes on, while the OSPA of the standard VB-AbNHPP continues to grow, the OSPA of the VB-AbNHPP-RELO starts to decrease over the rest of time, validating clearly the superiority of the proposed detection and relocation strategy. In particular, the decline in the OSPA of the VB-AbNHPP-RELO is because 1) tracking tasks become easier in the later stage of our simulation as objects become progressively more separated; and 2) missed objects are successfully detected and relocated. Furthermore, we observe in Fig. \ref{fig:ospa moderately clutter} that although the OSPA values of the VB-AbNHPP-RELO initially peak at different levels due to varying tracking scenarios, they subsequently decrease and finally stabilises at a similar value over different $K$s. Notably, this value corresponds with the mean OSPA of our least challenging $K=5$ case, suggesting minimal or no track losses. In contrast, the OSPA for the standard VB-AbNHPP and other methods in Fig. \ref{fig:ospa} consistently increases to higher values as $K$ increases. This demonstrates the robustness of our VB-AbNHPP-RELO tracker: regardless of varying levels of track loss that might temporarily raise the OSPA due to coalescence of objects, our relocation strategy can effectively retrieve (nearly) all missed objects. Ultimately, it achieves robust tracking, stabilising at a lower OSPA value once objects are adequately separated.

Finally, we note from Fig. \ref{fig:ospa} and \ref{fig:ospa moderately clutter} that the proposed VB-AbNHPP-RELO is the only tested method whose OSPA decreases after the first $10$ time steps. This is again due to the significant advantage of our effective relocation method over other existing methods. On the contrary, the OSPAs of all other methods either grow or remain the same even when the coalescence is less severe, meaning that missed objects are seldom retrieved and more objects may be lost due to the high clutter. In particular, we may find that in such a heavy clutter tracking scene, the birth process in PMBM-B is unable to retrieve the missed objects as effectively as the proposed relocation strategy. This may be because: 1) a distinct mismatch between the birth process and our model assumptions, and 2) the birth process cannot cover the whole surveillance area due to the significant computational time.

\setlength{\textfloatsep}{\textfloatsepsave}

\vspace{-0.85em}
\subsection{Analysis of Two Example Coalescence Scenarios under Extremely Heavy Clutter}
\label{sec: result highly heavy clutter}
\vspace{-0.2em}
\begin{figure*}
\centerline{\includegraphics[width=17.5cm]{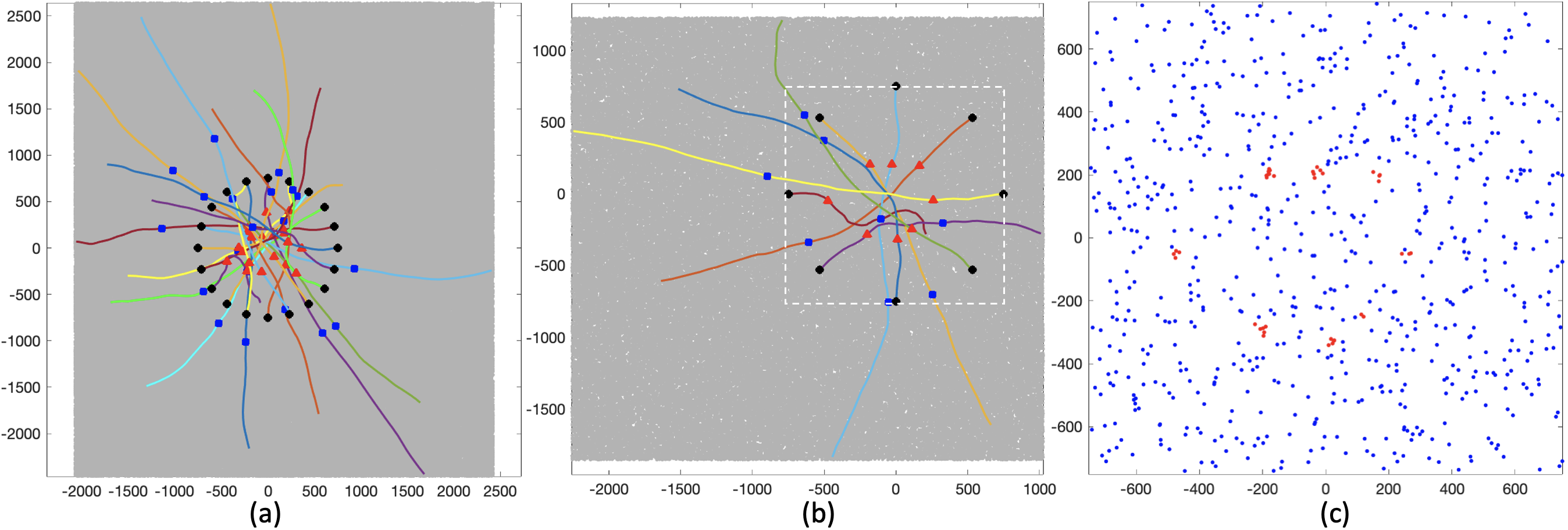}}
\caption{Trajectories and example measurements in the two considered coalescence cases in Section \ref{sec: result highly heavy clutter}. Object number $K$ in figure (a) and (b) are $20$ and $8$ respectively. The black circles, red triangles and blue squares mark the objects' positions at time steps $1$, $11$ and $31$. The dense grey dots in (a) and (b) are all measurements received from all time steps. Figure (c) shows the measurements received at the time step $11$ from the same dataset as figure (b). Red dots are the measurements generated by the $10$ objects (whose position is in the red triangles in figure (b)), and blue dots are the clutter received at the time step $11$. Figure (c)'s range $[-750,750]\times[-750,750]$ corresponds to the white dashed square depicted in figure (b).}
\label{fig: highly heavy clutter}
\vspace{-1em}
\end{figure*}

In this section, we consider two specific coalescence scenarios where multiple moving objects intersect around the origin point at a specific time. Objects' trajectories of these two cases, which feature $K=8$ and $K=20$ objects, are shown in Fig. \ref{fig: highly heavy clutter}(b) and (a), respectively. In both scenarios, the initial positions of objects are equally spaced on a circle of radius 750 from the origin, and each object's initial velocity is 50 pointing towards the origin. Compared to the relatively random object's movement in Section \ref{sec: result moderately clutter}, here the objects' trajectories are more straight, and all objects now intersect in a smaller region between time steps $11$ and $25$, see Fig. \ref{fig: highly heavy clutter}. 

Here, two coalescence scenarios are designed under extremely heavy clutter to verify the advantages of our variational trackers in more challenging settings. In particular, the clutter density is $3\times 10^{-4}$ per unit area, which leads to a Poisson rate of $3038$ for $K=8$, and $6916$ for $K=20$. Poisson rates are $6$ for all objects. For each scenario, 100 synthetic measurement sets are generated under ground truth trajectories in Fig. \ref{fig: highly heavy clutter}, where grey dots denote one example measurement set for all time steps. Fig. \ref{fig: highly heavy clutter}(c) shows the measurements received at a single time step, where the true measurements (red dots) are largely buried in heavy clutter (blue dots), which can hardly be distinguished with human eyes. Note that Fig. \ref{fig: highly heavy clutter}(c) is on the same scale as Fig. \ref{fig:moderatelyclutter}(c), and thus we can easily visualise the increase in clutter density in Fig. \ref{fig: highly heavy clutter}(c), indicating an increased difficulty in performing tracking tasks.

% In this simulation, there are 20 targets in the field of view, and their trajectories are illustrated in Fig. \ref{fig}. To evaluate the robustness of the algorithm, we generate 100 Monte Carlo runs under the ground-truth trajectories with different measurement sets. The target rate is 6, and a heavy clutter case is considered with clutter density (i.e. $\Lambda_0/V$) being $3\times10^{-4}$ per unit area. To visualise the tracking scenario of the datasets, we present one example measurements from datasets along with the ground truth trajectories in Fig. \ref{fig}.
% \begin{figure*}[t!]
% \centerline{\includegraphics[width=18.5cm]{4 target est.png}}
% \caption{estimation results}
% \label{fig}
% \end{figure*}

\begin{table}[ht]
\centering
\caption{Tracking performance comparisons for intersecting objects under highly heavy clutter.}
\begin{tabular}{*3c}
\toprule
 %&  RMSE(mean $\pm 1\sigma$) & / & track loss percentage &  [CPU time(s)]\\
&\multicolumn{2}{c}{\kern-2em OSPA (mean $\pm1$ standard deviation) $|$ CPU time (s)}  \\
%\midrule
\cmidrule(lr){2-3}
Method  & 8 objects & 20 objects    \\
\midrule
ET-JPDA & 22.03$\pm$2.48 $|$ 0.16 & 35.40$\pm$1.17 $|$ 0.96\\
PMBM-NB & 7.53$\pm$2.14 $|$ 2.15 & 9.29$\pm$1.94 $|$ 6.68 \\
PMBM-B & 8.26$\pm$2.38 $|$ 10.17 & 13.09$\pm$3.21 $|$ 21.53 \\
G-AbNHPP & 6.10$\pm$1.42 $|$ 1.31 & 6.48$\pm$1.28 $|$ 4.07 \\
VB-AbNHPP  & 6.36$\pm$1.78 $|$ 2e-4 & 7.62$\pm$1.89 $|$ 1e-3 \\
VB-AbNHPP RELO &  \textbf{5.63}$\pm$0.55 $|$ 0.02 & \textbf{6.16}$\pm$0.64 $|$ 0.44  \\
\bottomrule
\end{tabular}
\label{rmse3}
\vspace{-1em}
\end{table}

\begin{figure*}[t!]
\centerline{\includegraphics[width=16cm]{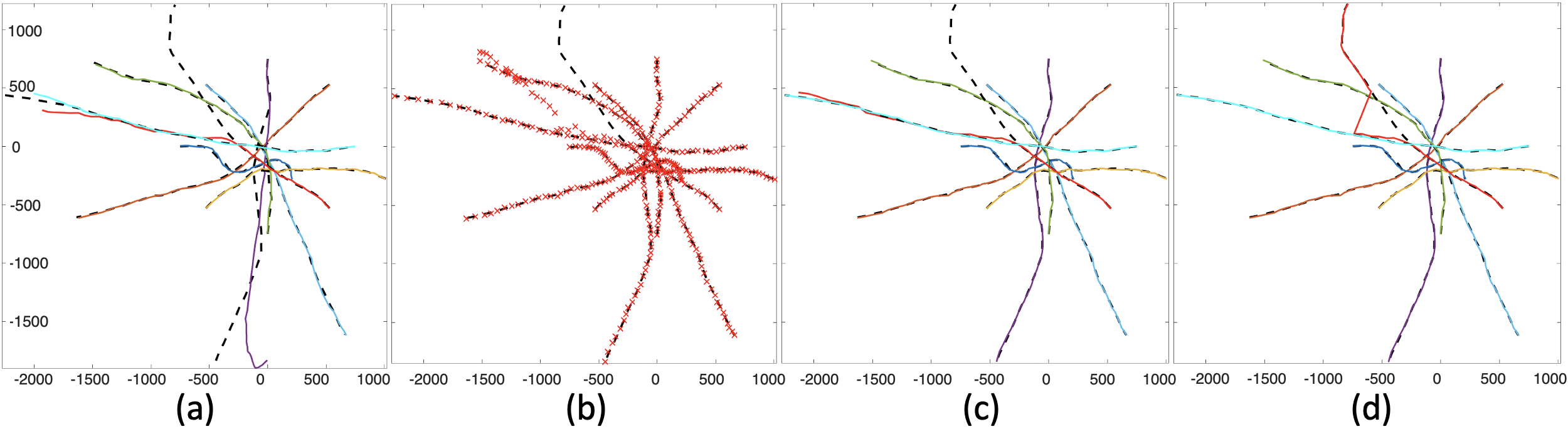}}
\caption{Estimated trajectories (colored lines and red crosses) from (a) ET-JPDA, (b) PMBM-B, (c) VB-AbNHPP and (d) VB-AbNHPP RELO for a particular measurement set of $K=8$ objects.
% The colored lines and red crosses are trackers' estimates.
The black dashed lines are the ground truth. }
\label{fig:4methodest}
\vspace{-1em}
\end{figure*}

The overall tracking performance is presented in Table \ref{rmse3}. It can be observed that the performance of all tested methods exhibits a pattern consistent with that in Table \ref{rmse} in Section \ref{sec: result moderately clutter}. For both two coalescence scenarios, the proposed VB-AbNHPP-RELO has the lowest mean OSPA and the second fastest implementation. Our standard VB-AbNHPP tracker is the fastest algorithm, but its tracking accuracy is surpassed by the G-AbNHPP and VB-AbNHPP-RELO. The ranking of all tested methods in terms of the mean OSPA (from low to high) for both scenarios is: VB-AbNHPP- RELO, G-AbNHPP, standard VB-AbNHPP, PMBM-NB, PMBM-B, and ET-JPDA. 

We select a specific measurement set from $100$ datasets for $K=8$, where there are `misleading' measurements that can trick trackers into steering the estimated trajectory of a particular object in the wrong direction. The estimated trajectories of four methods (ET-JPDA, PMBM-B, VB-AbNHPP and VB-AbNHPP RELO) are depicted in Fig. \ref{fig:4methodest}. For brevity, the trajectories of PMBM-NB and G-AbNHPP are not shown, but they are similar to Fig. \ref{fig:4methodest}(b) and (c) respectively. From Fig. \ref{fig:4methodest} we can see that, once the object in red line intersects with the object in cyan line, all methods incorrectly track the red object, and the estimated tracks of the red and cyan objects coincide. However, the proposed VB-AbNHPP-RELO can detect track loss in a few time steps and successfully relocate it to its true position, whereas all other methods continue to track the red object falsely ever since its coalescence. 
In contrast, even though the track loss occur within the specified birth region in this example, the birth process in PMBM-B fails to localise the missed object. Consequently, VB-AbNHPP-RELO is the only method that correctly tracks all objects at the end of the task, demonstrating the superiority of our effective missed object detection and relocation strategy. 

Finally, the mean OSPAs at 50 time steps for all methods are shown in Fig. \ref{fig:ospa heavy} to reflect the tracking performance over time. All methods demonstrate similar patterns as in Fig. \ref{fig:ospa}, and the analysis and comparisons of the general performance of these methods can be found in Section \ref{sec: result moderately clutter}. Here we highlight two advantages of the proposed VB-AbNHPP-RELO: its robust tracking performance in handling the coalescence, and its lowest mean OSPA in the last few time steps. In particular, all other methods have a more evident increase in OSPA when the coalescence occurs between time steps $11$ and $25$, whilst the OSPA of VB-AbNHPP-RELO does not change greatly as missed objects are detected and relocated timely. For PMBM-NB and PMBM-B algorithms, the high OSPA at the moment of coalescence may be due to the limitation of the embedded clustering algorithm. We observe a drop (between time steps $17$ and $19$) from this temporarily high OSPA in PMBM filters when $K=8$, but the drop does not happen in the more challenging case of $K=20$. In the last few time steps in Fig. \ref{fig: highly heavy clutter}, the proposed VB-AbNHPP-RELO always has the lowest OSPA, demonstrating again the superiority of the proposed relocation strategy in the long-term tracking tasks.

\begin{figure}[tp!]
\centerline{\includegraphics[width=7cm]{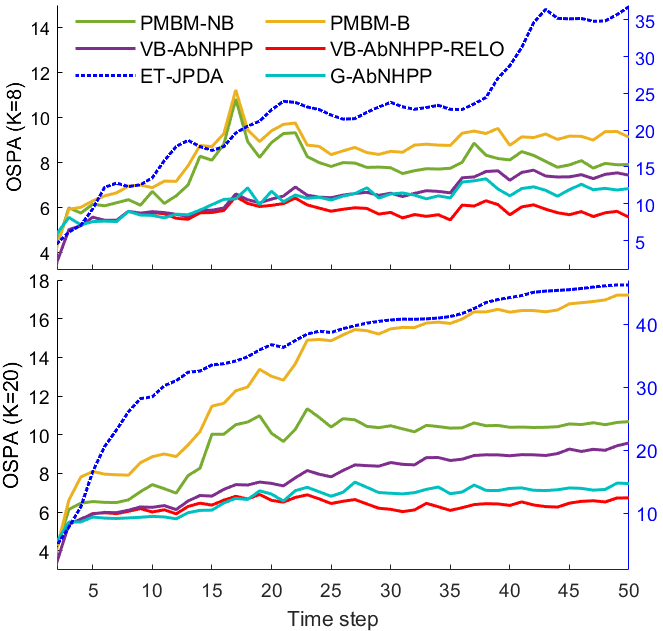}}
\caption{Mean OSPA metric over 50 time steps for tracking scenes in Section \ref{sec: result highly heavy clutter}. Blue dashed line is associated with the right y-axis, and all other lines are with the left y-axis. }
\label{fig:ospa heavy}
\end{figure}
% iterative design of our methods is necessary for a 
% higher tracking accuracy with merely a negligible increase in computational time.

\vspace{-1em}
\subsection{Tracking with rate estimation}
\label{sec: result rate estimation}
Recall that the proposed relocation strategy in Section \ref{sec: basic vatiational localisation} and the VB-AbNHPP-RELO in Section \ref{sec:full tracking with relocation} require a known object extent and object/clutter Poisson rates, such that missed objects can be effectively relocated with merely the measurements from a single time step. In this subsection, we give a simple demonstration of rate estimation with the VB-AbNHPP tracker in Algorithm \ref{Algo:tracker}; the object extent can be estimated in a similar fashion if needed. We consider a tracking and rate estimation task where $10$ objects move in different directions from initial positions around the origin point. A synthetic dataset is generated for $200$ time steps with the system models in Section \ref{sec: simulation}(a), and 10 objects' Poisson rates are randomly generated from $1.5$ to $10$. The ground truth of $10$ objects' trajectories over $200$ time steps is shown in Fig. \ref{fig:rate trajectories}, and the ground truth of each object's Poisson rate is shown in Fig. \ref{fig:rateall}. The clutter density is set to $10^{-5}$ per unit area, and the clutter rate is $5240$ for the considered tasks. The measurements from all time steps are shown as gray dots in Fig. \ref{fig:rate trajectories}, appearing as a grey background due to their excessively large number. 

\begin{figure}[tp!]
\centerline{\includegraphics[width=5cm]{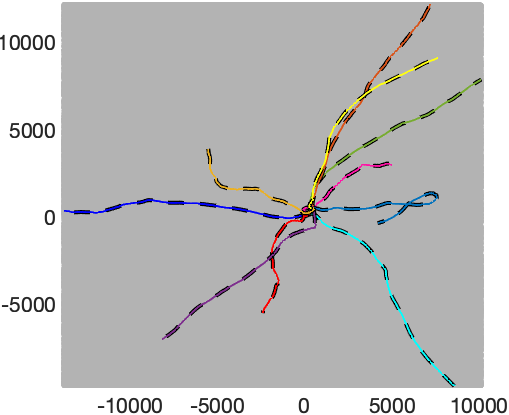}}
\caption{Estimated trajectories (colored lines) of VB-AbNHPP (Algorithm \ref{Algo:tracker}) for the tracking and rate estimation task in Section \ref{sec: result rate estimation}. The black dashed lines are the ground truth. }
\label{fig:rate trajectories}
\vspace{-1.5em}
\end{figure}

We use the VB-AbNHPP in Algorithm \ref{Algo:tracker} to simultaneously track objects and estimate object and clutter rates. In particular, a Gamma distribution with shape parameter $1$ and scale parameter $5$ is set as the initial Poisson rate prior for all $10$ objects and clutter. The $\gamma_n$ in \eqref{eq:rate predictive prior computation} is chosen as $\gamma_n=1-0.1\times(\text{max}\{1,n-10\})^{-0.9}$, i.e., $\gamma_n$ is fixed as $0.9$ for the first $11$ time steps, then strictly increases and approaches $1$ when $n$ is large. Such a construction of $\{\gamma\}_n$ corresponds to the monotonically increasing property instructed in Section \ref{sec:prior choice for para learn}. As with our previous experiments, the prior of initial object state is set as the ground truth, and $I=100, \epsilon=0.01$ are used to end the iterative updates. 

A successful tracking result of VB-AbNHPP is shown in Fig. \ref{fig:rate trajectories}, where the estimated trajectories are overlapped with the ground truth. The rate estimation results are shown in Fig. \ref{fig:rateall}. As time goes on, the converged variational distribution $q_n^*(\Lambda)$ has a smaller variance and its mean successfully converges to the ground truth rate for all objects and clutter. This agrees with our anticipation in Section \ref{sec:prior choice for para learn} that $q_n^*(\Lambda)$ would eventually converge to ground truth $\Lambda$ as a point estimator. Note that these accurate estimation results for different ground truths of rates are obtained with the identical initial prior and the decreasing sequence $\{\gamma\}_n$. In particular, the ground truth of clutter rate is far from the specified initial prior. This demonstrates the robustness of the rate estimation of VB-AbNHPP under an inaccurate initial prior. 

\begin{figure}[tp!]
\centerline{\includegraphics[width=8cm]{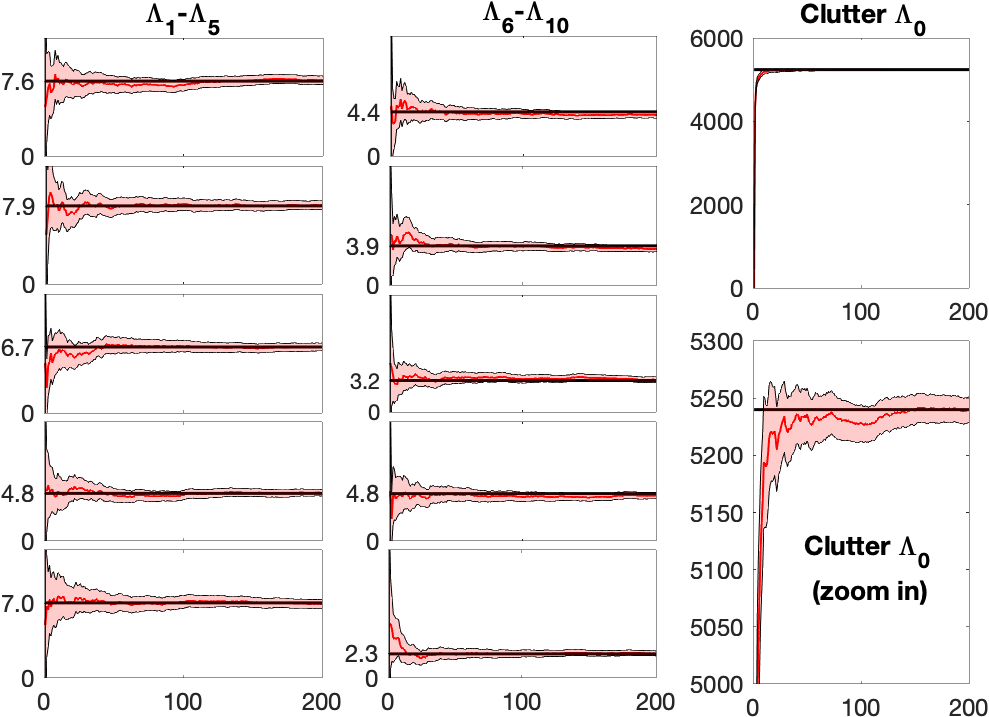}}
\caption{VB-AbNHPP (Algorithm \ref{Algo:tracker}) rate estimation over 200 time steps for the tracking and rate estimation task in Section \ref{sec: result rate estimation}. The ground truth of the Poisson rates of objects and clutter is shown by the black horizontal line in each subfigure. The red curve is the mean of $q_n^*(\Lambda)$, and the light red shade is the confidence region of $\pm2$ standard deviation of $q_n^*(\Lambda)$.
}
\label{fig:rateall}
\vspace{-1.5em}
\end{figure}

%\vspace{-1em}
\section{Conclusion} \label{sec:conclusion}
In this paper, we extend the standard VB-AbNHPP tracker in our previous work \cite{gan2022variational} to a more adaptive VB-AbNHPP tracker that can simultaneously perform the tracking and parameter learning of the object and clutter Poisson rates, based on a general coordinate ascent variational filtering framework developed here. Another advanced extension is a VB-AbNHPP-RELO tracker that can robustly track closely-spaced objects in heavy clutter by automatically detecting and relocating the missed objects. This is based on the proposed novel variational localisation strategy, which can efficiently localise the object from a large surveillance area in heavy clutter. The developed trackers offer fast and parallelisable implementations with superior tracking and estimation performance, making them promising candidates in large scale object tracking scenarios. In particular, when tracking a large number of objects under heavy clutter, the VB-AbNHPP-RELO demonstrates significantly better tracking performance over existing trackers in terms of both accuracy and efficiency.

While we only derive the VB-AbNHPP tracker with rate estimation, other parameters such as measurement covariances/taregt extents can be learnt in a similar manner. The current dynamic model in the DBN in Section \ref{sec:PoissonModel} can also be replaced by a more complicated intent-driven model (e.g. \cite{gan2020modeling,gan2021levy}) or a leader-follower model \cite{li2021sequential}, such that the same framework can still be applied to infer the objects' destination/leader in clutter. Additionally, the variational tracker may be extended for more sophisticated NHPP measurement models that incorporate non-linear measurement functions, like those used in range-bearing measurements, and account for varying detection ranges across different sensors. Furthermore, the proposed variational localisation strategy may be extended to incorporate measurements from multiple time steps to offer a more reliable estimation. This localisation strategy can also facilitate joint detection and tracking task, and a more general VB-AbNHPP-RELO tracker that can handle other unknown variables, such as Poisson rates, measurement covariance, and the object number. This will be discussed in future work.

\appendices

\vspace{-0.5em}
\section{} 
% \section{Derivation for the variance of the residual terms} 
\label{apx:remark for initialisation}
This appendix provides additional details about Remark \ref{remark: initial} in Section \ref{sec: single object vari loca strate}. Although difficult to analyse theoretically, Remark \ref{remark: initial} can be informally justified by looking into the iterative updates under a construction of the initialisation in \eqref{eq: Loca demo initial}. First, we describe some behaviours of these iterative updates, which will be used to illustrate Remark \ref{remark: initial} in detail.

With the initial $q^{(0)}(\theta_n)$ in \eqref{eq: Loca demo initial}, only measurements that lie in the local region, i.e. the $95\%$ confidence ellipse of $\mathcal{N}(m_s,C)$,
% (assume $C>>R_{n,1}$)
have a considerable probability to associate with the object. Conditional on this $q^{(0)}(\theta_n)$, it then produces a pseudo measurement $\overbar{Y}_n^{(0)}$ (evaluated according to \eqref{eq:pseudomeas Y}) that locates at the weighted average of those noticeable measurements with a covariance $\overbar{R}_n^{(0)}$ (i.e. evaluated according to \eqref{eq:pseudomeas R}).
% that roughly decreases as the number of those noticeable measurements increases. If this number is nonzero, we have $\overbar{R}_1<<H\Cov[p(X_1)]H^\top$ since our target prior $p(X_1)$ is assumed to be quite uncertain.
% For a local region containing non-zero measurement, this covariance $\overbar{R}_n^{(0)}$ is extremely small compared to the large positional covariance of the uncertain prior $p(X_{n,1})$.
For a local region that contains at least one measurement, the denominator in \eqref{eq:pseudomeas R} is considerable, leading to a much smaller covariance $\overbar{R}_n^{(0)}$ compared to the large positional covariance of the uncertain prior $p(X_{n,1})$.
% Subsequently, according to \eqref{eq: X update}, the position in the updated $q^{(1)}(X_{n,1})$ at the first iteration
% % is approximately just the $\mathcal{N}(\overbar{Y}_1,\overbar{R}_1)$, which concentrates on the weighted average of the measurements covered by the $95\%$ confidence ellipse of $\mathcal{N}(m,C)$. 
% approximately has the mean $\overbar{Y}_n^{(0)}$, and the covariance $\overbar{R}_n^{(0)}$.
Consequently, according to \eqref{eq: X update}, the position in the updated $q^{(1)}(X_{n,1})$ at the first iteration has a mean that is approximately equal to $\overbar{Y}_n^{(0)}$, and a covariance close to $\overbar{R}_n^{(0)}$.
% , i.e. the weighted average of the measurements covered by the $95\%$ confidence ellipse of $\mathcal{N}(m,C)$.
Empirically, the next updated $q^{(2)}(X_{n,1})$ will typically cover the measurement nearest to $\overbar{Y}_n^{(0)}$. Then, \\
1) If there are dense measurements near or within the high confidence region of $q^{(2)}(X_{n,1})$: The subsequent iterative coordinate ascent updates will refine the $q(X_{n,1})$ at a smaller scale and converge to the $q^*(X_{n,1})$ that encompasses those dense measurements in the end. However, \\
2) If measurements are distributed more evenly or scattered around the high confidence region of $q^{(2)}(X_{n,1})$: The subsequent iterative coordinate ascent updates will enlarge the covariance of $q(X_{n,1})$ to cover as many of those measurements as possible, and finally converge to the $q^*(X_{n,1})$ that equals the uncertain prior  $p(X_{n,1})$, i.e. $q^*(X_{n,1})=p(X_{n,1})$.

Remark \ref{remark: initial} holds for a properly chosen $C$ since in this case, we expect $\overbar{Y}_n^{(0)}$, which is approximately the weighted average of the measurements covered by the $95\%$ confidence ellipse of $\mathcal{N}(m_s,C)$, is close to the location with the greatest density of measurements in this local area. Then the converged $q^*(X_{n,1})$ is very likely to capture this most probable object location. However, if $C$ is too high, Remark \ref{remark: initial} hardly stands. Since in this case, the $95\%$ confidence ellipse of $\mathcal{N}(m_s,C)$ ($s=1,2,...,N$) will include too many uniformly distributed clutter, resulting in their weighted average being close to the center of this local area, i.e., $\overbar{Y}_n^{(0)}\approx m_s$. Subsequently, depending on how the measurement(s) within the high confidence region of $q^{(2)}(X_{n,1})$ are distributed, the CAVI either converges to $q^*(X_{n,1})$ that encompasses the dense measurements nearest to $m_s$, or traps in the local optimum with $q^*(X_{n,1})=p(X_{n,1})$. 
% In such a case, the first rational only stands if the most probable target location in this local area happens to be the nearest  to $m$

% Therefore, we expect this pseudo measurement lead the first updated $q(X_1)$ to be around the geometric mean of the measurements covered by the $95\%$ confidence ellipse of $\mathcal{N}(m,C)$ (since the prior $p(X_1)$ has a fairly large covariance). In the subsequent iterative coordinate ascent updates, the $q(X)$ is refined at a smaller scale. The final $q^*(X_1)$ is typically settled at the nearest location with dense measurements, which is hopefully the most probable target location in the $95\%$ confidence ellipse of $\mathcal{N}(m,C)$.

\vspace{-0.5em}
\section{} \label{apx:parameter selection}
In this appendix, we discuss the choice of the track loss detection parameters $\tau_k,M_k^{los}$, relocation threshold $M_{k}^{reloc}$, and eligible initialisation threshold $M_{k}^{init}$ ($k=1,2,...,K$) for the developed full VB-AbNHPP-RELO tracker. We assume the object and clutter rate $\Lambda$ and measurement covariance $R$ are known to us. We will show that $\tau_k,M_k^{los}$, $M_{k}^{reloc},M_{k}^{init}$ can be automatically selected by specifying probabilities $P_{thres}^{los}\in(0,1)$ and $P_{thres}^{reloc}\in(0,1)$ (both defined below) respectively.

\vspace{-0.7em}
\subsection{Guide for the choice of $\tau_k$ and $M^{los}_k$} \label{apx:choice of tau and M}
When the event $E$ in \eqref{eq:lost track event} is used as the track loss criterion, the higher the $M_k^{los}$, the less likely that event $E$ will happen, and the more sensitive the algorithm is to the potential track loss. A lower $\tau_k$ leads to an event $E$ that considers the $\hat{M}_{n,k}$ from more recent time steps, and hence better reflecting the timely tracking performance.
% However, we cannot tell whether the track has been lost if $\tau_k$ and $\Lambda_k$ are both very low (e.g. $\tau_k=1$ and $\Lambda_k=1$), since in this case, it is very likely that no measurement is generated from these $\tau_k$ time steps. 
However, if $\tau_k$ and $\Lambda_k$ are both very low (e.g. $\tau_k=1$ and $\Lambda_k=1$), it is very likely that no measurement is generated from the object $k$ during these $\tau_k$ time steps, which makes the event $E$ likely to occur even for a successful tracker. Therefore, a trade-off has to be made when choosing $\tau_k$ and $M^{los}_k$.

This subsection describes a strategy to automatically select suitable $\tau_k$ and $M^{los}_k$ such that the event $E$ in \eqref{eq:lost track event} rarely happens for a successful tracker.
% In particular, we define $P_{thres}^{los}\in(0,1)$ as the probability of the rare event $E$ conditional on the target $k$ having been successfully tracked. Subsequently, the rationale for our track loss criterion can be explained with the idea of hypothesis testing: if a rare event $E$ happens, the target is not properly tracked. Further, we conclude that the track of the target $k$ has been lost since this event $E$ is more likely to happen if the algorithm loses track (see the description in Section \ref{sec: lost track detect}).
In particular, we define $P_{thres}^{los}\in(0,1)$ as the probability of the rare event $E$ occurring, given that object $k$ has been successfully tracked. Subsequently, the rationale behind our track loss criterion can be explained with the idea of hypothesis testing. Specifically, if such a rare event $E$ (in terms of a successful tracker) occurs, it implies that the object may not be properly tracked. Furthermore, since the event $E$ is more likely to happen when the algorithm loses track of the object (as discussed in Section \ref{sec: lost track detect}), we conclude that the track of the object $k$ has been lost.

To select $\tau_k$ and $M^{los}_k$, we first specify the value of $P_{thres}^{los}$. Then for each $k=1,2,...,K$, $\tau_k$ and $M^{los}_k$ are evaluated based on the specified $P_{thres}^{los}$ according to three steps:
\begin{enumerate}[Step 1:]
\item Set $\tau_k=\ceil*{\frac{1}{\Lambda_k}\log\frac{1}{P^{thres}}}$, where $\ceil{\cdot}$ is the ceil function.
\item Obtain a continuous increasing function $\Tilde{F}_{\tau_k\Lambda_k}: [0,\infty)\rightarrow[F_{\tau_k\Lambda_k}(0),\infty)$ by interpolating (e.g. Spline interpolation) $(x,\Tilde{F}_{\tau_k\Lambda_k}(x))$ from the following series of coordinates: $(0,F_{\tau_k\Lambda_k}(0)),$
$(1,F_{\tau_k\Lambda_k}(1)),$
$(2,F_{\tau_k\Lambda_k}(2)),...$, where $F_{\tau_k\Lambda_k}$ is the exact Poisson cumulative distribution function (CDF) with rate parameter $\tau_k\Lambda_k$.
\item Find the $M^{los}_k$ such that $\Tilde{F}_{\tau_k\Lambda_k}(M^{los}_k)=P_{thres}^{los}$.
\end{enumerate}

% Since the integer variable $\sum_{n=t-\tau_k+1}^t M_{n,k}$, i.e. the exact cumulative measurement numbers, follows a Poisson CDF $F_{\tau_k\Lambda_k}$, the $\tilde{F}_{\tau_k\Lambda_k}$ obtained in Step 2 can be considered as a continuous increasing approximation of $F_{\tau_k\Lambda_k}$, and hence can better approximate the CDF for a decimal variable $\sum_{t=n-\tau_k+1}^n \hat{M}_{t,k}$. Therefore we use $\tilde{F}_{\tau_k\Lambda_k}$ to approximate the estimated cumulative measurement numbers $\sum_{t=n-\tau_k+1}^n \hat{M}_{t,k}$ in \eqref{eq:lost track event} produced from a successful tracker. Subsequently, the probability of the event $E$ in \eqref{eq:lost track event} conditional on that the target $k$ has been successfully tracked can be approximated by $\tilde{F}_{\tau_k\Lambda_k}(M_k^{los})$, i.e. $P^{thres}\approx\tilde{F}_{\tau_k\Lambda_k}(M_k^{los})$. Therefore, once $\tau_k$ and $P^{thres}$ are specified, we can find $M_k^{los}$ by the Step 3. 

We now explain the rationale behind these steps. Note that the exact CDF for the decimal variable $\sum_{t=n-\tau_k+1}^n \hat{M}_{t,k}$ (i.e. the estimated cumulative measurement numbers in \eqref{eq:lost track event}) is intractable. However, according to the additive property of Poisson distribution, the integer variable $\sum_{t=n-\tau_k+1}^n M_{t,k}$ (i.e. the exact cumulative measurement numbers) has a Poisson CDF $F_{\tau_k\Lambda_k}$. With the intuition that a successful tracker should not produce an estimate $\hat{M}_{t,k}$ that greatly deviates from the exact $M_{t,k}$, we use $\tilde{F}_{\tau_k\Lambda_k}$, i.e. a continuous increasing approximation of $F_{\tau_k\Lambda_k}$ constructed in step 2, as an approximate CDF for the estimate $\sum_{t=n-\tau_k+1}^n \hat{M}_{t,k}$ generated from a successful tracker. A noteworthy property of $\tilde{F}_{\tau_k\Lambda_k}$ is that it is continuous increasing, which makes it better to describe the decimal variable $\sum_{t=n-\tau_k+1}^n \hat{M}_{t,k}$ than the step-wise function $F_{\tau_k\Lambda_k}$.

Subsequently, the probability of the event $E$ in \eqref{eq:lost track event} conditional on the object $k$ having been successfully tracked can be approximated by $\tilde{F}_{\tau_k\Lambda_k}(M_k^{los})$, i.e., $P_{thres}^{los}\approx\tilde{F}_{\tau_k\Lambda_k}(M_k^{los})$. Therefore, once $\tau_k$ and $P_{thres}^{los}$ are specified, we can find $M_k^{los}$ by Step 3. Note that the uniqueness of $M_k^{los}$ found in Step 3 is guaranteed by the increasing property of $\tilde{F}_{\tau_k\Lambda_k}$, and the existence of $M_k^{los}$ is guaranteed by the continuous property of $\tilde{F}_{\tau_k\Lambda_k}$ and the fact that $\tilde{F}_{\tau_k\Lambda_k}(0)=F_{\tau_k\Lambda_k}(0)\leq P^{los}_{thres}$. This inequality is always satisfied owing to $\tau_k$ constructed in Step 1. Specifically, $\tau_k$ in Step 1 is essentially the minimal integer that satisfies $F_{\tau_k\Lambda_k}(0)\leq P_{thres}^{los}$. Such a small $\tau_k$ ensures our track loss criterion $E$ reflects timely tracking performance.

\vspace{-1.2em}
\subsection{Guide for the choice of $M_{k}^{reloc}$ and $M_{k}^{init}$} \label{apx:choice of Mreloc and Minit}
It is practically useful to choose $M_{k}^{reloc}$ based on a specified probability $P_{thres}^{reloc}\in(0,1)$, where $P_{thres}^{reloc}$ is the probability of the effective relocation criterion in \eqref{eq: effective relocation criterion} being satisfied, conditional on a successful relocation. On the one hand, we typically require $P_{thres}^{reloc}>0.2$ to ensure the effective relocation criterion is easy to meet, otherwise the algorithm may take too many steps to find a convincing position for missed object. On the other hand, we also need to ensure $P_{thres}^{reloc}$ is not too high, since a high $P_{thres}^{reloc}$ often leads to frequent relocations and bears a high chance of a false relocation that may deteriorate the tracking performance.

The relocation threshold $M_{k}^{reloc}$ can be selected according to $P_{thres}^{reloc}$ in a similar fashion as in Appendix \ref{apx:choice of tau and M}. Specifically, we first use a continuous increasing function $\Tilde{F}_{\Lambda_k}: [0,\infty)\rightarrow[F_{\Lambda_k}(0),\infty)$ to approximate the CDF for the variable $\sum_{j=0}^{M_n} q^w_{n}(\theta_{n,j}=k)$ in \eqref{eq: effective relocation criterion} conditional on a successful relocation with the $s$-th initialisation. This approximate CDF $\Tilde{F}_{\Lambda_k}$ can be obtained by carrying out Step 2 described in Appendix \ref{apx:choice of tau and M} with $\tau_k$ being $1$. Then find $M_{k}^{reloc}$ such that $\Tilde{F}_{\Lambda_k}(M_{k}^{reloc})=1-P_{thres}^{reloc}$. 

% We now discuss the choice of the eligible initialisation threshold $M_{h}^{init}$, which is employed in the Algorithm \ref{Algo:relocation} to skip some unnecessary initiasations so as to accelerate the relocation procedure. Since now the target will only be relocated if the effective relocation criterion is satisfied, ideally the eligible initialisation threshold $M_{h}^{init}$ should be set to skip any initialisation that does not satisfy the effective relocation criterion when CAVI is converged. However, we cannot foresee whether the
% Therefore, we should always have $M_{h}^{init}<M_{h}^{reloc}$ so that the algorithm never skip exploring any local region that encompass enough measurements to satisfy the effective relocation criterion.

Finally, we discuss the choice of the eligible initialisation threshold $M_{k}^{init}$, which is employed in Algorithm \ref{Algo:relocation} to skip unnecessary initiasations to accelerate the relocation procedure. This eligible initialisation threshold $M_{k}^{init}$ can simply be set as $M_{k}^{reloc}-1$ or $M_{k}^{reloc}-2$. Such a design of $M_{k}^{init}$ satisfies $M_{k}^{init}<M_{k}^{reloc}$. Therefore the algorithm never skip exploring any local region that encompasses enough measurements to meet the effective relocation criterion in \eqref{eq: effective relocation criterion}. Moreover, by setting a lower $M_{k}^{init}$ (e.g., $M_{k}^{init}=M_{k}^{reloc}-2$), the algorithm also explores local regions that include fewer measurements, which sometimes is necessary. For example, when the true missed object is at the edge of a local region, only a few object-oriented measurements may lie in this local region. In this case, there may be less than $M_{k}^{init}$ measurements in this local region. However, with the corresponding initialisation, it is possible for CAVI to localise the missed object and the effective relocation criterion is still met. 
\vspace{-1em}

\bibliographystyle{IEEEtran}
\bibliography{./IEEEabrv,IEEEexample}
\onecolumn
\begin{center}
{
\LARGE\bfseries Variational Tracking and Redetection for Closely-spaced Objects in Heavy Clutter: \  Supplementary Materials
}
\end{center}

\vspace{1cm}

\section{}\label{apx:association update deriv}
This appendix derives the association update in \eqref{eq:update theta}. First, using definition in \eqref{eq:single assoc prior}, we have
% First, the following rules for $\theta_{n,j}\in\{0,1,2,...,K\}$ will be used throughout:
% \begin{align}
%     \begin{aligned}
%         \exp(\sum_{k=0}^Ka_k\delta[\theta_{n,j}=k])=&\sum_{k=0}^K\exp(a_k)\delta[\theta_{n,j}=k],\\
%         \log(\sum_{k=0}^Ka_k\delta[\theta_{n,j}=k])=&\sum_{k=0}^K\log(a_k)\delta[\theta_{n,j}=k],
%     \end{aligned}
% \end{align}
% where they are easily followed from the fact that only one of $\delta[\theta_{n,j}=k]$ from $k=0,1,...,K$ equals $1$ whilst all others are zero. 
\begin{align} \notag
    \log \left(p(\theta_{n,j}|\Lambda) \ell(Y_{n,j}|X_{n,\theta_{n,j}})\right)=&\log \frac{\sum_{k=0}^K\Lambda_k\delta[\theta_{n,j}=k]}{\Lambda_{\text{sum}}}\ell(Y_{n,j}|X_{n,\theta_{n,j}})=\log \frac{\sum_{k=0}^K\Lambda_k\delta[\theta_{n,j}=k]\ell(Y_{n,j}|X_{n,k})}{\Lambda_{\text{sum}}}\\ \notag
    =& -\log \Lambda_{\text{sum}} +\log \left(\sum_{k=0}^K\Lambda_k\ell(Y_{n,j}|X_{n,k})\delta[\theta_{n,j}=k]\right)\\ \notag
    =& -\log \Lambda_{\text{sum}} +\sum_{k=0}^K \delta[\theta_{n,j}=k]\log \left(\Lambda_k\ell(Y_{n,j}|X_{n,k})\right),\\ \notag
    =&-\log \Lambda_{\text{sum}} +\left(\log \frac{\Lambda_0}{V}\right)\delta[\theta_{n,j}=0]+\sum_{k=1}^K\left(\log \Lambda_k\mathcal{N}(Y_{n,j};HX_{n,k},R_k)\right)\delta[\theta_{n,j}=k]
\end{align}
where the second last line follows from the fact that only one of $\delta[\theta_{n,j}=k]$ for $k=0,1,...,K$ equals $1$, with the rest being zero. The final line results from substituting \eqref{measurement model}. Now, recalling that $q_n(\Lambda_k)=\mathcal{G}(\Lambda_k;\eta_{n|n}^k,\rho_{n|n}^k)$, taking the expectation yields 
\begin{align} \notag
    &\E_{q_n(\Lambda)q_n(X_n)}\log \left(p(\theta_{n,j}|\Lambda) \ell(Y_{n,j}|X_{n,\theta_{n,j}})\right)\\ \notag
    =&-\log \Lambda_{\text{sum}} +\left(\E_{q_n(\Lambda_0)}\log \frac{\Lambda_0}{V}\right)\delta[\theta_{n,j}=0]+\sum_{k=1}^K\left(\E_{q_n(\Lambda_k)}\log \Lambda_k+\E_{q_n(X_{n,k})}\log\mathcal{N}(Y_{n,j};HX_{n,k},R_k)\right)\delta[\theta_{n,j}=k]\\ \label{eq: exp assoclikeli}
    =&-\log \Lambda_{\text{sum}} +\left(\psi(\eta_{n|n}^0)+\log(\rho_{n|n}^0)+\log\frac{1}{V}\right)\delta[\theta_{n,j}=0]\\ \notag
    &+\sum_{k=1}^K\left(\psi(\eta_{n|n}^k)+\log(\rho_{n|n}^k)  - \frac{1}{2}\E_{q_n(X_{n,k})}(Y_{n,j}-HX_{n,k})^\top R_k^{-1}(Y_{n,j}-HX_{n,k})+\log \frac{1}{\sqrt{(2\pi)^D{\det R_k}}}\right)\delta[\theta_{n,j}=k],
\end{align}
where $\psi(\cdot)$ is the digamma function. Note that we have the following result by using the formula for the expectation of a quadratic form and recalling that $q_n(X_{n,k})=\mathcal{N}(X_{n,k};\mu_{n|n}^k,\Sigma_{n|n}^k)$:
\begin{align} \notag
    &- \frac{1}{2}\E_{q_n(X_{n,k})}(Y_{n,j}-HX_{n,k})^\top R_k^{-1}(Y_{n,j}-HX_{n,k})+\log \frac{1}{\sqrt{(2\pi)^D{\det R_k}}}\\ \notag
    =&- \frac{1}{2}\left((Y_{n,j}-H\mu_{n|n}^k)^\top R_k^{-1}(Y_{n,j}-H\mu_{n|n}^k)+\Tr(R_k^{-1}H\Sigma_{n|n}^kH^\top) \right) +\log \frac{1}{\sqrt{(2\pi)^D{\det R_k}}}\\ \label{eq: exp log normal}
    =& \log\mathcal{N}(Y_{n,j};H\mu_{n|n}^k,R_k)-\frac{1}{2}\Tr(R_k^{-1}H\Sigma_{n|n}^kH^\top).
\end{align}
Now, substituting \eqref{eq: exp log normal} back to \eqref{eq: exp assoclikeli} yields
\begin{align} \notag
    &\E_{q_n(\Lambda)q_n(X_n)}\log \left(p(\theta_{n,j}|\Lambda) \ell(Y_{n,j}|X_{n,\theta_{n,j}})\right)\\ \notag
    =&-\log \Lambda_{\text{sum}} +\left(\psi(\eta_{n|n}^0)+\log(\rho_{n|n}^0)+\log\frac{1}{V}\right)\delta[\theta_{n,j}=0]\\ \notag
    &+\sum_{k=1}^K\left(\psi(\eta_{n|n}^k)+\log(\rho_{n|n}^k) + \log\mathcal{N}(Y_{n,j};H\mu_{n|n}^k,R_k)-\frac{1}{2}\Tr(R_k^{-1}H\Sigma_{n|n}^kH^\top)\right)\delta[\theta_{n,j}=k].
\end{align}
Finally, by reapplying the exponential and utilising the rule $\exp(\sum_{k=0}^Ka_k\delta[\theta_{n,j}=k])=\sum_{k=0}^K\exp(a_k)\delta[\theta_{n,j}=k]$, which is derived from the fact that only one $\delta[\theta_{n,j}=k]$ from $k=0,1,...,K$ equals $1$ with all others being zero, we arrive at the following result:
\begin{align} \notag
    &\exp\left(\E_{q_n(\Lambda)q_n(X_n)}\log \left(p(\theta_{n,j}|\Lambda) \ell(Y_{n,j}|X_{n,\theta_{n,j}})\right)\right)\\ \notag
    =&\frac{1}{\Lambda_{\text{sum}}}\left(\frac{\rho_{n|n}^0\exp(\psi(\eta_{n|n}^0))}{V}\delta[\theta_{n,j}=0]+\sum_{k=1}^K \rho_{n|n}^k\exp(\psi(\eta_{n|n}^k))\mathcal{N}(Y_{n,j};H\mu_{n|n}^k,R_k)\exp(-\frac{1}{2}\Tr(R_k^{-1}H\Sigma_{n|n}^kH^\top))\delta[\theta_{n,j}=k]\right)\\
    \propto & \frac{\overbar{\Lambda}_0}{V}\delta[\theta_{n,j}=0]+\sum_{k=1}^K\overbar{\Lambda}_k l_k\delta[\theta_{n,j}=k],
\end{align}
with $\overbar{\Lambda}_k$ and $l_k$ defined in \eqref{eq:update theta}. This completes the derivation of \eqref{eq:update theta}.
% \begin{align}
%     \overbar{\Lambda}_k=&\exp(\E_{q_n(\Lambda)}\log\Lambda_k)=\exp(\psi(\eta^{k}_{n|n}))\rho^{k}_{n|n}, \\  \notag
%     l_k=\mathcal{N}(&Y_{n,j};H\mu_{n|n}^k,R_k)\text{exp}(-0.5\text{Tr}(R_k^{-1}H\Sigma_{n|n}^k H^\top)).
% \end{align}
% This now give us the 
\section{}\label{apx:initialisation law}
Here we provide a detailed derivation of the initial variational distribution $\hat{p}_n(\theta_{n,j}|Y_{n,j},\Lambda=\hat{\Lambda})$ in \eqref{eq:init individul theta}. According to the definition of $\hat{p}_n$ in \eqref{eq: prob law}, we have 
\begin{align} \label{eq: complete prob law}
    \hat{p}_n(X_n,\theta_n,\Lambda,Y_n)=& p(Y_n|\theta_n,X_n)p(\theta_n|M_n,\Lambda)\hat{p}_{n|n-1}(X_n) p(M_n|\Lambda)\hat{p}_{n|n-1}(\Lambda)/c,
\end{align}
where $c$ is a normalisation constant. Then by marginalising $Y_n,X_n,\theta_n$ out of \eqref{eq: complete prob law}, we have
\begin{align} \label{eq: complete lambda}
    \hat{p}_n(\Lambda)=\sum_{\theta_n}\int \int\hat{p}_n(X_n,\theta_n,\Lambda,Y_n) dY_ndX_n = p(M_n|\Lambda)\hat{p}_{n|n-1}(\Lambda)/c,
\end{align}
where $c$ is the same constant in \eqref{eq: complete prob law}. Divide \eqref{eq: complete prob law} by \eqref{eq: complete lambda}:
\begin{align} \notag
\hat{p}_n(X_n,&\theta_n,Y_n|\Lambda)=p(Y_n|\theta_n,X_n)p(\theta_n|M_n,\Lambda)\hat{p}_{n|n-1}(X_n)\\ \label{eq: conditional law}
&=\hat{p}_{n|n-1}(X_n)\prod_{j=1}^{M_n} \ell(Y_{n,j}|X_{n,\theta_{n,j}})p(\theta_{n,j}|\Lambda),
\end{align}
where the last equality follows from \eqref{eq:obs prior} and \eqref{eq:assoc prior}. Denote $Y_{n,j-}$ as all measurements in $Y_n$ except for the $Y_{n,j}$, and $\theta_{n,j-}$ is defined similarly. For any $j=1,2,...,M_n$, by marginalising $Y_{n,j-},X_n,\theta_{n,j-}$ out of \eqref{eq: conditional law}:
\begin{align} \notag  \hat{p}&_n(\theta_{n,j},Y_{n,j}|\Lambda)=\sum_{\theta_{n,j-}}\int\int\hat{p}_n(X_n,\theta_n,Y_n|\Lambda)dX_ndY_{n,j-}\\ \notag
=& \int \hat{p}_{n|n-1}(X_n)\ell(Y_{n,j}|X_{n,\theta_{n,j}})dX_n \ p(\theta_{n,j}|\Lambda)\prod_{i=1:M_n,i\neq j} \int \ell(Y_{n,i}|X_{n,\theta_{n,i}})dY_{n,i}\sum_{\theta_{n,i}}p(\theta_{n,i}|\Lambda)\\ \notag
    % =&p(\theta_{n,j}|\Lambda)\int \hat{p}_{n|n-1}(X_n)\ell(Y_{n,j}|X_{n,\theta_{n,j}})dX_n
    =&p(\theta_{n,j}|\Lambda)\int \hat{p}_{n|n-1}(X_{n})\ell(Y_{n,j}|X_{n,\theta_{n,j}})dX_{n}\\ \notag
    =&p(\theta_{n,j}|\Lambda)\biggl(\frac{1}{V}\delta[\theta_{n,j}=0]+\sum_{k=1}^K l_k^{0}\delta[\theta_{n,j}=k]\biggr)\\ \label{eq: conidtional theta_j y_j}
    =&\frac{1}{\Lambda_{\text{sum}}}\biggl(\frac{\Lambda_0}{V}\delta[\theta_{n,j}=0]+\sum_{k=1}^K \Lambda_kl_k^{0}\delta[\theta_{n,j}=k]\biggr).
\end{align}
where $l_k^0$ is defined in \eqref{eq:init individul theta}. Note that the integral in the third last line in \eqref{eq: conidtional theta_j y_j} is either a Gaussian marginal likelihood $l_k^0$ if $\theta_{n,j}\neq0$, or the $\ell(Y_{n,j}|X_{n,0})$ in \eqref{measurement model} otherwise (recall that $X_n$ does not include $X_{n,0}$ as assumed Section \ref{sec:PoissonModel}). The last line in \eqref{eq: conidtional theta_j y_j} follows from \eqref{eq:single assoc prior} and the nature of Kronecker delta function. Finally, \eqref{eq:init individul theta} is obtained by omitting constant $\Lambda_{\text{sum}}$ in \eqref{eq: conidtional theta_j y_j} and substituting $\hat{\Lambda}_k=\E_{q_{n-1}^*(\Lambda)}\Lambda_k$ to each $\Lambda_k$.
\section{} \label{apx: elbo derivation}
% \begin{align} \notag
%     \mathcal{F}(q_n&)=\E_{q_n(X_n)q_n(\theta_n)q_n(\Lambda)}\log\frac{\hat{p}_{n|n-1}(X_n,\theta_n,\Lambda)}{q_n(X_n)q_n(\theta_n)q_n(\Lambda)}\\
%     &+\E_{q_n(X_n)q_n(\theta_n)q_n(\Lambda)}\log p(Y_n|\theta_n,X_n) p(M_n|\Lambda),
% \end{align}
This appendix derives the ELBO in \eqref{eq: efficient ELBO}. First, we introduce the following Lemma for calculating the summation of quadratic forms:
\begin{lemma} \label{lemma: quadratric sum lemma}
For symmetric and invertible matrix $C_i\in \mathbb{R}^{D\times D}$, and vectors $x,m_i\in \mathbb{R}^{D\times1}$ ($i=1,2,...,N$), we have 
\begin{align} \label{eq:quadratric sum theorem}
\begin{aligned}
    \sum_{i=1}^N-\frac{1}{2}(x-m_i)^\top& C_i^{-1}(x-m_i)=-\frac{1}{2}(x-\mu)^\top \Sigma^{-1}(x-\mu)+\frac{1}{2}\mu^\top\Sigma^{-1}\mu-\frac{1}{2}\sum_{i=1}^Nm_i^\top C_i^{-1} m_i,\\
    \Sigma&=\left(\sum_{i=1}^N C_i^{-1}\right)^{-1}, \ \ \ \ \ \ \mu=\left(\sum_{i=1}^N C_i^{-1}\right)^{-1}\sum_{i=1}^N C_i^{-1} m_i.
\end{aligned}
\end{align}
A special case often encountered
% in this thesis
is that the $N$ symmetric matrices $C_i$ differ only in scale, i.e., we have $C_i=C/\omega_i$ for $i=1,2,...,N$. In this case, it is straightforward to derive the following from \eqref{eq:quadratric sum theorem}:
\begin{align} \label{eq:quadratric sum theorem special case}
\begin{aligned}
     \sum_{i=1}^N-\frac{\omega_i}{2}(x-m_i)^\top C^{-1}(x-m_i)&=-\frac{1}{2}(x-\mu)^\top \Sigma^{-1}(x-\mu)+\frac{1}{2}\mu^\top\Sigma^{-1}\mu-\frac{1}{2}\sum_{i=1}^N\omega_im_i^\top C_i^{-1} m_i,\\
    \Sigma&=\frac{C}{\sum_{i=1}^N \omega_i}, \ \ \ \ \ \ \mu=\frac{\sum_{i=1}^N \omega_im_i}{\sum_{i=1}^N \omega_i}.
\end{aligned}
\end{align}
\end{lemma}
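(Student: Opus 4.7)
The plan is a direct \emph{completing-the-square} argument in $x$, treating the $m_i$ and $C_i$ as fixed constants. First I would expand each summand on the left-hand side as
$-\tfrac{1}{2}x^\top C_i^{-1} x + x^\top C_i^{-1} m_i - \tfrac{1}{2} m_i^\top C_i^{-1} m_i$,
and then collect terms by their dependence on $x$. The quadratic-in-$x$ piece aggregates to $-\tfrac{1}{2} x^\top \bigl(\sum_{i} C_i^{-1}\bigr) x$, the linear-in-$x$ piece aggregates to $x^\top \sum_i C_i^{-1} m_i$, and the constant piece (in $x$) is $-\tfrac{1}{2}\sum_i m_i^\top C_i^{-1} m_i$.

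Next, using the definitions $\Sigma^{-1} := \sum_i C_i^{-1}$ and $\mu := \Sigma \sum_i C_i^{-1} m_i$ (note $\Sigma$ is well-defined and symmetric since a sum of symmetric invertible matrices, when invertible, inherits symmetry, and we assume invertibility as in the statement), I would rewrite the linear term as $x^\top \Sigma^{-1} \mu$. At this point the $x$-dependent portion is $-\tfrac{1}{2}x^\top \Sigma^{-1} x + x^\top \Sigma^{-1}\mu$, which is exactly what one obtains from expanding $-\tfrac{1}{2}(x-\mu)^\top \Sigma^{-1}(x-\mu)$ minus the constant $-\tfrac{1}{2}\mu^\top \Sigma^{-1}\mu$. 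Rearranging gives the claimed identity \eqref{eq:quadratric sum theorem}.

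For the special case \eqref{eq:quadratric sum theorem special case}, I would simply substitute $C_i = C/\omega_i$, so $C_i^{-1} = \omega_i C^{-1}$; then $\sum_i C_i^{-1} = \bigl(\sum_i \omega_i\bigr) C^{-1}$, yielding $\Sigma = C/\sum_i \omega_i$ and $\mu = \bigl(\sum_i \omega_i m_i\bigr)/\bigl(\sum_i \omega_i\bigr)$. The scalar weights $\omega_i$ then factor through cleanly and the general identity collapses to the special-case form.

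There is no real obstacle here beyond careful bookkeeping; the only point worth flagging is the tacit invertibility of $\Sigma^{-1} = \sum_i C_i^{-1}$, which is needed to define $\mu$ and $\Sigma$. In the applications that follow (ELBO evaluation and Kalman-style updates), each $C_i^{-1}$ is positive semi-definite and at least one is positive definite, so $\Sigma^{-1}$ is positive definite and hence invertible; I would state this as a standing assumption so the lemma reads cleanly.
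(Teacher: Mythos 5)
Your proposal is correct and follows essentially the same completing-the-square argument as the paper's own proof: expand each quadratic, collect the quadratic, linear and constant pieces in $x$, identify $\Sigma^{-1}=\sum_i C_i^{-1}$ and $\mu=\Sigma\sum_i C_i^{-1}m_i$, and obtain the special case by substituting $C_i^{-1}=\omega_i C^{-1}$. Your explicit flagging of the tacit invertibility of $\sum_i C_i^{-1}$ is a reasonable addition that the paper leaves implicit, but it does not change the substance of the argument.
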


\begin{proof}
First note that a single quadratic form $(x-\mu)^\top \Sigma^{-1}(x-\mu)$ can always be rewritten as 
\begin{align} \label{eq: single quadratric form}
    (x-\mu)^\top \Sigma^{-1}(x-\mu)=x^\top\Sigma^{-1} x -x^\top \Sigma^{-1}\mu-(\Sigma^{-1}\mu)^\top x + \mu^\top \Sigma^{-1}\mu.
\end{align}
Then 
\begin{align} \notag
    &\sum_{i=1}^N-\frac{1}{2}(x-m_i)^\top C_i^{-1}(x-m_i)=\sum_{i=1}^N-\frac{1}{2}\left[x^\top C_i^{-1} x - x^\top C_i^{-1}m_i - \left( C_i^{-1}m_i\right)^\top x + m_i^\top C_i^{-1} m_i\right]\\ \label{eq: sum quadra first part}
    =&-\frac{1}{2}\left[x^\top \sum_{i=1}^N C_i^{-1}x  -x^\top\sum_{i=1}^N C_i^{-1} m_i  - \left(\sum_{i=1}^N  C_i^{-1}m_i \right)^\top x\right]-\frac{1}{2}\sum_{i=1}^Nm_i^\top C_i^{-1} m_i
\end{align}
Comparing the terms in the bracket with \eqref{eq: single quadratric form}, we can find that it matches the quadratic form $-\frac{1}{2}(x-\mu)^\top \Sigma^{-1}(x-\mu)$ in \eqref{eq: single quadratric form} if 
\begin{align} \label{eq:summ mean var convenient}
    \Sigma^{-1}=\sum_{i=1}^N C_i^{-1}, \ \ \ \Sigma^{-1}\mu=\sum_{i=1}^N C_i^{-1} m_i,
\end{align}
which suggests that $\Sigma,\mu$ being
\begin{align} \label{eq:summation mean var}
    \Sigma=\left(\sum_{i=1}^N C_i^{-1}\right)^{-1}, \ \ \ \ \ \mu=\left(\sum_{i=1}^N C_i^{-1}\right)^{-1}\sum_{i=1}^N C_i^{-1} m_i.
\end{align}
Now substitute \eqref{eq:summation mean var}, or equivalently \eqref{eq:summ mean var convenient}, back to \eqref{eq: sum quadra first part}, we have
\begin{align} \notag
    &\sum_{i=1}^N-\frac{1}{2}(x-m_i)^\top C_i^{-1}(x-m_i)=-\frac{1}{2}\left[x^\top \sum_{i=1}^N C_i^{-1}x  -x^\top\sum_{i=1}^N C_i^{-1} m_i  - \left(\sum_{i=1}^N  C_i^{-1}m_i \right)^\top x\right]-\frac{1}{2}\sum_{i=1}^Nm_i^\top C_i^{-1} m_i\\ \notag
    =&-\frac{1}{2}\left[x^\top \Sigma^{-1}x  -x^\top\Sigma^{-1}\mu  - \left(\Sigma^{-1}\mu\right)^\top x\right]-\frac{1}{2}\sum_{i=1}^Nm_i^\top C_i^{-1} m_i\\
    =&-\frac{1}{2}(x-\mu)^\top \Sigma^{-1}(x-\mu)+\frac{1}{2}\mu^\top\Sigma^{-1}\mu-\frac{1}{2}\sum_{i=1}^Nm_i^\top C_i^{-1} m_i,
\end{align}
\end{proof}

We now begin deriving the ELBO in \eqref{eq: efficient ELBO}. Using \eqref{eq: approx prior factorize and predictive prior}, we first rewrite the ELBO in \eqref{eq: with para ELBO} as follows 
\begin{align} \notag
    \mathcal{F}(q_n)=&\E_{q_n(\theta_n)q_n(X_n)}\log\frac{\hat{p}_{n|n-1}(X_n)p(Y_n|\theta_n,X_n)}{q_n(X_n)}\\ \notag&+\E_{q_n(\Lambda)}\E_{q_n(\theta_n)}\left[\log p(\theta_n|M_n,\Lambda)p(M_n|\Lambda)-\log q_n(\theta_n)\right]\\ \label{eq: ELBO rewrite}
    &+\E_{q_n(\Lambda)}\left[\log \hat{p}_{n|n-1}(\Lambda)-\log q_n(\Lambda)\right].
\end{align}
Recognise that the last line in \eqref{eq: ELBO rewrite} is $-\text{KL}(q_n(\Lambda)||\hat{p}_{n|n-1}(\Lambda))$ which can be computed as in \eqref{eq: kl for ELBO}. We now compute the second line in \eqref{eq: ELBO rewrite}.
% First using \eqref{eq:assoc prior},\eqref{eq:single assoc prior}, \eqref{eq:measurement number pdf}, and \eqref{eq:vari theta independent} we have
% \begin{align} \notag
%     &\log p(\theta_n|M_n,\Lambda)p(M_n|\Lambda)-\log q_n(\theta_n)\\ \notag
%     =&\sum_{j=1}^{M_n}\sum_{k=1}^{K}\delta[\theta_{n,j}=k]\log\frac{\Lambda_k}{ q_n(\theta_{n,j}=k)}-\Lambda_{\text{sum}}-\log(M_n!).
% \end{align}
First using \eqref{eq:assoc prior},\eqref{eq:single assoc prior}, \eqref{eq:measurement number pdf}, and \eqref{eq:vari theta independent}, we have
\begin{align} \notag
    &\E_{q_n(\theta_n)}\left[\log p(\theta_n|M_n,\Lambda)p(M_n|\Lambda)-\log q_n(\theta_n)\right]\\ \notag
    =& \log p(M_n|\Lambda) +\E_{q_n(\theta_n)}\sum_{j=1}^{M_n}\bigl[\log p(\theta_{n,j}|\Lambda)-\log q_n(\theta_{n,j})\bigr]\\ \notag
    =& \log p(M_n|\Lambda) +\sum_{j=1}^{M_n}\sum_{k=0}^K q_n(\theta_{n,j}=k)\log \frac{p(\theta_{n,j}=k|\Lambda)}{ q_n(\theta_{n,j}=k)}\\ \notag
    =&\sum_{j=1}^{M_n}\sum_{k=0}^K q_n(\theta_{n,j}=k)\log \frac{\Lambda_k}{ q_n(\theta_{n,j}=k)}-\Lambda_{\text{sum}}-\log(M_n!).
\end{align}
Proceeding to compute its expectation with respect to $q_n(\Lambda)$ in \eqref{eq:independent rate update}, we obtain the second line in \eqref{eq: ELBO rewrite} as follows
\begin{align} \notag
    &\E_{q_n(\Lambda)}\E_{q_n(\theta_n)}\left[\log p(\theta_n|M_n,\Lambda)p(M_n|\Lambda)-\log q_n(\theta_n)\right]\\ \label{eq:second line ELBO}
    =&\sum_{j=1}^{M_n}\sum_{k=0}^K q_n(\theta_{n,j}=k)\biggl(\log \frac{\rho_{n|n}^k}{ q_n(\theta_{n,j}=k)}+\psi(\eta_{n|n}^k) \biggr)-\sum_{k=0}^K \eta_{n|n}^k\rho_{n|n}^k-\log(M_n!).
\end{align}
Finally, we move on to the first line in \eqref{eq: ELBO rewrite}. We rewrite it as 
\begin{align} \notag &\E_{q_n(\theta_n)q_n(X_n)}\log\frac{\hat{p}_{n|n-1}(X_n)p(Y_n|\theta_n,X_n)}{q_n(X_n)}\\ \label{eq: elbo first line first part} 
=&\E_{q_n(X_n)}\bigl[-\log q_n(X_n)+\log \hat{p}_{n|n-1}(X_n)+\E_{q_n(\theta_n)}\log p(Y_n|\theta_n,X_n)\bigr].
\end{align}
Recall that when $q(X_n)$ is just updated according to \eqref{eq: X update}, we have
\begin{align} \label{eq: Xn update exact}
    q_n(X_n)=\frac{ \hat{p}_{n|n-1}(X_n)\text{exp}\left(\E_{q_n(\theta_n)}\log p(Y_n|\theta_n,X_n) \right)}{\int \hat{p}_{n|n-1}(X_n)\text{exp}\left(\E_{q_n(\theta_n)}\log p(Y_n|\theta_n,X_n)\right)dX_n },
\end{align}
where the denominator is the normalisation constant that does not depend on $X_n$. Substituting \eqref{eq: Xn update exact} into \eqref{eq: elbo first line first part} yields
\begin{align} \notag
    &\E_{q_n(\theta_n)q_n(X_n)}\log\frac{\hat{p}_{n|n-1}(X_n)p(Y_n|\theta_n,X_n)}{q_n(X_n)}\\ \notag
    =&\E_{q_n(X_n)}\bigl[ \log \int \hat{p}_{n|n-1}(X_n)\text{exp}\left(\E_{q_n(\theta_n)}\log p(Y_n|\theta_n,X_n)\right)dX_n   \bigr]\\ \label{eq: elbo first line third part}
    =& \log \int \hat{p}_{n|n-1}(X_n)\text{exp}\left(\E_{q_n(\theta_n)}\log p(Y_n|\theta_n,X_n)\right)dX_n ,
\end{align}
where the last line follows because the term in the bracket is a constant that does not depend on $X_n$.  
Now, to continue to evaluate \eqref{eq: elbo first line third part}, we first compute $\E_{q_n(\theta_n)}\log p(Y_n|\theta_n,X_n)$, which can be expressed as follows using \eqref{eq:obs prior} and \eqref{measurement model}
\begin{align} \notag
    &\E_{q_n(\theta_n)}\log p(Y_n|\theta_n,X_n)=\E_{q_n(\theta_n)} \sum_{j=1}^{M_n}\log \ell(Y_{n,j}|X_{n,\theta_{n,j}})=\sum_{j=1}^{M_n}\sum_{k=0}^{K}q_n(\theta_{n,j}=k)\log \ell(Y_{n,j}|X_{n,k})\\ \notag
    =&\sum_{j=1}^{M_n}\sum_{k=1}^{K}q_n(\theta_{n,j}=k)\biggl[  -\frac{1}{2}(Y_{n,j}-HX_{n,k})^\top R_k^{-1}(Y_{n,j}-HX_{n,k})-\frac{D}{2}\log 2\pi -\frac{1}{2}\log\det R_k \biggr]+ \sum_{j=1}^{M_n}q_n(\theta_{n,j}=0) \log \frac{1}{V}\\ \notag
    =&\sum_{k=1}^{K}\sum_{j=1}^{M_n}-\frac{1}{2}(Y_{n,j}-HX_{n,k})^\top \left(\frac{R_k}{q_n(\theta_{n,j}=k)}\right)^{-1}(Y_{n,j}-HX_{n,k})\\ \label{eq: elbo first line third part inner}
    &-\frac{1}{2}\sum_{j=1}^{M_n}\sum_{k=1}^{K}q_n(\theta_{n,j}=k)\log\det R_k + \sum_{j=1}^{M_n}q_n(\theta_{n,j}=0) \log \frac{1}{V}-\frac{D}{2}\log 2\pi\sum_{j=1}^{M_n}\sum_{k=1}^{K}q_n(\theta_{n,j}=k).
\end{align}
Note that we can rewrite the last term in \eqref{eq: elbo first line third part inner} as
\begin{align} \notag
    -\frac{D}{2}\log 2\pi&\sum_{j=1}^{M_n}\sum_{k=1}^{K}q_n(\theta_{n,j}=k)=-\frac{D}{2}\log 2\pi\sum_{j=1}^{M_n}(1-q_n(\theta_{n,j}=0))\\ \label{eq: elbo first line third part inner last term rewrite}
    &=\frac{D}{2}\log 2\pi\sum_{j=1}^{M_n}q_n(\theta_{n,j}=0)-\frac{DM_n}{2}\log2\pi.
\end{align}
Also, by using the formula for summation of quadratic terms in \eqref{eq:quadratric sum theorem special case} from Lemma \ref{lemma: quadratric sum lemma}, we have
\begin{align} \notag
    &\sum_{j=1}^{M_n}-\frac{1}{2}(Y_{n,j}-HX_{n,k})^\top \left(\frac{R_k}{q_n(\theta_{n,j}=k)}\right)^{-1}(Y_{n,j}-HX_{n,k})\\ \label{eq: sum quadratic apply}
    =&-\frac{1}{2}(\overbar{Y}_n^k-HX_{n,k})^\top{\overbar{R}_n^k}^{-1}(\overbar{Y}_n^k-HX_{n,k})   +\frac{1}{2}{\overbar{Y}_n^k}^\top{\overbar{R}_n^k}^{-1} \overbar{Y}_n^k-\frac{1}{2}\sum_{j=1}^{M_n}q_n(\theta_{n,j}=k)Y_{n,j}^\top R_k^{-1}  Y_{n,j}
\end{align}
with $\overbar{Y}_n^k,\overbar{R}_n^k$ defined in \eqref{eq:pseudomeas Y},\eqref{eq:pseudomeas R}. Now, by substituting both \eqref{eq: sum quadratic apply} and \eqref{eq: elbo first line third part inner last term rewrite} back to \eqref{eq: elbo first line third part inner}, we have
\begin{align}\notag
    &\E_{q_n(\theta_n)}\log p(Y_n|\theta_n,X_n)\\ \notag
    =&\sum_{k=1}^K\left[-\frac{1}{2}(\overbar{Y}_n^k-HX_{n,k})^\top{\overbar{R}_n^k}^{-1}(\overbar{Y}_n^k-HX_{n,k})   +\frac{1}{2}{\overbar{Y}_n^k}^\top{\overbar{R}_n^k}^{-1} \overbar{Y}_n^k-\frac{1}{2}\sum_{j=1}^{M_n}q_n(\theta_{n,j}=k)Y_{n,j}^\top R_k^{-1}  Y_{n,j}\right]\\ \notag
    &-\frac{1}{2}\sum_{j=1}^{M_n}\sum_{k=1}^{K}q_n(\theta_{n,j}=k)\log\det R_k + \sum_{j=1}^{M_n}q_n(\theta_{n,j}=0) \log \frac{1}{V}+\frac{D}{2}\log 2\pi\sum_{j=1}^{M_n}q_n(\theta_{n,j}=0)-\frac{DM_n}{2}\log2\pi\\ \notag
    =&\sum_{k=1}^K\left[-\frac{1}{2}(\overbar{Y}_n^k-HX_{n,k})^\top{\overbar{R}_n^k}^{-1}(\overbar{Y}_n^k-HX_{n,k}) -\frac{D}{2}\log2\pi-\frac{1}{2}\log\det \overbar{R}_n^k+\frac{D}{2}\log2\pi+\frac{1}{2}\log\det \overbar{R}_n^k \right]\\ \notag
    &+\frac{1}{2}\sum_{k=1}^K{\overbar{Y}_n^k}^\top{\overbar{R}_n^k}^{-1} \overbar{Y}_n^k-\frac{1}{2}\sum_{j=1}^{M_n}\sum_{k=1}^Kq_n(\theta_{n,j}=k)\left(Y_{n,j}^\top R_{k}^{-1}Y_{n,j}+\log\det R_k\right)\\ \notag
    &+\left(\frac{D}{2}\log 2\pi+\log\frac{1}{V}\right)\sum_{j=1}^{M_n}q_n(\theta_{n,j}=0)-\frac{DM_n}{2}\log 2\pi\\ \label{eq: elbo first line third part inner final} 
    =&\sum_{k=1}^K\log \mathcal{N}(\overbar{Y}_n^k;HX_{n,k},\overbar{R}_n^k) +C_x,
\end{align}
where $C_x$ is a constant that does not depend on $X_{n,k}$, which is defined as
\begin{align} \label{eq:elbo first line third part inner final constant} 
\begin{aligned}
    C_x=&\frac{DK}{2}\log2\pi+\frac{1}{2}\sum_{k=1}^K\left(\log\det \overbar{R}_n^k+ {\overbar{Y}_n^k}^\top{\overbar{R}_n^k}^{-1} \overbar{Y}_n^k\right)-\frac{1}{2}\sum_{j=1}^{M_n}\sum_{k=1}^Kq_n(\theta_{n,j}=k)\left(Y_{n,j}^\top R_{k}^{-1}Y_{n,j}+\log\det R_k\right)\\ 
    &+ \left(\frac{D}{2}\log 2\pi+\log\frac{1}{V}\right)\sum_{j=1}^{M_n}q_n(\theta_{n,j}=0)-\frac{DM_n}{2}\log 2\pi
\end{aligned}
\end{align}
Now substitute \eqref{eq: elbo first line third part inner final} to \eqref{eq: elbo first line third part}, we have
\begin{align} \notag
    &\log \int \hat{p}_{n|n-1}(X_n) \exp\E_{q_n(\theta_n)}\log p(Y_n|\theta_n,X_n)dX_n\\ \notag
    =&\log \int \hat{p}_{n|n-1}(X_n) \exp\left(\sum_{k=1}^K\log \mathcal{N}(\overbar{Y}_n^k;HX_{n,k},\overbar{R}_n^k) +C_x \right)dX_n\\ \notag
    =&\log\left(\exp C_x\int \prod_{i=1}^K \hat{p}_{n|n-1}(X_{n,i}) \prod_{k=1}^K \mathcal{N}(\overbar{Y}_n^k;HX_{n,k},\overbar{R}_n^k)dX_n\right)\\ \label{eq: elbo first line fourth part}
    =&C_x+ \sum_{k=1}^K \log \int \hat{p}_{n|n-1}(X_{n,k})\mathcal{N}(\overbar{Y}_n^k;HX_{n,k},\overbar{R}_n^k)dX_{n,k},
\end{align}
where the second equality in \eqref{eq: elbo first line fourth part} is obtained by using the fact that we always have $\hat{p}_{n|n-1}(X_n)=\prod_{k=1}^K\hat{p}_{n|n-1}(X_{n,k})$ for our assumed mean-field family. Note that with a Gaussian $\hat{p}_{n|n-1}(X_{n,k})$ in \eqref{eq:state predictive prior computation}, the marginal likelihood in \eqref{eq: elbo first line fourth part} can be easily computed in a Gaussian form, i.e.
\begin{align} \notag
    \int \hat{p}_{n|n-1}(X_{n,k})\mathcal{N}&\left(\overbar{Y}_n^k;HX_{n,k},\overbar{R}_n^k\right)dX_{n,k}=\mathcal{N}\left(\overbar{Y}_n^k;H\mu^{k*}_{n|n-1},H\Sigma^{k*}_{n|n-1}H^\top+\overbar{R}_n^k \right)\\ \label{eq: fourth part marginal likelihood} 
    &=\exp\left[ -\frac{1}{2}{T_n^k}^\top{S_n^k}^{-1}T_n^k-\frac{D}{2}\log 2\pi -\frac{1}{2}\log\det S_n^k \right],
\end{align}
where $\mu^{k*}_{n|n-1},\Sigma^{k*}_{n|n-1}$ are given in \eqref{eq:state predictive prior computation}. $T_n^k,S_n^k$ are defined in \eqref{eq:KF update} as by-products of Kalman filter updates. Now substitute \eqref{eq: fourth part marginal likelihood}, and \eqref{eq:elbo first line third part inner final constant} to the \eqref{eq: elbo first line fourth part}, we have
\begin{align} \notag
    &C_x+ \sum_{k=1}^K \log \int \hat{p}_{n|n-1}(X_{n,k})\mathcal{N}(\overbar{Y}_n^k;HX_{n,k},\overbar{R}_n^k)dX_{n,k}\\ \notag
    =&\sum_{k=1}^K \left[ -\frac{1}{2}{T_n^k}^\top{S_n^k}^{-1}T_n^k-\frac{D}{2}\log 2\pi -\frac{1}{2}\log\det S_n^k \right]+\frac{DK}{2}\log2\pi+\frac{1}{2}\sum_{k=1}^K\left(\log\det \overbar{R}_n^k+ {\overbar{Y}_n^k}^\top{\overbar{R}_n^k}^{-1} \overbar{Y}_n^k\right)\\ \notag
    &-\frac{1}{2}\sum_{j=1}^{M_n}\sum_{k=1}^Kq_n(\theta_{n,j}=k)\left(Y_{n,j}^\top R_{k}^{-1}Y_{n,j}+\log\det R_k\right)+ \left(\frac{D}{2}\log 2\pi+\log\frac{1}{V}\right)\sum_{j=1}^{M_n}q_n(\theta_{n,j}=0)-\frac{DM_n}{2}\log 2\pi\\ \notag
    =&\frac{1}{2}\sum_{k=1}^K\left({\overbar{Y}_n^k}^\top{\overbar{R}_n^k}^{-1}\overbar{Y}_n^k-{T_n^k}^\top{S_n^k}^{-1}T_n^k+\log \frac{\det \overbar{R}_n^k}{\det S_n^k} \right)-\frac{1}{2}\sum_{j=1}^{M_n}\sum_{k=1}^Kq_n(\theta_{n,j}=k)\left(Y_{n,j}^\top R_{k}^{-1}Y_{n,j}+\log\det R_k\right)\\ \label{eq:elbo first line result}
    &+ \left(\frac{D}{2}\log 2\pi+\log\frac{1}{V}\right)\sum_{j=1}^{M_n}q_n(\theta_{n,j}=0)-\frac{DM_n}{2}\log 2\pi
\end{align}
where \eqref{eq:elbo first line result} gives the final result of the first line of the ELBO in \eqref{eq: ELBO rewrite}, i.e. equation \eqref{eq: elbo first line third part} that computes $\E_{q_n(\theta_n)q_n(X_n)}\log\frac{\hat{p}_{n|n-1}(X_n)p(Y_n|\theta_n,X_n)}{q_n(X_n)}$, in which $q(X_n)$ is the latest updated variational distribution.  

In this way, the complete ELBO in \eqref{eq: ELBO rewrite} is derived with final result given in \eqref{eq: efficient ELBO}, where the first line of equation \eqref{eq: ELBO rewrite} is given in \eqref{eq:elbo first line result}, the second line of equation \eqref{eq: ELBO rewrite} corresponds to the solution in \eqref{eq:second line ELBO}, and the last line of equation \eqref{eq: ELBO rewrite}  is calculated in \eqref{eq: kl for ELBO}.
\section{} \label{apx: reloc elbo derivation}
In this appendix, we derive the explicit form of the ELBO in \eqref{eq: reloc ELBO} up to an additive constant for implementing the relocation strategy. In particular, we compute $\mathcal{F}_{n,h}(q_n(\theta_n),q_n(X_{n,h}))-c$, where $c$ is the constant from \eqref{eq: reloc ELBO}. This can be expressed as follows, based on \eqref{eq: reloc ELBO}:
\begin{align}  \notag
    \mathcal{F}_{n,h}(q_n(\theta_n),q_n(X_{n,h}))-c=& -\text{KL}(q_n(\theta_n)||p(\theta_n|M_n,\Lambda))-\text{KL}(q_n(X_{n,h})||\tilde{p}_n(X_{n,h}))\\ \label{eq: reloc elbo minus const}
    &+\E_{q_n(\theta_n)q_n(X_{n,h})q^*_n(X_{n,h-})}\log p(Y_n|\theta_n,X_n).
\end{align}
The first negative KL divergence is 
\begin{align} \notag
    -\text{KL}(q_n(\theta_n)||p(\theta_n|M_n,\Lambda))=&\sum_{j=1}^{M_n}\sum_{k=0}^K q_n(\theta_{n,j}=k)\log\frac{p(\theta_{n,j}=k|M_n,\Lambda)}{q_n(\theta_{n,j}=k)} \\ \label{eq:elbo reloc theta kl}
    =&\sum_{j=1}^{M_n}\sum_{k=0}^K q_n(\theta_{n,j}=k)\log\frac{\Lambda_k}{\Lambda_{\text{sum}}q_n(\theta_{n,j}=k)}.
\end{align}
Denote the mean and covariance of $\tilde{p}_n(X_{n,h})$ as $\tilde{\mu}_{n,h}$ and $\tilde\Sigma_{n,h}$; and recall that $q_n(X_{n,h})=\mathcal{N}(X_{n,h};\mu_{n|n}^h,\Sigma_{n|n}^h)$.Then the second negative KL divergence between two multivariate normal in \eqref{eq: reloc elbo minus const} is
\begin{align} \label{eq:elbo reloc gaussian kl}
    -\text{KL}(q_n(X_{n,h})||\tilde{p}_n(X_{n,h}))=-\frac{1}{2}\left[\Tr({\tilde{\Sigma}_{n,h}}^{-1}\Sigma_{n|n}^{h})+(\tilde{\mu}_{n,h}-\mu_{n|n}^h)^\top 
{\tilde{\Sigma}_{n,h}}^{-1} (\tilde{\mu}_{n,h}-\mu_{n|n}^h)-D_x+\log\frac{\det\tilde{\Sigma}_{n,h} }{\det \Sigma_{n|n}^{h}} \right],
\end{align}
where $D_x$ is the dimension of $X_{n,h}$. Note evaluating this divergence requires inverting $\tilde{\Sigma}_{n,h}$. Since $\tilde{\Sigma}_{n,h}$ is an uninformative prior assigned for the missed object, we can always let it be a diagonal matrix, which leads to a simple and efficient inversion.

By using the formula for the expectation of quadratic forms, the last term in \eqref{eq: reloc elbo minus const} is
\begin{align} \notag
    &\E_{q_n(\theta_n)q_n(X_{n,h})q^*_n(X_{n,h-})}\log p(Y_n|\theta_n,X_n)=\sum_{j=1}^{M_n}q_n(\theta_{n,j}=0)\log\frac{1}{V}\\ \notag
    &\ \ \ \ \ +\sum_{j=1}^{M_n}\sum_{k=0}^K q_n(\theta_{n,j}=k)\left[-\frac{1}{2}\E_{q_n(X_{n,h})q^*_n(X_{n,h-})}(Y_{n,j}-HX_{n,k})^\top R_k^{-1} (Y_{n,j}-HX_{n,k})-\frac{D}{2}\log 2\pi -\frac{1}{2}\log\det R_k\right]\\ \notag
    =&-\frac{1}{2}\sum_{j=1}^{M_n}\sum_{k=0,k\neq h}^K q_n(\theta_{n,j}=k)U_n^{k*}\\ \notag
    &+\sum_{j=1}^{M_n} q_n(\theta_{n,j}=h)\left[-\frac{1}{2}(Y_{n,j}-H\mu_{n|n}^{h})^\top R_k^{-1} (Y_{n,j}-H\mu_{n|n}^{h})-\frac{1}{2}\Tr(R_k^{-1}H\Sigma_{n|n}^{h}H^\top)-\frac{1}{2}\log\det R_k\right]\\ \label{eq:elbo reloc last term}
    &+\left(\frac{D}{2}\log2\pi+\log\frac{1}{V}\right)\sum_{j=1}^{M_n}q_n(\theta_{n,j}=0)-\frac{DM_n}{2}\log2\pi,
\end{align}
where we use \eqref{eq: elbo first line third part inner last term rewrite} to obtain the last line, and $U_n^{k*}$ is defined as
\begin{align} \label{eq: Un define}
    U_n^{k*}=(Y_{n,j}-H\mu_{n|n}^{k*})^\top R_k^{-1} (Y_{n,j}-H\mu_{n|n}^{k*})+\Tr(R_k^{-1}H\Sigma_{n|n}^{k*}H^\top)+\log\det R_k.
\end{align}
Finally the ELBO (with an additive constant) in \eqref{eq: reloc elbo minus const} is the summation of \eqref{eq:elbo reloc theta kl}, \eqref{eq:elbo reloc gaussian kl}, and \eqref{eq:elbo reloc last term}, i.e.
\begin{align}\notag
    \mathcal{F}_{n,h}(q_n(\theta_n),q_n(X_{n,h}&))-c= \sum_{j=1}^{M_n}\sum_{k=0}^K q_n(\theta_{n,j}=k)\log\frac{\Lambda_k}{\Lambda_{\text{sum}}q_n(\theta_{n,j}=k)}-\frac{1}{2}\sum_{j=1}^{M_n}\sum_{k=0,k\neq h}^K q_n(\theta_{n,j}=k)U_n^{k*}\\ \notag
    &-\frac{1}{2}\sum_{j=1}^{M_n} q_n(\theta_{n,j}=h)\left[(Y_{n,j}-H\mu_{n|n}^{h})^\top R_k^{-1} (Y_{n,j}-H\mu_{n|n}^{h})+\Tr(R_k^{-1}H\Sigma_{n|n}^{h}H^\top)+\log\det R_k\right]\\ \notag
    &-\frac{1}{2}\left[\Tr({\tilde{\Sigma}_{n,h}}^{-1}\Sigma_{n|n}^{h})+(\tilde{\mu}_{n,h}-\mu_{n|n}^h)^\top 
    {\tilde{\Sigma}_{n,h}}^{-1} (\tilde{\mu}_{n,h}-\mu_{n|n}^h)+\log\frac{\det\tilde{\Sigma}_{n,h} }{\det \Sigma_{n|n}^{h}} \right]\\
    &+\left(\frac{D}{2}\log2\pi+\log\frac{1}{V}\right)\sum_{j=1}^{M_n}q_n(\theta_{n,j}=0)+\frac{D_x}{2}-\frac{DM_n}{2}\log2\pi,
\end{align}
where $U_n^{k*}$ is given in \eqref{eq: Un define}, and the last two terms in the final line are constants that can be omitted when computing the ELBO. Additionally, $U_n^{k*}(k=1,2,...,K)$ is also a constant when implementing the relocation strategy, and it can be precomputed using the output from the standard tracker.
% To see this, note that this term can be rewritten as
% \begin{align}\notag
%     &\E_{q_n(\theta_n)}\log \hat{p}_{n|n-1}(X_n)p(Y_n|\theta_n,X_n)\\ \notag
%     =&\log \left(\hat{p}_{n|n-1}(X_n)\exp \E_{q_n(\theta_n)}\log p(Y_n|\theta_n,X_n)\right)\\ \notag
%     =&\log \frac{\hat{p}_{n|n-1}(X_n)\exp \E_{q_n(\theta_n)}\log p(Y_n|\theta_n,X_n)}{\int\hat{p}_{n|n-1}(X_n)\exp \E_{q_n(\theta_n)}\log p(Y_n|\theta_n,X_n) dX_n }\\
%     &+\log \int\hat{p}_{n|n-1}(X_n)\exp \E_{q_n(\theta_n)}\log p(Y_n|\theta_n,X_n) dX_n
% \end{align}
\end{document}